\title{Multi-Time Schr\"odinger Equations Cannot Contain Interaction Potentials}
\author{
S\"oren Petrat\footnote{Mathematisches Institut,
	Ludwig-Maximilians-Universit\"at, Theresienstr. 39, 80333 M\"unchen, Germany.
	E-mail: petrat@math.lmu.de}\ \ and
Roderich Tumulka\footnote{Department of Mathematics,
     Rutgers University,
     110 Frelinghuysen Road, Piscataway, NJ 08854-8019, USA.
     E-mail: tumulka@math.rutgers.edu}
}
\date{January 24, 2014}
\theoremstyle{plain}\newtheorem{thm}{Theorem}
\theoremstyle{plain}\newtheorem{lem}[thm]{Lemma}
\newcommand{\be}{\begin{equation}}
\newcommand{\ee}{\end{equation}}
\newcommand{\Hilbert}{\mathscr{H}}
\newcommand{\RRR}{\mathbb{R}}
\newcommand{\CCC}{\mathbb{C}}
\newcommand{\scp}[2]{\langle #1|#2 \rangle}
\newcommand{\Laplace}{\Delta}
\newcommand{\art}{\vec{t}}
\newcommand{\vect}[1]{\boldsymbol{#1}}  
\newcommand{\va}{\boldsymbol{a}}
\newcommand{\vb}{\boldsymbol{b}}
\newcommand{\vy}{\boldsymbol{y}}
\newcommand{\valpha}{\boldsymbol{\alpha}}
\newcommand{\vx}{\boldsymbol{x}}
\newcommand{\sS}{\mathscr{S}}
\newcommand{\free}{\mathrm{free}}
\DeclareMathOperator{\divergence}{div}
\begin{document}
\maketitle

\begin{abstract}
Multi-time wave functions are wave functions that have a time variable for every particle, such as $\phi(t_1,\vx_1,\ldots,t_N,\vx_N)$. They arise as a relativistic analog of the wave functions of quantum mechanics but can be applied also in quantum field theory. The evolution of a wave function with $N$ time variables is governed by $N$ Schr\"o\-din\-ger equations, one for each time variable. These Schr\"o\-din\-ger equations can be inconsistent with each other, i.e., they can fail to possess a joint solution for every initial condition; in fact, the $N$ Hamiltonians need to satisfy a certain commutator condition in order to be consistent. While this condition is automatically satisfied for non-interacting particles, it is a challenge to set up consistent multi-time equations with interaction. We prove for a wide class of multi-time Schr\"o\-din\-ger equations that the presence of interaction potentials (given by multiplication operators) leads to inconsistency. We conclude that interaction has to be implemented instead by creation and annihilation of particles, which, in fact, can be done consistently, as we show elsewhere \cite{pt:2013c}. We also prove the following result: When a cut-off length $\delta>0$ is introduced (in the sense that the multi-time wave function is defined only on a certain set of spacelike configurations, thereby breaking Lorentz invariance), then the multi-time Schr\"o\-din\-ger equations with interaction potentials of range $\delta$ are consistent; however, in the desired limit $\delta\to 0$ of removing the cut-off, the resulting multi-time equations are interaction-free, which supports the conclusion expressed in the title.

\medskip

Key words: multi-time wave function; many-time formalism; commutator condition for consistency of multi-time equations; short-range potentials in quantum mechanics; Dirac equation; covariant formulation of Schr\"o\-din\-ger equation for many particles.
\end{abstract}

\newpage\tableofcontents

\section{Introduction}\label{sec:intro}

This paper belongs to a series of papers \cite{pt:2013c,pt:2013e,pt:2014a,pt:2014b} exploring multi-time wave functions, i.e., wave functions of the form
\be
\phi(t_1,\vx_1,\ldots,t_N,\vx_N)
\ee
that have a separate time variable for each particle and thus $4N$ variables in total ($\vx_j\in\RRR^3$). In other words, these are wave functions on configurations of $N$ points in space-time. Such wave functions arise naturally as a relativistic generalization of the usual (single-time) wave function of non-relativistic quantum mechanics,
\be
\psi(t,\vx_1,\ldots,\vx_N)
\ee
with $3N+1$ variables. Multi-time wave functions can also be defined in the context of quantum field theories (QFTs) that permit a particle--position representation of a quantum state (in Fock space), when we regard also $N$ as variable \cite{pt:2013c,pt:2014a,pt:2014b}.

This paper is mainly about the \emph{consistency condition} for multi-time Schr\"odinger equations, formulated as \eqref{consistency} in Section~\ref{sec:consistency1} below. Apart from a careful derivation of this condition, our main result is that it excludes interaction by means of a potential (i.e., multiplication operator); we conclude that interaction must be formulated, in the multi-time approach, by means of particle creation and annihilation, as we do in \cite{pt:2013c,pt:2014a}. We also describe here a result about the consistency of potentials of limited range $\delta>0$ on multi-time wave functions with a certain kind of length cut-off at the length $\delta$.

Multi-time wave functions deserve study for several reasons. First, while it was unclear until recently whether any consistent set of multi-time Schr\"odinger equations could involve interaction (and thus, whether multi-time wave functions could be of any physical relevance), we now know \cite{pt:2013c,pt:2014a} that relevant interacting QFTs can be consistently reformulated in terms of multi-time wave functions. Thus, multi-time wave functions are not a mere mathematical speculation but rather a new representation of familiar QFTs. This representation, actually a very natural one, can be regarded as the covariant Schr\"odinger-picture particle-position representation; it is related to the Tomonaga--Schwinger representation but conceptually simpler because the latter is defined on the \emph{infinite}-dimensional space of all spacelike hypersurfaces, whereas the domain of the relevant multi-time wave functions has locally \emph{finite} dimension. The multi-time approach emphasizes the similarity of QFT with quantum mechanics, particularly so by expressing the quantum state in terms of a wave function. Another reason why multi-time wave functions deserve study is that they are such a natural concept, the immediate analog of the wave function of quantum mechanics in a relativistic setting, and manifestly covariant objects. Since the multi-time approach is unfamiliar, we take the time here to develop the theory of the consistency condition in some detail. It may seem that the consistency condition is a novel obstacle, even a drawback, that the multi-time approach brings with it. However, we think of the consistency condition as providing us guidance about how the equations of a relativistic QFTs should be set up. For example, the main result of the present paper, that interaction potentials conflict with the consistency condition, tells us that interaction should be incorporated by means of particle creation and annihilation. Likewise, the consistency condition for the multi-time equations considered in \cite{pt:2014a} tells us that a fermion cannot decay into two fermions.

Multi-time wave functions were considered early on in the history of quantum theory (particularly by Dirac \cite{dirac:1932}, Dirac, Fock, and Podolsky \cite{dfp:1932}, and Bloch \cite{bloch:1934}), but have, as far as we know, never been studied comprehensively. We discuss their application to QFT in \cite{pt:2013c,pt:2014a}. Connections with QFT were made also by G\"unther \cite{Gue52} and Schweber \cite[p.~171]{schweber:1961}. Horwitz and Rohrlich \cite{horwitz:1981} suggested considering a wave function of $5N$ variables, which does not seem to yield a viable reformulation of quantum physics. While we consider one Schr\"o\-din\-ger equation for each time variable, Salpeter and Bethe \cite{SB51} (as well as Marx \cite{Marx:1974}) considered a single higher-order equation for a two-time wave function. Several authors \cite{FL70,DV71,DV81,VAC97,Hah09} have proposed Lorentz-invariant sets of equations for two-time wave functions containing interaction terms that neither are potentials (i.e., multiplication operators) nor involve particle creation, but instead are nonlocal in time (see also Section~\ref{sec:perspective}). We compare the status and significance of multi-time formulations in classical and quantum physics in \cite{pt:2013e}.

\bigskip

The remainder of this paper is organized as follows. In the remainder of Section~\ref{sec:intro}, we introduce multi-time Schr\"odinger equations for the time evolution of multi-time wave functions, formulate the consistency condition \eqref{consistency}, and outline our results. In Section~\ref{sec:consistency} we provide precise formulations of the statement that condition \eqref{consistency} is necessary and sufficient for the consistency of the multi-time equations. In Section~\ref{sec:potentials} we provide precise formulations of our results about the inconsistency of interaction potentials. In Section~\ref{sec:deltarange}, we provide a precise formulation of our result about the consistency of potentials with range $\delta$. In Sections~\ref{sec:proofs_consistency}, \ref{sec:proofs_inconsistentV}, and \ref{sec:proof_delta}, we provide the proofs of the statements made in Sections~\ref{sec:consistency}, \ref{sec:potentials}, and \ref{sec:deltarange}, respectively.

\subsection{Multi-Time Evolution}
For the multi-time wave function $\phi$ to be determined by initial data
\be
\phi(0,\vx_1,\ldots,0,\vx_N)
\ee
we need $N$ Schr\"o\-din\-ger equations, one for each time variable (we set $\hbar=1$):
\be\label{phiHj}
i\frac{\partial \phi}{\partial t_j} = H_j \phi
\ee
for $j=1,\ldots,N$.\footnote{Since we have written the equations \eqref{phiHj} in a Hamiltonian form, we have used a particular Lorentz frame in order to refer to a time variable $t_j$. This is convenient for us at this point but not necessary; $H_j$ will contain derivatives with respect to the spacelike components $\vx_j$, and those could be moved to the left-hand side to write \eqref{phiHj} in a manifestly covariant form, just like the one-particle Dirac equation can be written either in the Hamiltonian form $i\partial \psi/\partial t = H\psi$ with (setting $c=1$) $H=-i\valpha\cdot\nabla+\beta m$ or in the manifestly covariant form $i\gamma^\mu\partial_\mu \psi = m\psi$.} We call equations of the form \eqref{phiHj} \emph{multi-time Schr\"o\-din\-ger equations}\footnote{The expression \emph{Schr\"o\-din\-ger equation} is not meant to imply that $H_j$ involves the Laplace operator, but is understood as including, e.g., the Dirac equation.} or simply \emph{multi-time equations}, and the operators $H_j$ \emph{partial Hamiltonians}.
The connection between the multi-time wave function $\phi$ and the single-time wave function $\psi$ is that on configurations of $N$ space-time points that are simultaneous with respect to the Lorentz frame $L$ to which $\psi$ refers, $\phi$ coincides with $\psi$; i.e.,
\be\label{phipsi}
\phi(t,\vx_1,\ldots,t,\vx_N)=\psi(t,\vx_1,\ldots,\vx_N)\,.
\ee
It follows from \eqref{phipsi} that, at every configuration that is simultaneous with respect to the Lorentz frame $L$,
\be\label{HjH}
\sum_{j=1}^N H_j = H\,,
\ee
where $H$ is the Hamiltonian governing $\psi$,
\be\label{psiH}
i\frac{\partial \psi}{\partial t} = H\psi\,.
\ee

\subsection{Consistency Condition}\label{sec:consistency1}

A novel feature of multi-time equations such as \eqref{phiHj}, absent from the single-time Schr\"o\-din\-ger equation \eqref{psiH}, is that the multi-time equations can be \emph{inconsistent}; being inconsistent means that they possess no non-zero joint solutions $\phi$, or possess non-zero joint solutions only for special initial conditions. The condition for consistency (or integrability), which Bloch \cite{bloch:1934} was already aware of, reads
\be\label{consistency}
\biggl[i\frac{\partial}{\partial t_j} - H_j, i\frac{\partial}{\partial t_k} -H_k\biggr] = 0
\quad \forall j\neq k\,.
\ee

To begin to understand where \eqref{consistency} comes from, consider the multi-time equations \eqref{phiHj} for $N=2$ particles and the simple case where $H_1$ and $H_2$ are time-independent and bounded operators on a Hilbert space $\Hilbert$, e.g., $\Hilbert=L^2(\RRR^3\times\RRR^3)$. We can regard $\phi$ as an $\Hilbert$-valued function of $t_1$ and $t_2$. Then, for arbitrary initial conditions $\phi(0,0)$ we can obtain a solution of the multi-time equations \eqref{phiHj} in two different ways,
\be
\phi(t_1,t_2) = e^{-iH_2t_2}\phi(t_1,0) = e^{-iH_2t_2} e^{-iH_1t_1} \phi(0,0)
\ee
and
\be
\phi(t_1,t_2) = e^{-iH_1t_1}\phi(0,t_2) = e^{-iH_1t_1} e^{-iH_2t_2} \phi(0,0).
\ee
Both expressions agree (and thus yield a joint solution $\phi$) for all initial $\phi(0,0)\in\Hilbert$ if and only if 
\be\label{H1H2=0}
[H_1,H_2]=0\,.
\ee
Since the Hamiltonians do not depend on $t_1$ or $t_2$, they commute with $\partial/\partial t_j$, and the consistency condition \eqref{consistency} amounts to \eqref{H1H2=0}. A more general and detailed derivation and discussion of condition \eqref{consistency} is given in Section~\ref{sec:consistency}.

\subsection{Perspective}
\label{sec:perspective}

The time evolution of non-interacting particles trivially satisfies \eqref{consistency}. The central claim of this paper is that every interaction potential violates the consistency condition \eqref{consistency}.

Before explaining the details of the claim, let us put it into perspective. The claim might be surprising from the point of view of non-relativistic quantum mechanics because there, potentials are the only method of implementing interaction between the particles. A different method, however, is available in quantum field theory, where the particle number is not fixed: there, particles can interact by emitting and absorbing other particles. And indeed, as we show in \cite{pt:2013c,pt:2014a,pt:2014b}, this kind of interaction can be implemented with consistent multi-time equations; the analysis of the consistency of these equations is more delicate because, as the number of particles is not fixed, also the number of time variables is not fixed; yet, the analysis can be done and confirms the consistency of natural choices of multi-time equations with particle creation and annihilation. 

From a different perspective, our claim, that interaction potentials make multi-time equations inconsistent, may be \emph{un}surprising: Multi-time wave functions were introduced for the purpose of a \emph{covariant} description of the quantum state, and interaction potentials may seem incompatible with relativity. After all, interaction potentials involve a function $V(x,y)$ with $x,y$ different space-time points, and this suggests that interaction potentials represent a direct (and possibly faster-than-light) action-at-a-distance, in conflict with principles of relativity and presumably implying, in particular, the possibility of superluminal signaling.

On the other hand, it is not uncommon to use, in (relativistic) quantum electrodynamics, the Coulomb gauge of the quantized fields, which leads to an explicit Coulomb potential term in the Hamiltonian besides the quantized degrees of freedom of the electromagnetic field; see, e.g., \cite{CDG:1989}. So it is perhaps not so clear that interaction potentials have no place in relativistic theories.

Moreover, while the use of multi-time wave functions is motivated by relativity, they can be considered also independently of relativity. In particular, one can consider multi-time equations of the form \eqref{phiHj} that are not covariant but make use of a special Lorentz frame and will have a different $H_j$ after transformation to a different frame. So it may again be surprising that even if we do not require Lorentz invariance but only consistency of the multi-time equations, interaction potentials are excluded. As a consequence, in view of our results in this paper and in \cite{pt:2013c,pt:2014a,pt:2014b}, the mere introduction of multi-time wave functions, with space-time configurations as arguments, naturally leads us to considering particle creation and annihilation.

On the other hand, further possibilities are known to exist, at least mathematically, and at least if we are willing to consider $H_j$ that are nonlocal in time: Several authors \cite{FL70,DV71,DV81,VAC97,Hah09} have given examples of Lorentz-invariant equations of the form \eqref{phiHj} for $N=2$ particles that are, at least in some sense, consistent. In these examples, the interaction terms neither are multiplication operators (like potentials) nor involve particle creation; instead, $H_j\phi$ involves integrating $\phi$ over some time interval (partly in the future), so that these equations may not determine $\phi$ from initial data as in a Cauchy problem.

As a last remark, our result may seem unsurprising in view of the result of Currie, Jordan, and Sudarshan \cite{CJS63}, who showed for $N=2$ particles that classical mechanics, in a particular Hamiltonian formulation, cannot be made relativistic except in the absence of interaction. Then again, it is not clear whether and why that particular Hamiltonian formulation should be regarded as the appropriate classical analog of the framework of multi-time Schr\"odinger equations; see \cite{pt:2013e} for further discussion.

\subsection{Example of Inconsistent Multi-Time Equations}

Let us now look at an explicit example of multi-time equations with potentials. Consider the two-particle Hilbert space $\Hilbert=L^2(\RRR^3\times \RRR^3,\CCC^d)$ and the Hamiltonian
\be
H = H^\free + V
= H_1^\free + H_2^\free + \frac{1}{\|\vx_1-\vx_2\|}
\ee
in the usual single-time picture, where the $H_j^\free$ are the free Schr\"o\-din\-ger ($d=1$) or free Dirac ($d=16$) Hamiltonians. To begin with, in view of \eqref{HjH}, how should the potential be distributed on the two partial Hamiltonians? If $V(\vx_1,\vx_2)$ were of the form $V_1(\vx_1)+V_2(\vx_2)$ then it would be natural to define $H_j=H_j^\free+V_j$, but such a potential $V$ would represent an external field and not interaction between the two particles. One obvious (though perhaps unnatural) possibility for the Coulomb potential is to attribute half of the potential to each of the partial Hamiltonians, i.e., to set
\be
H_j = H_j^\free + \frac{1}{2\|\vx_1-\vx_2\|}.
\ee
But then the consistency condition \eqref{consistency} is violated. Indeed, if the $H_j^\free$ are the free Schr\"o\-din\-ger Hamiltonians,
\be
H_j^\free=-\frac{1}{2m}\Laplace_j\,,
\ee
then
\be\label{H1H2LaplaceCoulomb}
[H_1,H_2] = \frac{\vx_1-\vx_2}{2m\|\vx_1-\vx_2\|^3} \cdot (\nabla_1+\nabla_2) \neq 0,
\ee
and if the $H_j^\free$ are the free Dirac Hamiltonians (we set $c=1$),
\be
H_j^\free=-i\vect{\alpha}_j\cdot\nabla_j + \beta_j m\,,
\ee
then
\be\label{H1H2DiracCoulomb}
[H_1,H_2] = \frac{i(\vx_1-\vx_2)}{2\|\vx_1-\vx_2\|^3} \cdot (\vect{\alpha}_1+\vect{\alpha}_2) \neq 0.
\ee
(Here, $\valpha_j$ means the 3-vector consisting of the three Dirac alpha matrices, acting on the spin index of particle $j$.) As above, for time-independent Hamiltonians the consistency condition \eqref{consistency} amounts to $[H_1,H_2]=0$. Thus, the most obvious choice of multi-time equations for two particles with a Coulomb potential is inconsistent.

(Furthermore, the only joint solution $\phi:\RRR^2\to\Hilbert$ of \eqref{phiHj} is zero. Indeed, an initial wave function which is mapped to zero by the operator in \eqref{H1H2LaplaceCoulomb} must be constant along the line $\{(\va+(1+s)\vb,\va+s\vb):s\in\RRR\}$ in $\RRR^3\times\RRR^3$ for any $\va,\vb\in\RRR^3$, and thus cannot be square-integrable unless it vanishes almost everywhere. The kernel of the operator in \eqref{H1H2DiracCoulomb} consists of those wave functions $\psi:\RRR^6\to\CCC^{16}$ such that, at almost every $(\vx_1,\vx_2)$, $\psi(\vx_1,\vx_2)$ is an eigenvector of $(\vx_1-\vx_2)\cdot \valpha_1$ with eigenvalue $\pm \|\vx_1-\vx_2\|$ and simultaneously an eigenvector of $(\vx_1-\vx_2)\cdot \valpha_2$ with eigenvalue $\mp \|\vx_1-\vx_2\|$; non-zero elements of the kernel will not remain in the kernel under $\exp(-iH_1t)$ or $\exp(-iH_2t)$.)

Several previous authors, starting from Bloch \cite{bloch:1934}, were aware of this kind of difficulties with potentials; some \cite{FL70,Marx:1974} mentioned it explicitly.

\subsection{Brief Overview of Results}

\subsubsection{Results on Consistency Condition}

The first type of results that we present (Theorems~\ref{thm:SchrN} and \ref{thm:SchrN_t}) are precise formulations and proofs of the statement that the condition \eqref{consistency} is necessary and sufficient for the consistency of the system of multi-time equations \eqref{phiHj}. In this context we also elucidate (Section~\ref{sec:pathindependence}) a perspective from which \eqref{consistency} expresses the vanishing of the curvature of a gauge connection on a vector bundle with fibers $\Hilbert$ over the space $\RRR^N$ spanned by the time axes, such that time evolution corresponds to parallel transport in this bundle.

\subsubsection{Inconsistency Results}

The main goal of this paper is to prove in great generality that interaction potentials in multi-time equations lead to violations of the consistency condition \eqref{consistency}. Our results (Theorems~\ref{thm:inconsistentV1}--\ref{thm:inconsistent2ndorder}) cover arbitrary smooth potential functions, also time-dependent ones, such as $V_j(t_1,\vx_1,\ldots,t_N,\vx_N)$; they cover the free Dirac Hamiltonian and the Laplace operator, and in fact all self-adjoint differential operators up to second order, as free Hamiltonians. 

Furthermore, writing $x=(x^0,x^1,x^2,x^3)=(x^0,\vx)=(t,\vx)$ for a space-time point, it is reasonable to demand of $\phi$ only that it be defined on the set of \emph{spacelike configurations} of $N$ particles,
\be\label{sSdef}
\sS=\Bigl\{(x_1,\ldots,x_N)\in(\RRR^{4})^N: \:\:\forall j\neq k:x_j \sim x_k \text{ or } x_j = x_k\Bigr\}
\ee
(where $x\sim y$ means that $x$ is spacelike to $y$),\footnote{We say that $x$ is spacelike to $y$ if and only if $(x^0-y^0)^2-\|\vx-\vy\|^2<0$. For a \emph{spacelike configuration}, we allow repeated entries, $x_j=x_k$.}
and not on all of $(\RRR^4)^N$.\footnote{Bloch \cite{bloch:1934} has argued first that multi-time wave functions $\phi$ should be defined only on $\sS$, a view that we share. As we show in \cite{pt:2013c}, it is actually the case that interaction implemented by the creation and annihilation of particles is consistent on spacelike configurations, but not on all space-time configurations (i.e., consistent on the analog of $\sS$ for a variable number of particles, but not on that of $(\RRR^4)^N$).} In that case, also the partial Hamiltonians need to be defined only on $\sS$ (e.g., as differential operators), and also the consistency condition can only be expected to hold on $\sS$. Among our inconsistency results, we also prove that even if the consistency condition holds only on $\sS$, all consistent potentials are interaction-free.

\subsubsection{Consistency with Cut-Off Length $\delta$}

Another result we prove (Theorem~\ref{thm:delta_model}) concerns a scenario with a cut-off length $\delta>0$. Although a cut-off length is usually not considered for potentials but only for particle creation, we consider it here for potentials because that allows for a consistent kind of multi-time equations. In an arbitrary but fixed Lorentz frame $L$, we consider instead of $\sS$ the set of \emph{$\delta$-spacelike configurations},
\begin{equation}\label{sSdeltadef}
\sS_{\delta} = \Bigl\{ (x_1,\ldots,x_N) \in (\RRR^4)^N:\:\: \forall j \neq k: x^0_j = x^0_k ~\text{or}~ \|\vx_j - \vx_k\| > |x^0_j - x^0_k| + \delta \Bigr\}\,.
\end{equation}
This is the set of those spacelike configurations in which additionally for each pair of particles either the times are equal or the spatial distance from one particle's light cone to the other particle is bigger than $\delta$. Figure~\ref{figure:std_partition} shows an example of such a configuration.

\begin{figure}[htbp]
\centering
\includegraphics[width=350pt,keepaspectratio]{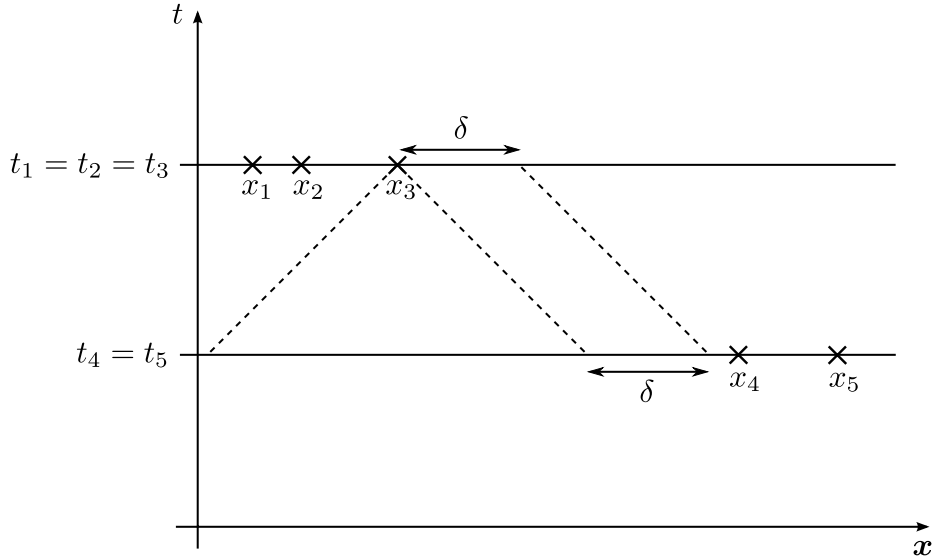}
\centering
\caption{\small{A space-time configuration belonging to $\sS_{\delta}$}}
\label{figure:std_partition}
\end{figure}

Our result, Theorem~\ref{thm:delta_model}, asserts that for a pair potential of range $\delta$ or less, the obvious multi-time equations on $\sS_\delta$ (with Dirac operators as $H_j^\free$) are consistent.\footnote{As a consequence, the statement made in the title of this paper is not true on $\sS_\delta$, and the title should be understood as meaning that multi-time Schr\"o\-din\-ger equations without cut-off length cannot contain interaction potentials.}

In more detail, for any $\delta$-spacelike configuration the $N$ particles can be grouped together in families that have equal time coordinate. The obvious multi-time equations say that particles belonging to different families do not interact (as they have distances greater than $\delta$ anyway), while each family, having only one joint time coordinate, satisfies one Schr\"o\-din\-ger equation that contains interaction potentials. It is therefore not surprising that the multi-time equations, one equation per family, are consistent. Yet, the proof requires some work, as the grouping into families is not fixed but varies over $\sS_\delta$. (Note also that two particles at a distance $>\delta$ can interact by interacting with a third particle, travelling from one to the other if necessary.) 

This model breaks Lorentz invariance in two ways: the set $\sS_\delta$ is not Lorentz invariant, and the time evolution involves $L$-instantaneous interaction within each family (over distances up to $\delta$), and thus superluminal signaling. This leads to the question, which we address now, whether a fully relativistic version of the example could be obtained by letting $\delta\to 0$.

\subsubsection{On the Limit $\delta\to0$}

The answer is negative: There is no consistent set of multi-time equations with interaction of range 0 based on the Dirac equation. Indeed, such a multi-time evolution would involve a wave function $\phi$ defined on the set of spacelike configurations of $N$ particles, which at each spacelike configuration $(x_1,\ldots,x_N)$ (away from the set where $x_j=x_k$) would satisfy the free multi-time Dirac equations,
\be\label{freeDiracmultitime}
i\frac{\partial\phi}{\partial t_j} = \bigl(-i\valpha_j \cdot \nabla_j + \beta_jm\bigr)\phi
\ee
for $j=1,\ldots,N$. While these equations look like the free equations, they alone may not completely determine the time evolution, as the set $\sS$ of spacelike configurations has a non-empty boundary $\partial \sS$; to determine the time evolution it might be necessary to specify boundary conditions on $\partial \sS$, and one might hope that a suitable choice of boundary condition will define a consistent interacting multi-time evolution. However,  consider (in any one Lorentz frame $L$) the 1-time wave function $\psi$ obtained from the multi-time wave function $\phi$ as in \eqref{phipsi}. As a consequence of \eqref{freeDiracmultitime}, $\psi$ would satisfy the free $N$-particle 1-time Dirac equation
\be\label{freeDirac1time}
i\frac{\partial \psi}{\partial t} = \sum_{j=1}^N \Bigl( -i\valpha_j\cdot \nabla_j + \beta_j m_j\Bigr) \psi
\ee
on the set 
\be
\RRR^{3,N}_{\neq} = \Bigl\{ (\vx_1,\ldots,\vx_N)\in(\RRR^3)^N: \:\: \forall j\neq k :\vx_j \neq \vx_k\Bigr\}
=(\RRR^3)^N\setminus D
\ee
of configurations without collisions. It is known \cite{svendsen} that the Dirac equation does not allow for point interactions if the dimension of physical space is 3; that is, if \eqref{freeDirac1time} holds away from the diagonal $D$ then it also holds on the diagonal $D$; put yet differently, the free Dirac Hamiltonian as in \eqref{freeDirac1time} is essentially self-adjoint on $C_0^\infty\bigl(\RRR^{3N}\setminus D,(\CCC^4)^{\otimes N}\bigr)$, and its unique self-adjoint extension is again a free Dirac operator. Thus, the theory in the limit $\delta\to 0$ is free of interaction.

\section{The Consistency Condition}\label{sec:consistency}

Since the time derivatives $\partial/\partial t_j$ and $\partial/\partial t_k$ commute, \eqref{consistency} is just a more compact way of writing 
\be\label{consistency2}
\bigl[H_j,H_k \bigr] - i\frac{\partial H_k}{\partial t_j} + i \frac{\partial H_j}{\partial t_k} =0 \quad \forall j\neq k\,.
\ee
We now want to convey why this condition is necessary and sufficient for the multi-time equations \eqref{phiHj} to possess a joint solution for every initial condition. We will approach this point from several angles.

\subsection{Heuristic Derivation}

First, regard the multi-time equations \eqref{phiHj} as $N$ equations about a function $\phi: \RRR^{4N} \to S$ (with $S$ the spin-space, e.g., $S=\CCC$ for spinless particles, or $S=(\CCC^{4})^{\otimes N}$ for Dirac particles). If $\phi$ is a joint solution then
\be\label{consistency_phi}
\biggl[ i\frac{\partial}{\partial t_j}-H_j , i\frac{\partial}{\partial t_k}-H_k \biggr] \phi = 0\quad \forall j\neq k\,. 
\ee
Now suppose that arbitrary times $\tau_1,\ldots,\tau_N$ can be chosen as initial times, at which arbitrary initial data $\phi|_{t_1=\tau_1,\ldots,t_N=\tau_N}$ can be specified. Then the left-hand side of \eqref{consistency_phi} has to vanish at $t_1=\tau_1,\ldots,t_N=\tau_N$ for arbitrary $\phi|_{t_1=\tau_1,\ldots,t_N=\tau_N}$, and therefore the commutator has to vanish at $t_1=\tau_1,\ldots,t_N=\tau_N$. Since the $\tau_j$ were arbitrary, the commutator has to vanish everywhere, which is what we wanted to derive.

It is sometimes important to note that, for the condition \eqref{consistency} to characterize consistency, the commutator has to vanish not only on all \emph{joint solutions} $\phi$ of \eqref{phiHj}, but on \emph{all possible initial conditions}, or, equivalently (since, as visible from \eqref{consistency2}, the left-hand side of \eqref{consistency_phi} does not involve the time derivative of $\phi$), on \emph{all functions} $\RRR^{4N}\to S$.

In many relevant cases, the commutator can also be considered locally (i.e., at a single point in $\RRR^{4N}$), e.g., when the $H_j$ are differential operators; then the commutator may vanish on some subset $U$ of $\RRR^{4N}$ but not on its complement. This case is relevant in connection with the fact that the Dirac equation has a finite propagation speed (given by the speed of light, which we have set to 1), so that a solution $\psi(t,\vx)$ of the (1-particle) Dirac equation depends on the initial data $\psi(0,\cdot)$ only through the initial data in the closed ball $\overline B_{|t|}(\vx)$ around $\vx$ with radius $|t|$ (called the domain of dependence). If every $H_j$ has propagation speed 1, then $\phi(\tau_1+\varepsilon,\vx_1,\ldots,\tau_N+\varepsilon,\vx_N)$ depends on initial data at times $\tau_1,\ldots,\tau_N$ only on $\overline B_\varepsilon(\vx_1)\times \cdots \times \overline B_\varepsilon(\vx_N)$, and for solving \eqref{phiHj} on the Cartesian product of the cones between $(\tau_j+\varepsilon,\vx_j)$ and $(\tau_j, \overline B_\varepsilon(\vx_j))$ we only need that this product set is contained in $U$. In particular, for solving \eqref{phiHj} on $\sS$ we only need that \eqref{consistency} holds on $\sS$.

\subsection{Exact Formulations}

Another way of thinking about the consistency condition is from a Hilbert space perspective. Regard $\phi$ as a function $\phi:\RRR^N \to \Hilbert$ on the space $\RRR^N$ spanned by the $N$ time axes, with values in a Hilbert space, e.g., $\Hilbert=L^2(\RRR^{3N},S)$. The partial Hamiltonians are $N$ operator-valued functions $H_1(t_1,\ldots,t_N),\ldots,H_N(t_1,\ldots,t_N)$.

A simple case is that in which the $H_j$ are time-independent:

\begin{thm}\label{thm:SchrN}
Let $\Hilbert$ be a Hilbert space, and let $H_1,\ldots,H_N$ be (time-independent) self-adjoint operators in $\Hilbert$. Then the system of equations \eqref{phiHj} possesses a strong solution for every initial condition $\phi(0,\ldots,0)\in\Hilbert$ if and only if the consistency condition $[H_j,H_k]=0$ holds (in the spectral sense) for all $j \neq k \in\{1,\ldots,N\}$.
\end{thm}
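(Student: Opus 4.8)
The plan is to prove both directions using the spectral theorem and Stone's theorem. For the ``if'' direction, suppose $[H_j,H_k]=0$ in the spectral sense for all $j\neq k$, meaning all the spectral projections commute. The standard consequence is that there is a joint spectral measure / one can build the unitary groups $U_j(t_j)=e^{-iH_jt_j}$ so that they all commute with each other. Given an initial condition $\phi(0,\ldots,0)=\phi_0\in\Hilbert$, I would define
\be
\phi(t_1,\ldots,t_N) = U_1(t_1)\cdots U_N(t_N)\,\phi_0 = e^{-iH_1t_1}\cdots e^{-iH_Nt_N}\,\phi_0\,.
\ee
Because the $U_j$ commute, the order of the factors is irrelevant, so I can single out any one $t_j$ and differentiate: $\partial\phi/\partial t_j = -iH_j\,U_j(t_j)\bigl(\prod_{k\neq j}U_k(t_k)\bigr)\phi_0 = -iH_j\phi$, at least weakly, and in the strong-solution sense when $\phi_0$ lies in the appropriate domain; for general $\phi_0\in\Hilbert$ one gets the solution in the usual mild/strong sense that Stone's theorem provides for each individual group. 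This establishes existence of a (strong) solution for every initial condition.

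For the ``only if'' direction, suppose the system \eqref{phiHj} has a strong solution for every initial condition $\phi(0,\ldots,0)\in\Hilbert$. Fix a pair $j\neq k$; by relabeling it suffices to treat $j=1$, $k=2$, and I can freeze the other time variables at $0$, so effectively I am looking at a two-parameter problem $\phi(t_1,t_2)$ with $i\partial_{t_1}\phi=H_1\phi$, $i\partial_{t_2}\phi=H_2\phi$. Solving first in $t_1$ then in $t_2$ (each step is governed by a single self-adjoint generator, so Stone's theorem applies and the solution is unique) forces $\phi(t_1,t_2)=e^{-iH_2t_2}e^{-iH_1t_1}\phi_0$; solving in the other order forces $\phi(t_1,t_2)=e^{-iH_1t_1}e^{-iH_2t_2}\phi_0$. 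Since a joint solution exists for every $\phi_0$, these must agree for all $t_1,t_2$ and all $\phi_0$, i.e. $e^{-iH_1t_1}e^{-iH_2t_2}=e^{-iH_2t_2}e^{-iH_1t_1}$ for all $t_1,t_2\in\RRR$. It is a classical fact (a form of the Trotter--Kato / Nelson circle of results, or simply an exercise with the spectral theorem) that commuting of the one-parameter unitary groups for all real parameter values is equivalent to commuting of the spectral projections of $H_1$ and $H_2$, which is exactly $[H_1,H_2]=0$ in the spectral sense. Running this over all pairs gives the consistency condition.

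The main obstacle is the ``only if'' direction, and specifically the subtlety that ``$\phi$ is a strong solution of each individual equation'' must be shown to force the ordered-exponential representation. The point is that for each fixed $t_2$, the curve $t_1\mapsto\phi(t_1,t_2)$ is a strong solution of $i\partial_{t_1}\phi=H_1\phi$, and by uniqueness in Stone's theorem it must equal $e^{-iH_1t_1}\phi(0,t_2)$; then $t_2\mapsto\phi(0,t_2)$ is a strong solution of the second equation, giving $\phi(0,t_2)=e^{-iH_2t_2}\phi_0$. Making this rigorous requires knowing that the hypothesis delivers enough regularity in each variable separately (continuity in the other variable, so that the composed curves are genuine solutions and not merely solutions for a.e.\ value of the frozen parameter); I would handle this by noting that a strong solution of the $N$-variable system is in particular jointly continuous, or by first restricting to a dense set of nice initial data and passing to the limit using unitarity. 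A secondary, purely citational, point is the equivalence between spectral commutativity and commutativity of the unitary groups for unbounded self-adjoint operators, which I would invoke rather than reprove.
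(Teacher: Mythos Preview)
Your proposal is correct and follows essentially the same route as the paper: derive the two ordered products $e^{-iH_2t_2}e^{-iH_1t_1}\phi_0$ and $e^{-iH_1t_1}e^{-iH_2t_2}\phi_0$, equate them for all $\phi_0$, and invoke the equivalence between commuting unitary groups and spectral commutativity; conversely, define $\phi$ as the common product and check it is a strong solution. The one point worth noting is that your regularity worries in the ``only if'' direction are unnecessary here: in this paper a \emph{strong solution} is by definition one satisfying \eqref{strongdef}, i.e.\ $\phi(t_1,\ldots,t_N)=e^{-iH_jt_j}\phi(t_1,\ldots,t_{j-1},0,t_{j+1},\ldots,t_N)$ for every $j$, so the ordered-exponential representation is immediate from the hypothesis rather than something to be extracted via Stone-type uniqueness and continuity arguments.
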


The proof is given in Section~\ref{sec:proofs_consistency}.
Here, a \emph{strong solution} means a function $\phi:\RRR^N\to\Hilbert$ such that
\be\label{strongdef}
\phi(t_1,\ldots,t_N) = e^{-i H_j t_j} \phi(t_1,\ldots,t_{j-1},0,t_{j+1},\ldots,t_N)
\ee
for every $j=1,\ldots,N$, in analogy to the usual terminology that a strong solution of the single-time Schr\"o\-din\-ger equation $i\partial_t \psi = H\psi$ is a function $\psi:\RRR\to\Hilbert$ such that $\psi(t) = e^{-iHt} \psi(0)$. Furthermore, to say that $[H_1,H_2]=0$ ``in the spectral sense'' means that, for all measurable sets $A_1,A_2\subseteq \RRR$, the corresponding spectral projections $P_1=1_{A_1}(H_1)$ and $P_2=1_{A_2}(H_2)$ commute. In case $H_1$ and $H_2$ are bounded operators (and thus defined on all of $\Hilbert$), this statement is equivalent to $H_1H_2=H_2H_1$. (For unbounded operators, the expression $H_1H_2-H_2H_1$ may not be defined on a dense domain because the range of $H_1$ may not be contained in, or even may not overlap non-trivially with, the domain of $H_2$ and vice versa.)

Let us return to the general case $H_j=H_j(t_1,\ldots,t_N)$.

\begin{thm}\label{thm:SchrN_t}
Let $\Hilbert$ be a Hilbert space, and let $H_1,\ldots,H_N$ be smooth functions on $\RRR^N$ with values in the bounded operators on $\Hilbert$. Then the system of equations \eqref{phiHj} possesses a solution $\phi:\RRR^N\to\Hilbert$ for every initial condition $\phi(0,\ldots,0)\in\Hilbert$ if and only if the consistency condition \eqref{consistency} holds.
\end{thm}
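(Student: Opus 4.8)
The plan is to prove the two implications separately; the forward (necessity) direction is short, and the bulk of the work is the converse.

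\textbf{Necessity.} Suppose $\phi$ is a joint solution of \eqref{phiHj}. For \eqref{phiHj} to make sense $\phi$ is at least $C^1$, and then a bootstrap (each $-iH_j\phi$ is as smooth as $\phi$, since the $H_j$ are smooth) gives $\phi\in C^2$, indeed $\phi\in C^\infty$; thus mixed partials commute and one may expand
\[
\bigl[i\tfrac{\partial}{\partial t_j}-H_j,\,i\tfrac{\partial}{\partial t_k}-H_k\bigr]\phi
=\bigl([H_j,H_k]-i\tfrac{\partial H_k}{\partial t_j}+i\tfrac{\partial H_j}{\partial t_k}\bigr)\phi
=:C_{jk}\,\phi .
\]
The left-hand side vanishes identically, because $(i\partial_{t_k}-H_k)\phi\equiv 0$ and applying $(i\partial_{t_j}-H_j)$ to the zero function gives zero, and likewise with $j,k$ interchanged; hence $C_{jk}(t_1,\ldots,t_N)\,\phi(t_1,\ldots,t_N)=0$ for \emph{every} joint solution $\phi$. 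Now fix $\tau\in\RRR^N$ and $\chi\in\Hilbert$. Along the path from $0$ to $\tau$ consisting of segments parallel to the coordinate axes (first in $t_1$, then in $t_2$, and so on), a joint solution satisfies the one relevant single-time equation on each segment, and such single-time equations have unique solutions since the $H_j$ are bounded; hence any joint solution is determined along the path, so $\phi(\tau)=V(\tau)\,\phi(0,\ldots,0)$, where $V(\tau)$ is a fixed composition of one-parameter solution operators of the single equations, which is boundedly invertible. By hypothesis there is a joint solution with $\phi(0,\ldots,0)=V(\tau)^{-1}\chi$; for it $\phi(\tau)=\chi$, so $C_{jk}(\tau)\chi=0$. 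As $\tau$, $\chi$ and $j\neq k$ were arbitrary, $C_{jk}\equiv 0$, which is \eqref{consistency2}, equivalently \eqref{consistency}.

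\textbf{Sufficiency.} The first ingredient is the family of propagators: for each $j$ and each fixed choice of the remaining times $\hat t_j=(t_1,\ldots,t_{j-1},t_{j+1},\ldots,t_N)$, the Banach-space-valued linear ODE $i\,\partial_s U=H_j(t_1,\ldots,t_{j-1},s,t_{j+1},\ldots,t_N)\,U$, $U(0)=I$, has a unique solution $U_j(s;\hat t_j)$ among the bounded operators on $\Hilbert$, given by the norm-convergent time-ordered (Dyson) series; by standard results on linear ODEs with smooth coefficients, $U_j$ is jointly smooth in $(s,\hat t_j)$, is invertible, and satisfies $\partial_s U_j=-iH_jU_j$. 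Using these I would build $\phi$ by integrating successively in $t_1,t_2,\ldots,t_N$: set $\phi^{(0)}=\phi_0$ at the origin, and, having defined $\phi^{(m-1)}$ on $\RRR^{m-1}=\{t_m=\cdots=t_N=0\}$, put $\phi^{(m)}(t_1,\ldots,t_m,0,\ldots,0)=U_m\bigl(t_m;t_1,\ldots,t_{m-1},0,\ldots,0\bigr)\,\phi^{(m-1)}(t_1,\ldots,t_{m-1},0,\ldots,0)$; finally $\phi:=\phi^{(N)}$. This $\phi$ is smooth and satisfies $\phi(0,\ldots,0)=\phi_0$, and it remains to check that it solves all $N$ equations. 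This follows by induction from the following lemma: if $H_1,\ldots,H_m$ (smooth, bounded-operator-valued on $\RRR^m$) satisfy \eqref{consistency2} and $\psi$ on $\{t_m=0\}\cong\RRR^{m-1}$ satisfies $i\partial_{t_j}\psi=H_j\psi$ for all $j<m$, then $\tilde\psi:=U_m(t_m;\cdot)\,\psi$ satisfies $i\partial_{t_j}\tilde\psi=H_j\tilde\psi$ for all $j\le m$. Indeed, equation $m$ holds by construction; for $j<m$ put $\chi_j:=\partial_{t_j}\tilde\psi+iH_j\tilde\psi$, which vanishes on $\{t_m=0\}$ by hypothesis. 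Differentiating $\chi_j$ in $t_m$, substituting $\partial_{t_m}\tilde\psi=-iH_m\tilde\psi$ from equation $m$, and using \eqref{consistency2} for the pair $(j,m)$ to cancel all terms not proportional to $\chi_j$, one obtains $\partial_{t_m}\chi_j=-iH_m\chi_j$; uniqueness for this homogeneous linear ODE in $t_m$ forces $\chi_j\equiv 0$. Since the consistency condition for $H_1,\ldots,H_N$ on $\RRR^N$ restricts to that for $H_1,\ldots,H_m$ on $\{t_{m+1}=\cdots=t_N=0\}$, the lemma applies at each stage, and applying it with $m=1,2,\ldots,N$ shows $\phi=\phi^{(N)}$ is a joint solution.

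The main obstacle is the sufficiency part, and within it two points need care: verifying that the time-ordered exponentials $U_j$ depend smoothly on the \emph{parameter} times $\hat t_j$ as well as on $s$ — this is what makes $\tilde\psi$ differentiable in every $t_j$ and legitimizes the computation of $\partial_{t_m}\chi_j$ — and the bookkeeping in that computation, where \eqref{consistency2} must enter at exactly the right place to eliminate the "curvature" term and leave a homogeneous equation for $\chi_j$. I would also remark that the construction admits a coordinate-free reading: the operators $-iH_j$ are the components of a connection on the trivial bundle $\RRR^N\times\Hilbert$, condition \eqref{consistency} says precisely that its curvature vanishes, and $\phi$ is the horizontal section through $\phi_0$, which exists and is unique because $\RRR^N$ is simply connected — the viewpoint developed in Section~\ref{sec:pathindependence}.
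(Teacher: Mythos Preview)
Your proof is correct and takes a genuinely different route from either of the paper's two proofs. The paper's first proof is path-based: it shows that consistency is equivalent to equality of the two evolution operators around any axiparallel rectangle, extracts the consistency condition by Taylor-expanding this equality to order $\Delta t^2$, and for sufficiency subdivides the rectangle into small squares and sums the $o(\Delta t^2)$ defects. The paper's second proof is gauge-theoretic, invoking the non-Abelian Stokes theorem to equate the holonomy around a closed curve with a surface-ordered exponential of the curvature. Your sufficiency argument is instead a direct inductive construction in the spirit of the classical Frobenius theorem: build $\phi$ by integrating in $t_1,\ldots,t_N$ successively, then verify equation $j<m$ by showing that the defect $\chi_j=(\partial_{t_j}+iH_j)\tilde\psi$ satisfies the same homogeneous linear ODE in $t_m$ as $\tilde\psi$ does, with zero initial data, and hence vanishes. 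This is more elementary---no $o(\Delta t^2)$ bookkeeping, no Stokes theorem---and closer to standard ODE methods; its cost is the need (which you correctly flag) to verify smooth parameter-dependence of the Dyson propagators $U_j(s;\hat t_j)$. Your necessity argument is likewise cleaner than the paper's Taylor expansion: rather than expanding the rectangle holonomy, you observe directly that the evaluation map $\phi\mapsto\phi(\tau)$ from joint solutions to $\Hilbert$ is surjective (via invertibility of the composed propagator along one fixed axis-parallel path), so $C_{jk}(\tau)$ must annihilate every vector in $\Hilbert$.
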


The derivative $\partial/\partial t_j$ in \eqref{phiHj} is understood here as the limit in the Hilbert space topology of the appropriate difference quotient. The proof of Theorem~\ref{thm:SchrN_t} is also given in Section~\ref{sec:proofs_consistency}. The restriction to \emph{bounded} $H_j(t_1,\ldots,t_N)$ in Theorem~\ref{thm:SchrN_t} can presumably be relaxed if a more refined proof is used.
For finite-dimensional $\Hilbert$, Theorem~\ref{thm:SchrN_t} was already known \cite{GU12}.

\subsection{Path Independence}\label{sec:pathindependence}

Let us explore further the view of $\phi$ as an $\Hilbert$-valued function on $\RRR^N$. Already the simplified derivation of the consistency condition in Section~\ref{sec:consistency1} has made clear that consistency is related to a certain type of path independence: It was relevant to consistency that we could either first increase the $t_1$ variable from its initial to its final value, and then the $t_2$ variable, or vice versa. The first way of obtaining $\phi(t_1,t_2)$ from $\phi(0,0)$ proceeds along a path that is the polygonal chain from $(0,0)$ to $(t_1,0)$ to $(t_1,t_2)$; the second along the path from $(0,0)$ to $(0,t_2)$ to $(t_1,t_2)$; see Figure~\ref{fig:twopaths}. 

\begin{figure}[ht]
\centering
\includegraphics[width=250pt,keepaspectratio]{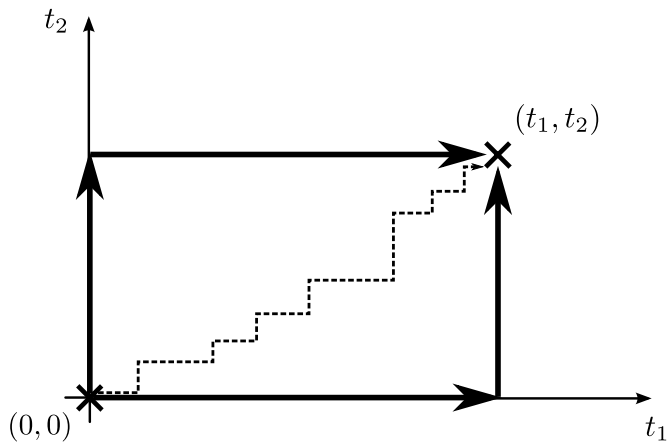}
\centering
\caption{\small{Two paths in the $t_1t_2$-plane from $(0,0)$ to $(t_1,t_2)$ associated with two ways of computing $\phi(t_1,t_2)$ from $\phi(0,0)$: either first increase the $t_1$ variable and then $t_2$, or first increase $t_2$ and then $t_1$. Dashed: another path from $(0,0)$ to $(t_1,t_2)$ associated with another way, first increase the $t_1$ variable a bit but not to the desired final value, then increase the $t_2$ variable a bit, then $t_1$ again etc..}}
\label{fig:twopaths}
\end{figure}

As further alternatives, we could first increase $t_1$ a bit, then $t_2$, then $t_1$ some more, then $t_2$, etc.. More generally, also for $N$ time variables, with every path $\gamma:[0,1]\to\RRR^N$ from the initial point $\gamma(0)=\art^{\,i}=(t_1^i,\ldots, t_N^i)$ to the final point $\gamma(1)=\art^f=(t_1^f,\ldots,t_N^f)$, there is associated an evolution operator 
\be
U_\gamma = \mathcal{T} e^{-i\int_\gamma \sum_j H_j \, dt_j}\,,
\ee
which means the path-ordered exponential integral, i.e., the value $U(1)$ of the solution of the differential equation 
\be
\frac{dU(s)}{ds}=-i \sum_j H_j(\gamma(s)) \, \frac{d\gamma_j(s)}{ds} \, U(s)
\ee
with initial condition $U(0)=I$, the identity operator; for bounded $H_j$, it is given by the Dyson series (see Section~\ref{sec:proofs_consistency} for more detail). 

The consistency of \eqref{phiHj} is then equivalent to saying that every path $\gamma$ from $\art^{\,i}$ to $\art^f$ yields the same operator $U_\gamma$; or, put differently, since any point in $\RRR^N$ could serve as $\art^{\,i}$ or $\art^f$, that $U_\gamma$ depends only on the endpoints of $\gamma$.

We note that, as discussed in more detail in \cite{pt:2013c}, a similar situation occurs for the Tomonaga--Schwinger equation, which defines the evolution of a wave function in Hilbert space along any path in the set of spacelike hypersurfaces; if the consistency condition of the Tomonaga--Schwinger equation is satisfied, then the evolution is path-independent, and thus depends only on the initial and the final hypersurface but not on the foliation used to interpolate between the two.

Returning to the consistency of \eqref{phiHj}, the 
evolution operator $U_\gamma$ and its path-independence can be naturally expressed in terms of a gauge connection of a vector bundle. Consider as the base manifold the space $\RRR^N$ spanned by the $N$ time axes, take as the fiber spaces copies of $\Hilbert$ (so it is a trivial vector bundle), and regard $\phi$ as a cross-section of this vector bundle. Define a gauge connection (or ``covariant derivative'') on this bundle by\footnote{We use the symbol $\nabla_j$ in this paper in two different meanings: here, it means the covariant derivative in the $t_j$-direction, whereas in some other places including \eqref{H1H2LaplaceCoulomb}, \eqref{freeDiracmultitime}, and \eqref{freeDirac1time}, it means the gradient $(\partial/\partial x_{j,1},\partial/\partial x_{j,2},\partial/\partial x_{j,3})$ with respect to spatial variables. It should always be clear from the context which meaning is intended.}
\be\label{covder}
\nabla_j = \partial_j -iA_j
\ee
with $\partial_j=\partial/\partial t_j$ and connection coefficients $A_j=-H_j$.\footnote{Put differently, a gauge connection can be described, relative to some other gauge connection, by a one-form with values in the Lie algebra of the gauge group; here, the reference connection [$\partial$ in \eqref{covder}] is the one along which the Hilbert spaces in the bundle are identified with $\Hilbert$ (and thus with each other), the Lie algebra of the gauge group consists of operators on $\Hilbert$, and a one-form on $\RRR^N$ can be specified by specifying its $N$ components $A_j=-H_j$.} Then $U_\gamma$ coincides with the parallel transport operator along $\gamma$ associated with the gauge connection, a joint solution $\phi$ is a cross-section for which all covariant derivatives vanish, and path-independence is equivalent to saying that all closed curves $\gamma$ have trivial holonomy (i.e., $U_\gamma=I$). By the non-Abelian Stokes theorem (see the proof of Theorem~\ref{thm:SchrN_t} in Section~\ref{sec:proofs_consistency}), the holonomy of a closed curve $\gamma$ equals the ordered exponential integral of the curvature over any oriented 2-surface $\Sigma$ whose oriented boundary is $\gamma$. As a consequence, a gauge connection has only trivial holonomies if and only if its curvature $F$ vanishes. Here, $F$ is an operator-valued 2-form; by the standard formula for computing curvature from the connection coefficients, the components of $F$ are 
\be\label{Fjkdef}
F_{jk} = -\frac{\partial H_k}{\partial t_j} +\frac{\partial H_j}{\partial t_k} -i  \bigl[ H_j,H_k \bigr]\,.
\ee
Thus, path-independence (and thus consistency) is equivalent to
\be
F_{jk} =0 \quad \forall j\neq k\,,
\ee
which coincides with condition \eqref{consistency} in the form \eqref{consistency2}. We have thus obtained another derivation of the equivalence between \eqref{consistency} and consistency; a precise version of this derivation is formulated in Section~\ref{sec:proofs_consistency}.

\subsection{Comparison with the Frobenius Theorem}

The statement that \eqref{consistency} is necessary and sufficient for the consistency of a system of multi-time equations bears some similarity with the Frobenius theorem of differential topology, which concerns the following. Suppose that with every point $q\in\RRR^d$ is associated a subspace $S_q\subset \RRR^d$ of dimension $N<d$; the family of the subspaces $S_q$ is called \emph{integrable} if and only if there exists an $N$-dimensional foliation of $\RRR^d$ such that $S_q$ is the tangent space of the foliation at $q$. The Frobenius theorem provides a necessary and sufficient condition for integrability. Equivalently, the situation can be expressed in terms of a system of $N$ partial differential equations for a function $\phi:\RRR^d\to\RRR^{d-N}$, $\phi(q)=\phi(q_1,\ldots,q_d)$, of the form
\be\label{FrobeniusPDE}
\sum_{n=1}^d f_{kn}(q)\,  \frac{\partial \phi}{\partial q_n} =0\quad \forall k=1,\ldots,N
\ee
with real-valued coefficients $f_{kn}$.

Here, the leaves of the foliation are supposed to be the surfaces of constant $\phi=(\phi_1,\ldots,\phi_{d-N})$; they do form a foliation if the matrix $D(q)=(D_{in}) = (\partial \phi_i/\partial q_n)$ has full rank $d-N$ at every $q\in\RRR^d$. At every $q$, the $N$ vectors (written as directional derivative operators)
\be\label{FrobLdef}
L_k = \sum_{n=1}^d f_{kn}(q)\,  \frac{\partial}{\partial q_n}
\ee
span the subspace $S_q$; Eq.~\eqref{FrobeniusPDE} expresses that the foliation is tangent to $S_q$. The family $S_q$ is integrable if and only if there is a solution $\phi$ of \eqref{FrobeniusPDE} such that $D(q)$ has full rank at every $q$; in this case, Eq.~\eqref{FrobeniusPDE} is said to be integrable. The Frobenius theorem states that \eqref{FrobeniusPDE} is integrable in a neighborhood of $q_0\in\RRR^d$ if and only if the operators $L_k$ satisfy the commutator condition
\be
[L_i,L_j] = \sum_{k=1}^N c_{ijk}(q) \, L_k
\ee
in a neighborhood of $q_0$ for suitable functions $c_{ijk}$.

To compare the Frobenius theorem to Theorem~\ref{thm:SchrN_t}, we consider $d=4N$, $q=(x_1,\ldots,x_N)$, and
\be\label{LH}
L_k=\frac{\partial}{\partial x^0_k} +iH_k\,.
\ee
The similarities are that both theorems concern the possibility of a joint solution $\phi$ of several PDEs; that this possibility occurs if and only if the PDEs satisfy a certain integrability condition; and that this condition can be expressed in terms of the commutator $[L_i,L_j]$. The differences are that $\phi$ in Theorem~\ref{thm:SchrN_t} can have any number of components, not just $d-N$;  that Theorem~\ref{thm:SchrN_t} does not guarantee that $D(q)$ has full rank at every $q$; relatedly, that a joint solution $\phi$ of \eqref{phiHj} does not, in general, define an $N$-dimensional foliation of $\RRR^d$; that the operators $H_k$ in Theorem~\ref{thm:SchrN_t} can have matrix-valued coefficients, do not have to be of first order, and do not have to be differential operators;\footnote{Strictly speaking, Theorem~\ref{thm:SchrN_t} does not apply to differential operators because they are unbounded; but an appropriate version of Theorem~\ref{thm:SchrN_t} will be true of differential operators.} and that the commutator in \eqref{consistency} has to vanish, rather than being a linear combination of the $L_k$. Thus, neither theorem is a special case of the other. 

A statement of the following type can be regarded as a generalized Frobenius theorem that covers both the original Frobenius theorem and the consistency of multi-time Schr\"odinger equations as special cases: 
\begin{itemize}
\item {\it Let $\Hilbert$ be a vector space. The system of partial differential equations for $\phi:\RRR^N\to \Hilbert$,
\be\label{nonlinear}
\frac{\partial \phi}{\partial t_j}= f_j\Bigl(t_1,\ldots,t_N,\phi(t_1,\ldots, t_N)\Bigr)\,,
\ee
possesses a solution $\phi$ on $\RRR^N$ for every initial condition $\phi(0,\ldots,0)=\phi_0$ if and only if the functions $f_j:\RRR^N\times \Hilbert \to \Hilbert$, $j=1,\ldots,N$, everywhere satisfy the consistency condition
\be\label{consistency-nonlinear}
\frac{\partial f_j}{\partial t_k}+ f_k \cdot \nabla_\phi f_j=
\frac{\partial f_k}{\partial t_j}+ f_j \cdot \nabla_\phi f_k
\ee
for all $k=1,\ldots,N$. In this case, the solution is unique.}
\end{itemize}
For finite-dimensional $\Hilbert$ (and under  certain technical assumptions on the $f_j$), this statement can be found as Theorems 2.1 and 2.2 in \cite{GU12}. On the other hand, for $f_j(t_1,\ldots,t_N,\phi)=-iH_j(t_1,\ldots,t_N)\phi$, \eqref{nonlinear} reduces to \eqref{phiHj} and \eqref{consistency-nonlinear} to \eqref{consistency}, and Theorems~\ref{thm:SchrN} and \ref{thm:SchrN_t} above provide precise versions of the statement. The original Frobenius theorem (or at least a variant thereof) is a special case of the statement for $(q_1,\ldots,q_d)=(t_1,\ldots,t_N,x_1,\ldots,x_{d-N})$, $L_k=\partial/\partial t_k + \sum_{n=1}^{d-N} f_{kn}(q)\, \partial /\partial x_n$, $\Hilbert$ a suitable space of functions of $(x_1,\ldots,x_{d-N})$, and $f_j(t_1,\ldots,t_N,\phi)=- \sum_{n=1}^{d-N} f_{kn}(q) \, \partial \phi/\partial x_n$.

\section{Results on Inconsistency of Potentials}\label{sec:potentials}

We now state our results about the inconsistency of interaction potentials. The proofs are postponed to Section~\ref{sec:proofs_inconsistentV}. The results cover, as the free Hamiltonians, both the free Schr\"o\-din\-ger Hamiltonian\footnote{There should be no difficulty with distinguishing when the symbol $i$ denotes the unit imaginary number from when it denotes a particle label.}
\be\label{freeSchr}
H_i^\free=-\frac{1}{2m}\Laplace_i ,
\ee
and the free Dirac Hamiltonian
\be\label{freeDirac}
H_i^\free=-i\vect{\alpha}_i\cdot\nabla_i + \beta_i m .
\ee
Since we are more interested in the Dirac case, we formulate our results first for first-order operators (Theorems~\ref{thm:inconsistentV1}--\ref{thm:inconsistentV4}), and present the corresponding results for second-order operators afterwards in Theorem~\ref{thm:inconsistent2ndorder}.

So consider $\phi_{s_1,\ldots, s_N}(x_1,\ldots, x_N)$ with $s_j=1,\ldots,4$ the spin index of the $j$-th particle; that is, $\phi:\RRR^{4N}\to(\CCC^4)^{\otimes N}$. In the following Theorem~\ref{thm:inconsistentV1} we consider real scalar interaction potentials.

\begin{thm}\label{thm:inconsistentV1}
Suppose $H_i=H_i^\mathrm{free}+V_i(x_1,\ldots,x_N)$, where $H_i^{\mathrm{free}}$ is the free Dirac Hamiltonian \eqref{freeDirac} acting on $\vx_i$ (and the $i$-th spin index), and $V_i:\RRR^{4N}\to \RRR$ is smooth.
The consistency condition \eqref{consistency} is satisfied on $\RRR^{4N}$ only if the evolution \eqref{phiHj} is gauge-equivalent to a non-interacting one, i.e., there are smooth real-valued functions $\theta(x_1,\ldots,x_N)$ and, for every $i\in\{1,\ldots,N\}$, $\tilde{V}_i(x_i)$ such that
\be\label{tildephidef}
\tilde\phi(x_1,\ldots,x_N) := e^{i\theta(x_1,\ldots,x_N)}\phi(x_1,\ldots,x_N)
\ee
satisfies the equations
\be\label{inconVfree}
i\frac{\partial\tilde\phi}{\partial t_i} = \Bigl(H_i^\mathrm{free}+\tilde{V}_i(x_i)\Bigr)\tilde\phi
\ee
for $i=1,\ldots,N$.
\end{thm}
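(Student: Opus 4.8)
The plan is to compute the consistency condition \eqref{consistency2} explicitly for $H_i = H_i^{\free} + V_i$ with the free Dirac Hamiltonian, extract the constraints it places on the $V_i$, and then show those constraints force the $V_i$ to split into a common ``pure gauge'' part plus per-particle external terms. First I would write out, for fixed $j \neq k$, the commutator $[H_j, H_k]$. Since $H_j^{\free}$ acts only on $\vx_j$ and the $j$-th spin index and $H_k^{\free}$ only on $\vx_k$ and the $k$-th spin index, $[H_j^{\free}, H_k^{\free}] = 0$. The cross terms give $[H_j^{\free}, V_k] + [V_j, H_k^{\free}]$, and $[V_j, V_k] = 0$ since both are multiplication operators. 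Using $[-i\valpha_j \cdot \nabla_j, V_k] = -i\valpha_j \cdot (\nabla_j V_k)$ (the gradient acting as a multiplication operator on the result), the commutator term becomes $-i\valpha_j \cdot (\nabla_j V_k) + i\valpha_k \cdot (\nabla_k V_j)$. The time-derivative terms in \eqref{consistency2} contribute $-i\,\partial_{t_j} V_k + i\,\partial_{t_k} V_j$, which are scalar (proportional to the identity on spin space). So \eqref{consistency2} reads
\be\label{proposal:master}
-i\valpha_j \cdot (\nabla_j V_k) + i\valpha_k \cdot (\nabla_k V_j) - i\frac{\partial V_k}{\partial t_j} + i\frac{\partial V_j}{\partial t_k} = 0 \qquad \forall j \neq k\,.
\ee

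The next step is to exploit the spin structure. The matrices $\mathbf 1$, $\valpha_j^{(1)}, \valpha_j^{(2)}, \valpha_j^{(3)}$ (the three Dirac alpha matrices on slot $j$, together with the identity) are linearly independent over $\CCC$, and $\valpha_j^{(a)}$ acts on a different tensor factor than $\valpha_k^{(b)}$ for $j \neq k$. Hence \eqref{proposal:master}, viewed as an identity in $\mathrm{End}((\CCC^4)^{\otimes N})$ with coefficients that are scalar functions, forces each ``matrix component'' to vanish separately: the coefficient of $\valpha_j$ gives $\nabla_j V_k = 0$, the coefficient of $\valpha_k$ gives $\nabla_k V_j = 0$, and the identity component gives $\partial_{t_j} V_k = \partial_{t_k} V_j$. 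The first two, applied for all $k \neq j$, say that $V_j$ is independent of $\vx_k$ for every $k \neq j$ — so $V_j = V_j(t_1, \ldots, t_N, \vx_j)$ depends spatially only on its own particle. The third is a closed-form ``curl-free'' condition on the collection $(V_1, \ldots, V_N)$ in the time variables.

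Now I would integrate. Define the one-form $\omega = \sum_j V_j\, dt_j$ on $\RRR^N$ (for fixed spatial arguments); the condition $\partial_{t_j} V_k = \partial_{t_k} V_j$ says $d\omega = 0$, and since $\RRR^N$ is simply connected (Poincaré lemma, with smooth dependence on the spatial parameters) there is a smooth $\Theta(t_1, \ldots, t_N, \vx_1, \ldots, \vx_N)$ with $V_j = \partial_{t_j}\Theta$. Care is needed here because $V_j$ depends spatially only on $\vx_j$ while $\Theta$ may depend on all $\vx_1, \ldots, \vx_N$; I would reconcile this by differentiating $V_j = \partial_{t_j}\Theta$ with respect to $\vx_k$ for $k \neq j$: this gives $\partial_{t_j}(\nabla_k \Theta) = 0$, so $\nabla_k \Theta$ is independent of $t_j$; doing this for all $j \neq k$ shows $\nabla_k\Theta$ depends on the time variables only through $t_k$, whence $\Theta = \sum_k g_k(t_k, \vx_k) + (\text{function of times only})$ up to the usual additive ambiguity, and the ``function of times only'' piece can be absorbed. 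Setting $\theta := -\Theta$ and $\tilde V_i(x_i) := V_i - \partial_{t_i}\Theta$ restricted appropriately, one checks that $\tilde\phi = e^{i\theta}\phi$ absorbs the gradient-in-time part of the potential: indeed $i\partial_{t_i}(e^{i\theta}\phi) = e^{i\theta}(i\partial_{t_i}\phi - (\partial_{t_i}\theta)\phi) = e^{i\theta}(H_i^{\free} + V_i + \partial_{t_i}\theta)\phi$, and since $e^{i\theta}$ commutes with $H_i^{\free}$ only up to the gradient term $\valpha_i \cdot \nabla_i\theta$, I must be slightly more careful: the gauge transformation of the Dirac operator produces $e^{i\theta} H_i^{\free} e^{-i\theta} = H_i^{\free} + \valpha_i \cdot (\nabla_i\theta)$, which is not scalar. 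The cleanest route is therefore to absorb only the time-dependence: write $V_i = \tilde V_i(x_i) + \partial_{t_i}\Theta_0$ where $\Theta_0$ depends only on the time variables (this is where I'd use the decomposition of $\Theta$ above, putting the ``per-particle'' pieces $g_k$ into $\tilde V_k$ and keeping only the time-only remainder), so that $e^{i\theta}$ with $\theta = -\Theta_0$ commutes with each $H_i^{\free}$ exactly, and the computation above then yields \eqref{inconVfree}.

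The main obstacle I anticipate is precisely this last reconciliation step: ensuring the integrating function $\theta$ can be chosen to depend on the \emph{time variables only} (so that $e^{i\theta}$ is a genuine gauge symmetry of the free Dirac operators, which only commute with multiplication by functions of $t$), while the residual potentials $\tilde V_i$ genuinely depend only on $x_i$. Getting the bookkeeping of the additive ambiguities in the Poincaré lemma right — and verifying that the ``function of times only'' ambiguity is exactly the freedom one needs — is the delicate part; the rest is linear algebra on Dirac matrices plus a routine exact-form argument.
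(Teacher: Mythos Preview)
Your proposal follows the same overall strategy as the paper: compute the commutator, use linear independence of $I,\valpha_j,\valpha_k$ (acting on different tensor factors) to separate \eqref{proposal:master} into the scalar relation $\partial_{t_j}V_k=\partial_{t_k}V_j$ and the vanishing spatial gradients $\nabla_jV_k=0$, conclude $V_j=V_j(\vx_j,t_1,\ldots,t_N)$, and then integrate. The difference lies only in how you organize the final integration, and there your route has a small gap.

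You integrate first, obtaining $\Theta$ with $\partial_{t_j}\Theta=V_j$, and then try to decompose $\Theta=\sum_k g_k(t_k,\vx_k)+\Theta_0(t_1,\ldots,t_N)$. Your justification for this decomposition is that $\nabla_k\Theta$ is independent of $t_j$ for $j\neq k$; but that alone does not give the claimed form, since $\nabla_k\Theta$ could still depend on $\vx_l$ for $l\neq k$ (indeed it will, for a generic choice of the time-independent additive constant in $\Theta$). What is missing is the observation that $\partial_{t_k}\nabla_k\Theta=\nabla_kV_k$ depends only on $(t_k,\vx_k)$: this follows from $\partial_{t_i}(\nabla_kV_k)=\nabla_k(\partial_{t_i}V_k)=\nabla_k(\partial_{t_k}V_i)=\partial_{t_k}(\nabla_kV_i)=0$ for $i\neq k$. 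Once you have that, normalizing $\Theta|_{t_1=\cdots=t_N=0}=0$ forces $\nabla_k\Theta$ to depend only on $(t_k,\vx_k)$, and the decomposition follows.

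The paper avoids the detour through $\Theta$ entirely: it defines $\tilde V_j(\vx_j,t_j):=V_j|_{t_i=0,\,i\neq j}$ directly, uses exactly the chain of equalities above to show $\nabla_j(V_j-\tilde V_j)$ is independent of all $t_i$ with $i\neq j$ and hence vanishes (being zero at $t_i=0$), so $V_j-\tilde V_j=:W_j$ depends only on the times; \emph{then} it applies the Poincar\'e lemma to the $W_j$ alone to produce a $\theta$ that automatically depends only on $(t_1,\ldots,t_N)$. This sidesteps both the additive-constant bookkeeping and your worry about $e^{i\theta}$ failing to commute with $H_i^{\free}$, since $\theta$ is purely temporal from the outset.
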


Theorem~\ref{thm:inconsistentV1} can easily be generalized to more arbitrary first-order differential operators.

\begin{thm}\label{thm:inconsistentV2}
Theorem~\ref{thm:inconsistentV1} is still true if the free Dirac operator is replaced by any first-order differential operator
\be\label{firstorderoperator}
H_i^\free = -i\sum_{a=1}^3 A_{i,a}\frac{\partial}{\partial x_{i,a}} + B_i
\ee
for every $i=1,\ldots,N$, where the coefficients $A_{i,a}$ and $B_i$ are self-adjoint $k_i \times k_i$ matrices acting on the index $s_i=1,\ldots,k_i$ of the wave function (referring to the spin space $\CCC^{k_i}$ of the $i$-th particle), and for each $i$ the four matrices $A_{i,1},A_{i,2},A_{i,3},I$ (with $I$ the identity matrix) are linearly independent.\footnote{This is the case for the free Dirac Hamiltonian with $k_i=4$, $A_{i,a} = \alpha_a$ acting on $s_i$ and $B_i=\beta m$ acting on $s_i$; indeed, the three $\alpha$ matrices and $I$ are linearly independent in the space of self-adjoint $4\times 4$ matrices.} That is, let, for every $i$,
\be
H_i=H_i^\free+V_i(x_1,\ldots,x_N)
\ee
with arbitrary real-valued, smooth potential functions $V_i$.
The consistency condition \eqref{consistency} is satisfied only if the multi-time evolution \eqref{phiHj} defined by $H_1,\ldots,H_N$ is gauge-equivalent to a non-interacting one, i.e., (as before) there are smooth real-valued functions $\theta(x_1,\ldots,x_N)$ and $\tilde{V}_i(x_i)$ such that $\tilde\phi$ given by \eqref{tildephidef} satisfies \eqref{inconVfree}.
\end{thm}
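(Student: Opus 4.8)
The plan is to reduce Theorem~\ref{thm:inconsistentV2} to Theorem~\ref{thm:inconsistentV1} by observing that the proof of the latter only uses structural features of the free Dirac operator that persist for the general first-order operator \eqref{firstorderoperator}. Concretely, I would first write out the consistency condition \eqref{consistency2} for the pair $(H_j,H_k)$ with $H_i=H_i^\free+V_i$ and $H_i^\free$ as in \eqref{firstorderoperator}. Since $H_j^\free$ acts only on $\vx_j$ (and spin index $s_j$) and $H_k^\free$ only on $\vx_k$, they commute, so $[H_j^\free,H_k^\free]=0$; the surviving terms are $[H_j^\free,V_k]+[V_j,H_k^\free]+[V_j,V_k]-i\partial_{t_j}V_k+i\partial_{t_k}V_j$, and since $V_j,V_k$ are scalar-valued multiplication operators $[V_j,V_k]=0$. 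The commutator $[H_j^\free,V_k]$ is a first-order differential operator in $\vx_j$ whose coefficients involve $A_{j,a}$ and $\partial V_k/\partial x_{j,a}$; explicitly $[H_j^\free,V_k]=-i\sum_a A_{j,a}(\partial V_k/\partial x_{j,a})$, and similarly $[V_j,H_k^\free]=i\sum_a A_{k,a}(\partial V_j/\partial x_{k,a})$. So the consistency condition becomes, for each $j\neq k$,
\be\label{conds2genplan}
-i\sum_{a=1}^3 A_{j,a}\,\frac{\partial V_k}{\partial x_{j,a}} + i\sum_{a=1}^3 A_{k,a}\,\frac{\partial V_j}{\partial x_{k,a}} - i\frac{\partial V_k}{\partial t_j} + i\frac{\partial V_j}{\partial t_k} = 0\,.
\ee

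The second step is to exploit the linear independence of $\{A_{i,1},A_{i,2},A_{i,3},I\}$ to extract scalar equations from the operator equation \eqref{conds2genplan}. The terms $A_{j,a}(\partial V_k/\partial x_{j,a})$ act nontrivially on the spin index $s_j$, the terms $A_{k,a}(\partial V_j/\partial x_{k,a})$ on $s_k$, and the terms $\partial V_k/\partial t_j$, $\partial V_j/\partial t_k$ are multiples of the identity on all spin indices. Because for each fixed $i$ the matrices $A_{i,1},A_{i,2},A_{i,3},I$ are linearly independent, and because the three pieces act on different tensor factors (spin of $j$, spin of $k$, and the trivial factor), I can project out coefficients: matching the coefficient of $A_{j,a}$ (as a component in the span, using a dual basis of the span of $\{A_{j,1},A_{j,2},A_{j,3},I\}$ inside the self-adjoint matrices) forces $\partial V_k/\partial x_{j,a}=0$ for every $a$ and every $k\neq j$, i.e.\ $V_k$ does not depend on $\vx_j$ for $j\neq k$; symmetrically $V_j$ does not depend on $\vx_k$. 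What remains is $\partial V_k/\partial t_j = \partial V_j/\partial t_k$ for all $j\neq k$. Thus each $V_i$ depends only on $(t_1,\ldots,t_N)$ and $\vx_i$, and the "curl-free" relations $\partial_{t_j}V_k=\partial_{t_k}V_j$ hold.

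The third step is the gauge transformation, identical to the one in the proof of Theorem~\ref{thm:inconsistentV1}: the condition $\partial_{t_j}V_k=\partial_{t_k}V_j$ (together with $\partial_{\vx_j}V_k=0$ for $j\neq k$) says that the one-form $\sum_i V_i\,dt_i$ on $\RRR^N$ is closed, hence exact on the simply connected $\RRR^N$, so there is a smooth real $\theta(x_1,\ldots,x_N)$ with $\partial\theta/\partial t_i = V_i - \tilde V_i(x_i)$ for suitable functions $\tilde V_i$ depending only on $x_i$ (one splits off the $\vx_i$-dependent, $\vec t$-independent part of $V_i$ into $\tilde V_i$; more carefully, fix $\theta(x_1,\ldots,x_N)=\int_0^1 \sum_i V_i(\ldots, s t_i,\vx_i,\ldots)\,t_i\,ds$ minus a correction, and read off $\tilde V_i$). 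Setting $\tilde\phi=e^{i\theta}\phi$ as in \eqref{tildephidef}, a direct computation using $i\partial_{t_i}(e^{i\theta}\phi)=e^{i\theta}(i\partial_{t_i}\phi - (\partial_{t_i}\theta)\phi)$ and the fact that $e^{i\theta}$ commutes with $H_i^\free$ (since $H_i^\free$ has no $t$-derivatives and its spatial derivatives acting on the scalar phase $e^{i\theta}$ produce only lower-order terms that cancel against $\partial_{t_i}\theta$ precisely when $\partial_{\vx_i}\theta$ contributes — here one needs $\theta$ chosen so that the leftover is exactly $\tilde V_i(x_i)$) yields \eqref{inconVfree}.

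The main obstacle I anticipate is the extraction step: carefully justifying that linear independence of $\{A_{i,1},A_{i,2},A_{i,3},I\}$ for each $i$ separately suffices to kill all the mixed spatial derivatives in \eqref{conds2genplan}, given that the three groups of terms live on different tensor factors of $(\CCC^{k_1})\otimes\cdots\otimes(\CCC^{k_N})$. The clean way is to test \eqref{conds2genplan} against rank-one spin states: evaluate the operator identity on a product vector and take inner products, or equivalently apply the partial trace over the $s_j$ factor against $A_{j,b}^*$ for each $b$, using $\mathrm{tr}(A_{j,b}^* A_{j,a})$ and $\mathrm{tr}(A_{j,b}^* I)$; the linear independence guarantees the relevant Gram-type matrix is invertible, so one can solve for each $\partial V_k/\partial x_{j,a}$ and conclude it vanishes. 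A minor secondary point is handling the possibility that $\partial_{t_i}\theta$ must absorb the full $\vec t$-dependence of $V_i$ consistently across all $i$ — but this is exactly what the closedness of $\sum_i V_i\,dt_i$ (in the $\vec t$ variables, at fixed $\vx_i$'s, which enter only as parameters) provides, so no real difficulty arises there beyond bookkeeping.
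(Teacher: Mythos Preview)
Your proposal is essentially the paper's own proof: compute the commutator, use linear independence of $\{I,A_{j,1},A_{j,2},A_{j,3},A_{k,1},A_{k,2},A_{k,3}\}$ on the full spin space to extract $\partial V_k/\partial x_{j,a}=0$ and $\partial_{t_j}V_k=\partial_{t_k}V_j$, then gauge away the residual time-dependence. The paper proves Theorems~\ref{thm:inconsistentV1} and~\ref{thm:inconsistentV2} together in exactly this way, so your reduction ``same as the proof of Theorem~\ref{thm:inconsistentV1}'' is literally what happens.

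One point in your step~3 is muddled and worth tightening. You worry about spatial derivatives of $\theta$ interacting with $H_i^\free$ and needing ``cancellations''; the paper avoids this entirely by first proving the splitting $V_j(\vx_j,t_1,\ldots,t_N)=\tilde V_j(\vx_j,t_j)+W_j(t_1,\ldots,t_N)$. This is not automatic from closedness of $\sum_i V_i\,dt_i$ alone: one sets $\tilde V_j(\vx_j,t_j):=V_j(\vx_j,0,\ldots,0,t_j,0,\ldots,0)$ and then uses \emph{both} relations to show $\partial_{x_{j,a}}(V_j-\tilde V_j)$ is independent of each $t_i$ ($i\neq j$), hence identically zero since it vanishes at $t_i=0$. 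Only then does $W_j=V_j-\tilde V_j$ depend on times alone, and the closed $1$-form $\sum_j W_j\,dt_j$ on $\RRR^N$ gives $\theta=\theta(t_1,\ldots,t_N)$. With $\theta$ independent of all $\vx_i$, $e^{i\theta}$ genuinely commutes with each $H_i^\free$, and \eqref{inconVfree} follows without any leftover matrix terms. Your Poincar\'e-lemma integral for $\theta$ over the full $\RRR^{4N}$ would produce a $\theta$ depending on all $\vx_i$, which does \emph{not} yield a scalar $\tilde V_i$; the splitting step is what makes the argument work.
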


The assumption that $A_{i,a}$ and $B_i$ be self-adjoint can actually be dropped; it is not used in the proof, nor in that of Theorem~\ref{thm:inconsistentV3}. However, as far as we are aware, it is satisfied in all examples of physical interest. The assumption that $A_{i,a}$ be self-adjoint is needed in Theorem~\ref{thm:inconsistentV4}. 

In the following theorem, we consider spacelike configurations. As remarked in the introduction, it is reasonable to expect that multi-time wave functions are defined only on $\sS$. It is useful to also have a notation for the set of spacelike configurations \emph{without collisions} (i.e., demanding $x_j \neq x_k$),
\be\label{sSneqdef}
\sS_{\neq}=\Bigl\{(x_1,\ldots,x_N)\in(\RRR^{4})^N: \:\:\forall j\neq k:x_j \sim x_k\Bigr\}\,,
\ee
where $x\sim y$ means that $x$ is spacelike to $y$.

\begin{thm}\label{thm:inconsistentV3}
Theorem~\ref{thm:inconsistentV2} is still true if the consistency condition holds only on the set $\sS_{\neq}$ of collision-free spacelike configurations and the multi-time wave function is defined only on $\sS_{\neq}$.
\end{thm}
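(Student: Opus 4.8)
The plan is to rerun the proof of Theorem~\ref{thm:inconsistentV2} \emph{locally}, on the open set $\sS_{\neq}\subset\RRR^{4N}$, and then to supply the only genuinely new ingredient --- enough connectedness of $\sS_{\neq}$ and of its natural fibers to globalize the local information. Assume $N\geq 2$ (for $N=1$ there is nothing to prove). Since $j\neq k$, the free parts $H_j^\free$, $H_k^\free$ act on disjoint spin factors and hence commute, $[V_j,V_k]=0$ since both are multiplication operators, and $[H_j^\free,V_k]=-i\sum_{a=1}^3 A_{j,a}\,\partial_{x_{j,a}}V_k$; using also $\partial H_k/\partial t_j=\partial_{t_j}V_k$, the consistency condition \eqref{consistency2} becomes, after dividing by $i$,
\be
\sum_a A_{k,a}\,\partial_{x_{k,a}}V_j-\sum_a A_{j,a}\,\partial_{x_{j,a}}V_k+\partial_{t_k}V_j-\partial_{t_j}V_k=0
\ee
as an operator identity on $\bigotimes_l\CCC^{k_l}$, holding now on $\sS_{\neq}$ rather than on all of $\RRR^{4N}$. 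Comparing the parts acting nontrivially on the $j$-th spin factor (e.g.\ via a normalized partial trace over the other factors) forces $\sum_a(\partial_{x_{j,a}}V_k)A_{j,a}$ to be a multiple of the identity, so by linear independence of $A_{j,1},A_{j,2},A_{j,3},I$ we obtain on $\sS_{\neq}$
\be\label{prop:loc}
\partial_{x_{j,a}}V_k=0\quad(j\neq k,\ a=1,2,3),\qquad \partial_{t_j}V_k=\partial_{t_k}V_j\quad(j\neq k)\,,
\ee
exactly as in the proof of Theorem~\ref{thm:inconsistentV2}.

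The new work, and what I expect to be the main obstacle, is to integrate \eqref{prop:loc} across $\sS_{\neq}$. I would establish three connectedness facts: (a) for fixed times $\art=(t_1,\dots,t_N)$ the spatial fiber $\{(\vx_1,\dots,\vx_N):\|\vx_m-\vx_n\|>|x_m^0-x_n^0|\ \forall m\neq n\}$ is path-connected; (b) for fixed $(\art,\vx_k)$ the fiber $\{(\vx_l)_{l\neq k}:(x_1,\dots,x_N)\in\sS_{\neq}\}$ is path-connected; (c) $\sS_{\neq}$ is connected. For (a), (b) the forbidden regions are finitely many balls, so one can push any single point radially out to infinity along a ray missing them and recurse, ending at a standard widely separated configuration (in case (b) one first scales all $\vx_l$, $l\neq k$, radially outward from $\vx_k$, which only relaxes every constraint); this is the configuration space of finitely many hard balls in $\RRR^3$, which is connected. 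For (c), contracting all time differences linearly to zero keeps a spacelike configuration spacelike (it only shrinks $|x_m^0-x_n^0|$) and lands in $(\RRR^3)^N\setminus D$ with $D$ the collision set, which is connected since $D$ has codimension $3$. These facts are elementary but require care because the relevant sets are not convex.

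With this in hand: by \eqref{prop:loc} and (b), $V_k$ on $\sS_{\neq}$ depends only on $(\art,\vx_k)$, and since every $(\art,\vx_k)$ is realized it extends to a smooth function $V_k(\art,\vx_k)$ on $\RRR^N\times\RRR^3$. Differentiating \eqref{prop:loc} shows $\partial_{t_j}V_k$ has vanishing $\vx_l$-gradient for all $l$ (for $l=k$ use $\nabla_{\vx_k}\partial_{t_j}V_k=\nabla_{\vx_k}\partial_{t_k}V_j=\partial_{t_k}\nabla_{\vx_k}V_j=0$), so by (a) $\partial_{t_j}V_k=h_{jk}(\art)$ depends on the times alone ($j\neq k$). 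Set $\tilde V_k(t_k,\vx_k):=V_k(0,\dots,0,t_k,0,\dots,0,\vx_k)$ and $g_k(\art):=V_k(\art,\vx_k)-\tilde V_k(t_k,\vx_k)$; then $\partial_{t_j}g_k=h_{jk}(\art)$ ($j\neq k$) and $g_k$ vanishes when all $t_j$, $j\neq k$, are $0$, so $g_k$ is independent of $\vx_k$, giving $V_k=\tilde V_k(t_k,\vx_k)+g_k(\art)$. The second relation in \eqref{prop:loc} now reads $\partial_{t_j}g_k=\partial_{t_k}g_j$ for $j\neq k$, i.e.\ the one-form $\sum_k g_k\,dt_k$ on $\RRR^N$ is closed, hence by the Poincar\'e lemma equals $d\theta$ for a smooth $\theta:\RRR^N\to\RRR$; thus $g_k=\partial_{t_k}\theta$ and $V_k=\tilde V_k(x_k)+\partial_{t_k}\theta(\art)$ on $\sS_{\neq}$. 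Since $\theta$ depends on the times only, $e^{i\theta}$ commutes with each $H_i^\free$, and for $\tilde\phi:=e^{i\theta}\phi$ one computes on $\sS_{\neq}$
\be
i\frac{\partial\tilde\phi}{\partial t_i}=e^{i\theta}\Bigl(i\frac{\partial\phi}{\partial t_i}-(\partial_{t_i}\theta)\phi\Bigr)=e^{i\theta}\bigl(H_i^\free+V_i-\partial_{t_i}\theta\bigr)\phi=\bigl(H_i^\free+\tilde V_i(x_i)\bigr)\tilde\phi\,,
\ee
which is \eqref{inconVfree}. This gives Theorem~\ref{thm:inconsistentV3}.
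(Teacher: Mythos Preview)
Your proof is correct and follows essentially the same route as the paper. The key new ingredient is exactly the one the paper isolates as Lemma~\ref{lem:sl_xjt_conn}: path-connectedness of the fiber over fixed $(\art,\vx_k)$ (your fact~(b)), which lets you pass from $\partial_{x_{j,a}}V_k=0$ on $\sS_{\neq}$ to $V_k=V_k(\art,\vx_k)$ as a globally defined smooth function on $\RRR^N\times\RRR^3$; after that both you and the paper reduce to the argument of Theorem~\ref{thm:inconsistentV2} on a simply connected parameter space (you on $\RRR^N$ via Poincar\'e for $\sum_k g_k\,dt_k$, the paper on $\RRR^{3+N}$), so the remaining steps are routine. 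Two minor remarks: your fact~(a) is not strictly needed, since once $V_k$ lives on $\RRR^N\times\RRR^3$ the vanishing of $\nabla_{\vx_k}\partial_{t_j}V_k$ already forces $\partial_{t_j}V_k=h_{jk}(\art)$ without appealing to connectedness of the full spatial fiber; and your fact~(c) is stated but never actually used in your argument (it becomes relevant only in the matrix-valued setting of Theorem~\ref{thm:inconsistentV4}, where the paper needs simple-connectedness of $\sS_{\neq}$).
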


Theorem~\ref{thm:inconsistentV3} yields that \eqref{inconVfree} holds on $\sS_{\neq}$. In order to see that this actually implies the absence of interaction, note that it implies the corresponding 1-time equation
\be\label{tildepsinonint}
i\frac{\partial \tilde\psi}{\partial t} = \sum_{i=1}^N\bigl(H_i^\free + \tilde{V}_i(t,\vx_i) \Bigl) \tilde\psi
\ee
on $\RRR^{3N}\setminus D$, where $D$ is the set of collision configurations (``the diagonal''). Since the Dirac equation does not allow for point interactions \cite{svendsen} if the dimension of physical space is 3, \eqref{tildepsinonint} holds also on $D$. (More precisely, the only self-adjoint Hamiltonian in $L^2(\RRR^{3N},(\CCC^4)^{\otimes N})$ that agrees with \eqref{tildepsinonint} on smooth functions with compact support away from $D$ is the obvious, non-interacting one.) And a multi-time evolution law can hardly be called interacting if it is non-interacting on the equal-time configurations.\footnote{Even more, one can argue as follows that the only reasonable multi-time evolution obeying \eqref{inconVfree} on $\sS_{\neq}$ is the one obeying \eqref{inconVfree} on $\RRR^{4N}$. Grouping the particles into families with equal time coordinate as in the proof of Theorem~\ref{thm:delta_model} in Section~\ref{sec:proof_delta}, consider the evolution first on the set of configurations with $L=1$ families, then with $L=2$, etc.. As just pointed out, for $L=1$ the only acceptable evolution (i.e., with a self-adjoint Hamiltonian) is non-interacting. Likewise, when considering $L>1$ families, we require the partial Hamiltonian for each family to be self-adjoint, which implies (by the impossibility of point interactions) that it is non-interacting. It then follows that \eqref{inconVfree} holds on all of $\RRR^{4N}$.}

The most general one of our inconsistency theorems is

\begin{thm}\label{thm:inconsistentV4}
Theorems \ref{thm:inconsistentV2} and \ref{thm:inconsistentV3} are still true if the matrices $A_{i,a}$ and $B_i$ are allowed to depend smoothly on $x_i$ (such that still, for each $x_i$, the four matrices $A_{i,1}(x_i)$, $A_{i,2}(x_i)$, $A_{i,3}(x_i)$, $I$ are linearly independent), and if $V_i(x_1,\ldots,x_N)$ is, rather than a real scalar, a self-adjoint $k_i\times k_i$ matrix acting on the index $s_i$; it is understood that also $\tilde{V}_i(x_i)$ is a self-adjoint $k_i\times k_i$ matrix acting on the index $s_i$, while $\theta(x_1,\ldots,x_N)$ is still real.
\end{thm}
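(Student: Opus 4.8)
The plan is to follow the proofs of Theorems~\ref{thm:inconsistentV2} and~\ref{thm:inconsistentV3}, tracking the two new features — coefficient matrices $A_{i,a}(x_i),B_i(x_i)$ depending on $x_i$, and $x_i$-acting matrix-valued potentials $V_i$ — and adding one de~Rham-type step at the end. First I would recompute the consistency condition for $H_i=H_i^\free+V_i$. Since $H_i^\free$ involves only the variables $x_i$ and the $i$-th spin index (the $x_i$-dependence of its coefficients notwithstanding), the commutators $[H_j^\free,H_k^\free]$ still vanish for $j\neq k$; since the $V_i$ are multiplication operators acting, as matrices, on distinct spin factors, $[V_j,V_k]=0$; and a one-line computation gives $[H_j^\free,V_k]=-i\sum_{a=1}^3 A_{j,a}(x_j)\,(\partial_{x_{j,a}}V_k)$ exactly as in the constant-coefficient case (the $x_j$-dependence of $A_{j,a}$ does not enter, because $V_k$ does not differentiate $x_j$). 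Hence \eqref{consistency} is again equivalent to the first-order linear system
\be\label{plan:PDE}
\sum_{a=1}^3 A_{j,a}(x_j)\,\partial_{x_{j,a}}V_k+\partial_{t_j}V_k
=\sum_{a=1}^3 A_{k,a}(x_k)\,\partial_{x_{k,a}}V_j+\partial_{t_k}V_j\qquad(j\neq k),
\ee
now with matrix-valued coefficients and unknowns, imposed on $\RRR^{4N}$ or (for the analog of Theorem~\ref{thm:inconsistentV3}) on $\sS_{\neq}$.

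Second I would draw pointwise consequences from \eqref{plan:PDE}. For each $i$, complete $\{I,A_{i,1}(x_i),A_{i,2}(x_i),A_{i,3}(x_i)\}$ — linearly independent by hypothesis — to a smoothly $x_i$-varying basis of the real vector space of self-adjoint $k_i\times k_i$ matrices, and call the added basis elements the ``remainder directions''. Because $A_{j,a}$ acts on the $j$-th spin factor while $\partial_{x_{j,a}}V_k$ acts on the $k$-th (and the other factors come along trivially), expanding \eqref{plan:PDE} in the corresponding tensor-product basis and matching coefficients yields: (i)~$\partial_{x_j^\mu}V_k$ has no remainder component for all $j\neq k$ and $\mu=0,1,2,3$, so the remainder part $R_k$ of $V_k$ depends on $x_k$ alone; and (ii)~writing $V_k=\sum_{\nu=0}^3 v_k^\nu A_{k,\nu}(x_k)+R_k(x_k)$ with $A_{k,0}:=I$ and (by self-adjointness) real coefficients $v_k^\nu$, the relations
\be\label{plan:rel}
\partial_{x_j^\mu}v_k^\nu=\partial_{x_k^\nu}v_j^\mu\qquad(j\neq k,\ \mu,\nu\in\{0,1,2,3\},\ x_i^0:=t_i)
\ee
hold on the relevant domain. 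For scalar $V_i$ (so $v_i^\nu=0$ for $\nu\neq0$ and $R_i=0$) these collapse to $\partial_{\vx_j}V_k=0$ and $\partial_{t_j}V_k=\partial_{t_k}V_j$, recovering the conclusion used in Theorems~\ref{thm:inconsistentV2}--\ref{thm:inconsistentV3}.

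Third I would build the gauge function. Conditions (i)--(ii) say exactly that the one-form $\Omega:=\sum_{i=1}^N\sum_{\mu=0}^3 v_i^\mu\,dx_i^\mu$ has exterior derivative $d\Omega$ with components only in the single-index directions $dx_i^\mu\wedge dx_i^\nu$, and that each such piece $\omega_i$ is independent of the other particles' coordinates and closed as a two-form on $\RRR^4_{x_i}$, hence exact there; subtracting a sum $\sum_i\eta_i(x_i)\cdot dx_i$ of primitives of the $\omega_i$ makes $\Omega-\sum_i\eta_i\cdot dx_i$ closed, hence exact on $\RRR^{4N}$, and likewise on $\sS_{\neq}$ by the same topological input (vanishing first cohomology and connectedness of coordinate slices) already used in the proof of Theorem~\ref{thm:inconsistentV3}. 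Writing $\Omega-\sum_i\eta_i\cdot dx_i=d\theta$ with $\theta$ real, one obtains $V_i=\sum_{\mu=0}^3(\partial_{x_i^\mu}\theta)\,A_{i,\mu}(x_i)+\tilde V_i(x_i)$ with $\tilde V_i(x_i):=\sum_{\mu=0}^3\eta_i^\mu(x_i)A_{i,\mu}(x_i)+R_i(x_i)$ self-adjoint; and since $e^{i\theta}H_i^\free e^{-i\theta}=H_i^\free-\sum_{a=1}^3 A_{i,a}(x_i)(\partial_{x_{i,a}}\theta)$, a direct substitution shows that $\tilde\phi=e^{i\theta}\phi$ from \eqref{tildephidef} satisfies \eqref{inconVfree}.

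I expect the main obstacle to be the second step: with $x_i$-dependent matrices one must choose the complementary ``remainder'' subspaces smoothly, keep everything self-adjoint so that the $v_i^\nu$ and $\theta$ stay real — this is the only place where self-adjointness of the $A_{i,a}$ is used — and verify that the coefficient matching in \eqref{plan:PDE} genuinely decouples into just (i) and \eqref{plan:rel}, exactly as in the constant scalar case. The topological facts about $\sS_{\neq}$ needed in the third step are a genuine point, but they are inherited without change from the proof of Theorem~\ref{thm:inconsistentV3}.
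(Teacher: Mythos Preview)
Your plan is correct and follows essentially the same route as the paper: compute the commutator (the $x_i$-dependence of $A_{i,a},B_i$ indeed drops out because $[H_j^\free,H_k^\free]=0$ and $[B_j,V_k]=0$ for $j\neq k$), complete to a smooth self-adjoint basis, expand and match coefficients to obtain your (i) and \eqref{plan:rel}, then use a Poincar\'e-lemma argument together with the simple connectedness of $\sS_{\neq}$ (Lemma~\ref{lem:sl_conn}) and the connectedness of the slices (Lemma~\ref{lem:sl_xj_conn}) to produce $\tilde V_i(x_i)$ and the real scalar $\theta$. The only cosmetic difference is that the paper carries out step~3 in coordinates---showing directly that $g_{i,\mu\nu}:=\partial_{x_{i,\nu}}v_i^\mu-\partial_{x_{i,\mu}}v_i^\nu$ depends on $x_i$ alone (your claim that ``each $\omega_i$ is independent of the other particles' coordinates'' needs exactly this short computation, using \eqref{plan:rel} twice), then choosing a primitive $\tilde d_{i,\mu}$ for $g_i$ and proving $v_i^\mu-\tilde d_{i,\mu}$ is a gradient---whereas you package the same steps in de~Rham language via the one-form $\Omega$; the content is identical.
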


The remark after Theorem~\ref{thm:inconsistentV3} applies also to Theorem~\ref{thm:inconsistentV4}.

Theorem~\ref{thm:inconsistentV4} should be regarded as the main result of this paper, and we take it to rule out interaction potentials for covariant multi-time equations. The case of $x_i$-dependent coefficients occurs, for example, when $H_i^\free$ is the free Dirac operator on a curved space-time.

For the proofs of Theorems \ref{thm:inconsistentV3} and \ref{thm:inconsistentV4} we need some auxiliary lemmas about the connectedness of $\sS_{\neq}$ and certain subsets thereof, in particular that $\sS_{\neq} \subset \RRR^{4N}$ is simply connected. These lemmas are stated and proven in Section~\ref{sec:proofs_inconsistentV}, along with the proofs of the theorems from this section.

Finally, the same kind of results can be obtained for Schr\"o\-din\-ger operators instead of Dirac operators:

\begin{thm}\label{thm:inconsistent2ndorder}
Theorems~\ref{thm:inconsistentV1}, \ref{thm:inconsistentV2}, and \ref{thm:inconsistentV4} are also true when $H_i^\free$ is the free Schr\"o\-din\-ger Hamiltonian \eqref{freeSchr} acting on $\vx_i$, or in fact any second-order differential operator,
\be
H_i^\free = \sum_{a,b=1}^3 A_{i,ab}(x_i)\frac{\partial^2}{\partial x_{i,a}\partial x_{i,b}} + \sum_{a=1}^3 B_{i,a}(x_i)\frac{\partial}{\partial x_{i,a}} + C_i(x_i)\,,
\ee
where each of the coefficients $A_{i,ab}(x_i), B_{i,a}(x_i), C_i(x_i)$ is a complex $k_i\times k_i$ matrix acting on the index $s_i$ and depending smoothly on $x_i$, and the $3k_i\times 3k_i$ matrix $\bigl(A_{i,ab}(x_i) \bigr)_{ab}$ has full rank. 

That is, for such $H_i^\free$, for wave functions $\phi:\RRR^{4N}\to \otimes_i \CCC^{k_i}$, and for smooth $V_i(x_1,\ldots,x_N)$ with values in the self-adjoint $k_i\times k_i$ matrices acting on $s_i$, the consistency condition \eqref{consistency} holds on $\RRR^{4N}$ only if  the multi-time evolution \eqref{phiHj} defined by $H_1,\ldots,H_N$ with $H_i=H_i^\free+V_i$ is gauge-equivalent to a non-interacting one, i.e., there are smooth  functions $\theta:\RRR^{4N}\to\RRR$ and $\tilde{V}_i:\RRR^4\to \CCC^{k_i\times k_i}$ such that $\tilde\phi$ given by \eqref{tildephidef} satisfies \eqref{inconVfree} on $\RRR^{4N}$.
\end{thm}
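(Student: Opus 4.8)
The plan is to substitute $H_i=H_i^\free+V_i$ into the consistency condition in the form \eqref{consistency2} and to exploit that $H_i^\free$ is a differential operator in the spatial variables $\vx_i$ of the $i$-th particle alone and acts on the $i$-th spin factor, while $V_i$ is a multiplication operator acting on the $i$-th spin factor. For $j\neq k$ one then has $[H_j^\free,H_k^\free]=0$ and $[V_j,V_k]=0$ (disjoint sets of variables, different spin factors), and $\partial H_k^\free/\partial t_j=0=\partial H_j^\free/\partial t_k$ since $H_k^\free$ does not depend on $t_j$; hence \eqref{consistency2} collapses to
\be\label{Wjk}
[H_j^\free,V_k]-i\frac{\partial V_k}{\partial t_j}=[H_k^\free,V_j]-i\frac{\partial V_j}{\partial t_k}\qquad\forall\,j\neq k\,,
\ee
which I would read as an identity between differential operators in the $3N$ spatial variables (with coefficients depending on all $4N$ variables).

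The heart of the argument is to compare, in \eqref{Wjk}, the parts that involve $\vx_j$-derivatives. Using $[\partial_{x_{j,a}}\partial_{x_{j,b}},V_k]=(\partial_{x_{j,a}}\partial_{x_{j,b}}V_k)+(\partial_{x_{j,b}}V_k)\partial_{x_{j,a}}+(\partial_{x_{j,a}}V_k)\partial_{x_{j,b}}$, the left side of \eqref{Wjk} contains $\vx_j$-derivatives, the coefficient of $\partial_{x_{j,a}}$ being $2\sum_b A_{j,ab}(x_j)(\partial_{x_{j,b}}V_k)$ (we may take $A_{j,ab}$ symmetric in $a,b$, as only the symmetric part enters $H_j^\free$ and the rank hypothesis). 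The right side of \eqref{Wjk} is a differential operator in $\vx_k$ only and carries no $\vx_j$-derivatives at all. Matching coefficients of $\partial_{x_{j,a}}$ therefore forces $\sum_b A_{j,ab}(x_j)\otimes(\partial_{x_{j,b}}V_k)=0$ for every $a$, where $A_{j,ab}$ acts on the $j$-th spin factor and $\partial_{x_{j,b}}V_k$ on the $k$-th. Applying an arbitrary linear functional to the $k$-th tensor slot and using that the $3k_j\times 3k_j$ matrix $(A_{j,ab}(x_j))$ is invertible, one concludes $\partial_{x_{j,b}}V_k=0$ for all $b$; i.e.\ $V_k$ does not depend on $\vx_j$ for any $j\neq k$. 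Consequently $[H_j^\free,V_k]=0$ outright, and \eqref{Wjk} reduces to $\partial V_k/\partial t_j=\partial V_j/\partial t_k$.

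Next, since $\partial V_k/\partial t_j$ acts on the $k$-th spin factor while $\partial V_j/\partial t_k$ acts on the $j$-th, and $j\neq k$, the identity $\partial V_k/\partial t_j=\partial V_j/\partial t_k$ forces both sides to be scalar multiples of the identity, $\partial V_k/\partial t_j=\lambda_{jk}I$ and $\partial V_j/\partial t_k=\lambda_{jk}I$ with $\lambda_{jk}=\lambda_{kj}$ real (self-adjointness of $V_i$). Since $\lambda_{jk}$ equals $\partial V_k/\partial t_j$, which depends only on $(t_1,\ldots,t_N,\vx_k)$ by the previous step, as well as $\partial V_j/\partial t_k$, which depends only on $(t_1,\ldots,t_N,\vx_j)$, the function $\lambda_{jk}$ depends only on $(t_1,\ldots,t_N)$. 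I would then solve the potential problem of finding a smooth real $\theta(t_1,\ldots,t_N)$ with $\partial^2\theta/\partial t_j\partial t_k=\lambda_{jk}$ for all $j\neq k$, in two integrations: first find $g_j(t_1,\ldots,t_N)$ with $\partial g_j/\partial t_k=\lambda_{jk}$ for $k\neq j$, whose integrability condition $\partial\lambda_{jk}/\partial t_l=\partial\lambda_{jl}/\partial t_k$ holds because both sides equal $\partial^2 V_j/\partial t_k\partial t_l$; then find $\theta$ with $\partial\theta/\partial t_j=g_j$, whose integrability condition $\partial g_j/\partial t_k=\partial g_k/\partial t_j$ is just $\lambda_{jk}=\lambda_{kj}$. (On $\RRR^N$ these are globally solvable; for the version on $\sS_{\neq}$ one proceeds as in the proofs of Theorems~\ref{thm:inconsistentV3} and \ref{thm:inconsistentV4}, using the connectedness and simple-connectedness of $\sS_{\neq}$ established in Section~\ref{sec:proofs_inconsistentV}.) Because $\theta$ is independent of all spatial variables, $e^{i\theta}H_i^\free e^{-i\theta}=H_i^\free$, and since $\theta$ is scalar it commutes with $V_i$; hence $\tilde\phi=e^{i\theta}\phi$ satisfies \eqref{inconVfree} with $\tilde V_i:=V_i-\partial\theta/\partial t_i=V_i-g_i$, which is self-adjoint, and which depends only on $x_i$ because $\partial\tilde V_i/\partial t_k=\lambda_{ik}I-\lambda_{ik}I=0$ for $k\neq i$ while $V_i$ depends only on $(t_1,\ldots,t_N,\vx_i)$ and $g_i$ only on $(t_1,\ldots,t_N)$.

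I expect the main obstacle to be the extraction of $\partial_{x_{j,b}}V_k=0$ from \eqref{Wjk}: it requires carefully separating the second-order operator $[H_j^\free,V_k]$ into its differential orders and its dependence on the different groups of variables, and then invoking the full-rank hypothesis on $(A_{j,ab})$ in exactly the right way — this is where the bookkeeping differs from the first-order case of Theorems~\ref{thm:inconsistentV1}--\ref{thm:inconsistentV4}. A secondary difficulty is the global solvability of the potential problem for $\theta$ on $\sS_{\neq}$, which rests on the topological lemmas for $\sS_{\neq}$.
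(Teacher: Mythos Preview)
Your proposal is correct and follows essentially the same approach as the paper: compute the commutator, separate it by the order of the spatial derivatives it contains, use the full-rank hypothesis on $(A_{j,ab})$ to deduce $\partial V_k/\partial x_{j,b}=0$, reduce the zeroth-order part to $\partial V_j/\partial t_k=\partial V_k/\partial t_j$, and then integrate to a scalar $\theta$ depending only on the time variables.

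The only noticeable difference is in how $\theta$ and $\tilde V_i$ are produced. The paper first defines $\tilde V_j(x_j)$ by freezing the other time variables at $0$, shows that $W_j:=V_j-\tilde V_j$ is independent of $\vx_j$ (hence a function of the times alone), and then integrates $W_j=\partial\theta/\partial t_j$; scalarity of $\theta$ is argued at the very end from the fact that it would otherwise have to act simultaneously on every spin index. You instead observe right away that $\partial V_k/\partial t_j$ and $\partial V_j/\partial t_k$ act on different spin factors and must therefore be scalar, extract the real functions $\lambda_{jk}(t_1,\dots,t_N)$, and build $\theta$ by a two-step Poincar\'e-lemma integration. Your route is a little more direct in the matrix-valued setting and makes the scalarity of $\theta$ transparent from the start; the paper's route has the advantage of literally reusing the computations from the proof of Theorem~\ref{thm:inconsistentV2}. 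Both yield the same $\tilde V_i$ up to the evident freedom in $\theta$.
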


Here, $\CCC^{k\times k}$ denotes the space of complex $k\times k$ matrices.

\section{Result on Consistency of Potentials with Cut-Off Length $\delta$}
\label{sec:deltarange}
\label{sec:delta_model}

In this section we describe a consistent multi-time theory on the set $\sS_\delta$ of $\delta$-spacelike configurations as in \eqref{sSdeltadef} with an interaction pair potential with range $\delta$. We use the  notation $q = (\vect{x}_1, \dots, \vect{x}_N) \in \RRR^{3N}$ with each $\vect{x}_k \in \RRR^3$ and similarly $q^4 = (x_1, \dots, x_N) \in \RRR^{4N}$ with $x_k \in \RRR^4$.

Let us first turn to the way in which the particles are grouped into families. 
Any configuration $q^4\in\sS_\delta$ defines a partition $P_{q^4}$ of the set $\{1,\ldots,N\}$ of all particles by the equivalence classes of the relation that holds between $j$ and $k$ if and only if $t_j=t_k$. A \emph{partition} $P$ of $\{1,\ldots,N\}$ is a set $P = \{ S_1,\dots,S_L \}$ of non-empty subsets $S_\alpha$ of $\{1,\ldots,N\}$ with $\cup_{\alpha=1}^L S_{\alpha} = \{1,\ldots, N\}$ and $S_{\alpha} \cap S_{\beta} = \emptyset$ for $\alpha \neq \beta$. For every partition $P$ we define
\begin{align}
\sS_{\delta,P} = \Bigl\{ q^4 \in \RRR^{4N}: ~&(1)~ \forall \alpha = 1,\dots,L~\forall i,j \in S_\alpha : t_i = t_j \nonumber \\ &(2)~ \forall \alpha \neq \beta~\forall i \in S_\alpha, j \in S_{\beta}: \|\vx_i-\vx_j\| > |t_i-t_j| + \delta \Bigr\}.
\end{align}
Figure~\ref{figure:std_partition} shows a configuration in  $\sS_{\delta,\{S_1,S_2\}}$ with $S_1 = \{ 1,2,3 \}$ and $S_2 = \{ 4,5 \}$. 
Let
\be
\mathcal{P}_N = \Bigl\{ \text{partitions}~ P ~\text{of}~ \{1\ldots N\} \Bigr\}
\ee
and note that\footnote{For most $q^4\in\sS_{\delta,P}$, $P_{q^4}=P$, but for some $q^4\in\sS_{\delta,P}$, $P_{q^4}$ is coarser than $P$; viz., for those $q^4$ for which particles from $S_\alpha$ happen to have the same time coordinate as particles from $S_\beta$, $\beta\neq\alpha$. As a consequence, $\sS_{\delta,P'}$ need not be disjoint from $\sS_{\delta,P}$ for $P'\neq P$. In fact, $P_{q^4}$ is the coarsest partition $P$ such that $q^4\in\sS_{\delta,P}$. For any $q^4\in\sS_{\delta}$, let $FP_{q^4}$ be the partition of $\{1,\ldots,N\}$ formed by the equivalence classes of the transitive hull of the relation that holds between $j$ and $k$ if and only if $\|\vx_j-\vx_k\|\leq |t_j-t_k|+\delta$. Then $FP_{q^4}$ is the finest partition $P$ such that $q^4\in\sS_{\delta,P}$. Moreover, $q^4\in\sS_{\delta,P}$ for exactly those partitions $P$ that are coarser than (or equal to) $FP_{q^4}$ and finer than (or equal to) $P_{q^4}$.\label{fn:partition}}
\be
\sS_{\delta} = \bigcup_{P \in \mathcal{P}_N} \sS_{\delta,P}.
\ee

If $q^4\in \sS_{\delta,P}$ then we also write $t_\alpha$ for the joint time variable of all particles in $S_\alpha$, and $q_\alpha$ for the list of space coordinates of all particles belonging to $S_\alpha$; using that notation, we also write $q^4 = (t_1,q_1;\dots;t_L,q_L)$, so that $\sS_{\delta,P}$ can also be regarded as an open subset of $\RRR^{3N+L}$ (while we still sometimes find the notation $q^4=(x_1,\ldots,x_N)$ convenient). (The amibiguity as to whether the symbol $t_1$ refers to the family $\alpha=1$ or to the particle $j=1$ should not cause any practical difficulty.) We say that a function $\phi$ is \emph{smooth on $\sS_{\delta,P}$} if it is smooth as a function of the variables $t_1,q_1,\ldots,t_L,q_L$; we say that $\phi$ is \emph{smooth on $\sS_\delta$} if it is smooth on each $\sS_{\delta,P}$.

A potential $V$ is a pair potential if it can be written as
\begin{equation}
V = \sum_{\substack{i,j=1 \\ i \neq j}}^N W(\vect{x}_i - \vect{x}_j).
\end{equation}
We say that a pair potential has \emph{range $\delta$} if and only if, for all $\vx\in\RRR^3$,
\begin{equation}
W(\vx) = 0 \quad \text{for} \quad \|\vx\| \geq \delta.
\end{equation}

\begin{thm}\label{thm:delta_model}
Consider $N$ particles and a smooth pair potential with range $\delta>0$. Then the multi-time Dirac equation is consistent on the set \eqref{sSdeltadef} of $\delta$-spacelike configurations.

In more detail, let $W \in C^{\infty}(\RRR^{3}, \CCC^{4 \times 4})$ be a pair potential with range $\delta$. Then for all initial conditions $\phi_0\in C^{\infty}\left( \RRR^{3N}, (\CCC^4)^{\otimes N} \right)$ there is a unique wave function $\phi\in C^\infty\bigl( \sS_{\delta}, (\CCC^4)^{\otimes N} \bigr)$ with $\phi|_{t_1=\ldots=t_N=0} = \phi_0$ which satisfies on $\sS_{\delta,P}$ for every partition $P = \{ S_1,\dots,S_L \}$ the equations
\be\label{multi_time_Dirac_equation_N}
i\frac{\partial}{\partial t_{\alpha}} \phi(t_1, q_{1};\dots;t_L,q_L) =
\Biggl( \sum_{j \in S_{\alpha}} H_j^\free
 + \sum_{\substack{i,j \in S_{\alpha} \\ i \neq j}} W(\vect{x}_i - \vect{x}_j) \Biggr) \phi(t_1, q_{1};\dots;t_L,q_L)
\ee
for all $\alpha = 1,\dots,L$ and with $t_{\alpha} = x^0_i$ for all $i \in S_{\alpha}$; here, $H_j^\free$ is the free Dirac operator $-i \valpha_j \cdot \nabla_j + \beta_j m$.
\end{thm}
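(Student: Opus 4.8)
\medskip

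\noindent\textbf{Proof proposal.}

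The plan is to construct $\phi$ by an explicit recipe that evolves the families one time-slice at a time, to verify \eqref{multi_time_Dirac_equation_N} for every partition using finite propagation speed, and to obtain uniqueness by pinning $\phi$ down on the diagonal and propagating. Two facts do the work. First, for a fixed partition $P=\{S_1,\dots,S_L\}$ the family Hamiltonians
\[
\mathcal H_\alpha^P=\sum_{j\in S_\alpha}H_j^\free+\sum_{\substack{i,j\in S_\alpha\\ i\ne j}}W(\vx_i-\vx_j)
\]
act on pairwise disjoint blocks of particle variables and hence commute, so that on $\sS_{\delta,P}$ the $L$ equations \eqref{multi_time_Dirac_equation_N} are mutually consistent --- this is the differential-operator analogue of Theorems~\ref{thm:SchrN} and~\ref{thm:SchrN_t}, valid locally and on open subsets as discussed in Section~\ref{sec:consistency}. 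Second, every partial Hamiltonian occurring is a free Dirac operator plus a bounded smooth multiplication operator, hence a first-order symmetric hyperbolic system: it generates a strongly continuous group on $L^2$, preserves $C^\infty$, and propagates signals at speed at most $1$, so that advancing the time of a particle by $s$ moves its domain of dependence by at most $s$. (Smooth but non-$L^2$ data $\phi_0$ are handled by the standard cut-off argument, legitimate because the value of a solution at a given configuration depends only on $\phi_0$ restricted to a fixed bounded set.)

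\emph{Construction.} Given $q^4\in\sS_\delta$, let $\{S_1,\dots,S_L\}=FP_{q^4}$ be the finest admissible partition (footnote~\ref{fn:partition}) with common family times $\tau_1\le\dots\le\tau_L$, and put $\tau_0=0$. Starting from $\phi_0$, evolve successively: on $[\tau_{k-1},\tau_k]$ evolve the still-active families (those $S_\ell$ with $\tau_\ell\ge\tau_k$) jointly by the Hamiltonian $\sum_j H_j^\free+\sum_{i\ne j}W(\vx_i-\vx_j)$, the sums running over the currently active particles and the identity acting on the frozen ones; at time $\tau_k$ freeze every family whose time has been reached (when several families share a time, freeze them one at a time, in zero-duration stages); after the last interval, evaluate at $(\vx_1,\dots,\vx_N)$ to obtain $\phi(q^4)$. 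Families frozen at a common time have commuting Hamiltonians, so the order of freezing does not matter and $\phi$ is well defined; being a composition of $C^\infty$-preserving evolutions applied to a smooth datum, $\phi$ is smooth on each $\sS_{\delta,P}$, and on the diagonal $\{t_1=\dots=t_N=0\}$ it equals $\phi_0$.

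\emph{The equations.} Fix a partition $P$ and one of its families $S_\alpha$; the variable $t_\alpha$ enters the product above only as the endpoint of one evolution interval and the start of the next. Differentiating in $t_\alpha$ brings down, besides $\mathcal H_\alpha^P$, exactly the cross term $\sum_{i\in S_\alpha,\,j\notin S_\alpha}\bigl(W(\vx_i-\vx_j)+W(\vx_j-\vx_i)\bigr)$, sandwiched between the remaining evolutions; one must show it contributes $0$ after those later evolutions act and the configuration is inserted. This is where \eqref{sSdeltadef} and the range-$\delta$ hypothesis are used: once $S_\alpha$ is frozen, a particle $j\notin S_\alpha$ evolves for a total time $\le|t_j-t_i|$ (with $i\in S_\alpha$), so by finite propagation speed $W(\vx_i-\,\cdot\,)$ is only sampled at points $\vx_j'$ with $\|\vx_j'-\vx_j\|\le|t_j-t_i|$; since $\|\vx_i-\vx_j\|>|t_i-t_j|+\delta$ on $\sS_\delta$, such $\vx_j'$ satisfy $\|\vx_i-\vx_j'\|>\delta$, whence $W$ vanishes there and the cross term drops out. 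As $\mathcal H_\alpha^P$ commutes with the later evolutions (disjoint variables), this gives $i\,\partial\phi/\partial t_\alpha=\mathcal H_\alpha^P\phi$, which is \eqref{multi_time_Dirac_equation_N}. The same estimate, applied on the thin subset where two families of a finer partition share a time, shows the coarser equation there is the sum of the finer ones (the cross potentials vanishing outright), so consistency among all partitions is automatic.

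\emph{Uniqueness, and where the work lies.} If $\phi,\phi'$ both solve \eqref{multi_time_Dirac_equation_N} on $\sS_\delta$ with the same $\phi_0$, then on the full diagonal $\{t_1=\dots=t_N\}$ the one-block equation plus uniqueness of the single-time Cauchy problem forces $\phi=\phi'$; since the finest-partition equations determine a solution on a neighborhood of any point from its Cauchy data on an equal-times slice, and every point of the connected set $\sS_\delta$ is linked to the diagonal through such local Cauchy problems --- a structural fact in the spirit of the connectedness lemmas used for Theorems~\ref{thm:inconsistentV3}--\ref{thm:inconsistentV4} --- one concludes $\phi\equiv\phi'$. I expect the real difficulty to be not any single estimate but the global bookkeeping: making the ``freeze in time order'' recipe unambiguous and, above all, \emph{smooth} across the strata $\sS_{\delta,P}$ as $FP_{q^4}$ jumps (particles crossing the distance-$\delta$ threshold, or times coinciding), and arranging the domain-of-dependence estimate to cover every pair $(P,S_\alpha)$ uniformly --- the inequality ``$>|t_i^0-t_j^0|+\delta$'' in \eqref{sSdeltadef}, rather than merely ``$>\delta$'', being precisely what that estimate consumes.
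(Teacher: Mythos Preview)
Your construction is the same as the paper's: the paper organizes it as an induction on the number $L$ of families, at step $L$ evolving only the last family $S_L$ from time $T_{L-1}$ to $T_L$ with data taken from $\sS_{\delta,\tilde P}$ (where $\tilde P$ merges $S_{L-1}$ and $S_L$); unwinding that recursion gives exactly your ``freeze in time order'' recipe. Lemma~\ref{finite_propagation_speed_N_part_dirac} supplies the domain-of-dependence statement you invoke.

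The verification of \eqref{multi_time_Dirac_equation_N} is where the two arguments diverge. You differentiate the multi-stage product directly and then use finite propagation speed once more to annihilate the cross term $\sum_{i\in S_\alpha,\,j\notin S_\alpha}W(\vx_i-\vx_j)$; your estimate $\|\vx_i-\vx_j'\|\ge\|\vx_i-\vx_j\|-|t_j-t_\alpha|>\delta$ is correct and is exactly what the ``$+\delta$'' in \eqref{sSdeltadef} buys. The paper instead observes that on $\sS_{\delta,P}$ the operators $i\partial_{t_\alpha}-H_{S_\alpha}$ and $i\partial_{t_L}-H_{S_L}$ commute (disjoint variables), so $\Phi'_\alpha:=(i\partial_{t_\alpha}-H_{S_\alpha})\Phi$ solves the $t_L$-equation with vanishing data on $\sS_{\delta,\tilde P}$ (by the induction hypothesis), hence is identically zero by Lemma~\ref{finite_propagation_speed_N_part_dirac}. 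That argument never needs to re-enter the propagation estimate to kill cross potentials; the commutator vanishes for structural reasons.

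The payoff of the paper's organization is precisely in the ``global bookkeeping'' you flag as the hard part. By defining $\Phi$ on each $\sS_{\delta,P}$ using the \emph{fixed} partition $P$ rather than the point-dependent $FP_{q^4}$, the paper avoids the jumps of the finest partition altogether; smoothness on $\sS_{\delta,P}$ then comes from Lemma~\ref{lemma_chernoff} applied with the other family coordinates treated as parameters, and the compatibility on overlaps $\sS_{\delta,P}\cap\sS_{\delta,P'}$ is checked separately (using uniqueness for the single-time problem). Your route is sound, but carrying it through rigorously would in effect reproduce this inductive stratification; the paper's version makes the degenerate cases (coinciding family times, coarser versus finer partitions) routine rather than delicate.
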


The proof is given in Section~\ref{sec:proof_delta}. Because of the shape of $\sS_\delta$, Theorems~\ref{thm:SchrN} and \ref{thm:SchrN_t} cannot be applied directly, and the check of consistency requires some care.

\section{Proofs of Results on Consistency Condition}\label{sec:proofs_consistency}

\begin{proof}[Proof of Theorem~\ref{thm:SchrN}]
For ease of notation let us set $N=2$ first. Let $H_1,H_2$ be given as self-adjoint operators on a Hilbert space $\Hilbert$ that do not depend on $t_1,t_2$. 
By the definition \eqref{strongdef} of ``strong solution,'' any strong solution $\phi$ of the multi-time equations \eqref{phiHj} must satisfy (cf.\ also Figure~\ref{fig:twopaths})
\be\label{eiH12phi}
e^{-iH_2t_2}e^{-iH_1t_1}\phi(0,0)=\phi(t_1,t_2) = e^{-iH_1t_1}e^{-iH_2t_2}\phi(0,0)\,,
\ee
and if a solution exists for every $\phi(0,0)\in\Hilbert$ then $e^{-iH_1t_1}$ must commute with $e^{-iH_2t_2}$, which requires that $[H_1,H_2]=0$ in the spectral sense. Conversely, $[H_1,H_2]=0$ in the spectral sense implies that $e^{-iH_1t_1}$ commutes with $e^{-iH_2t_2}$ for all $t_1$ and $t_2$, and then
\be
\phi(t_1,t_2):=e^{-iH_1t_1}e^{-iH_2t_2}\phi(0,0)
\ee
satisfies \eqref{strongdef}. For $N>2$ variables, one sees in the same way that it is necessary and sufficient for the existence of a strong solution that all $H_1,\ldots,H_N$ commute with each other, which proves Theorem~\ref{thm:SchrN}.
\end{proof}

Before proving Theorem~\ref{thm:SchrN_t}, we recall the Dyson expansion for the time evolution in the one-time case. We want to solve the Schr\"o\-din\-ger equation
\be\label{SchrHt}
i \frac{d \psi(t)}{d t} = H(t) \psi(t)
\ee
for arbitrary initial condition $\psi(s) \in \Hilbert$. It is known \cite[Theorem X.69]{reedsimon:1975} that for a strongly continuous function $t \mapsto H(t)$ into the bounded operators on $\Hilbert$ there is a unique operator $U(t,s)$ for every $s,t\in\RRR$ such that $U(s,s)=I$ and, for any $\psi(s)\in\Hilbert$, $t\mapsto\psi(t)=U(t,s)\psi(s)$ satisfies \eqref{SchrHt} with the derivative $d/dt$ understood as the limit in the Hilbert space topology of the difference quotient; $U(t,s)$ is given by a time-ordered exponential, the Dyson series
\begin{align}
U(t,s) &= \mathcal{T} e^{-i\int_s^t H(T) dT} \nonumber \\
&= I + \sum_{n=1}^{\infty} (-i)^n \int_{s}^{t} dT_1 \int_{s}^{T_1} dT_2 \cdots \int_{s}^{T_{n-1}} dT_n ~ H(T_1) H(T_2) \dots H(T_n),
\end{align}
which converges in operator norm (because it actually follows that $H(\cdot)$ is uniformly bounded on $[s,t]$), and satisfies, by the triangle inequality,
\be\label{Uineq1}
\|U(t,s)-I\| \leq \exp\Bigl( |t-s|\, \sup_{r\in[s,t]} \|H(r)\| \Bigr)-1 
\ee
and
\be
\label{Uineq2}
\|U(t,s)\| \leq \exp\Bigl( |t-s|\, \sup_{r\in[s,t]} \|H(r)\| \Bigr)\,.
\ee
We note further that always
\be\label{Texpcomposition}
U(t,s)U(s,r)=U(t,r)\,,
\ee
in particular
\be\label{Uinv}
U(s,t)U(t,s)=I\,,
\ee
so $U(t,s)$ is invertible. For self-adjoint $H(t)$ the operator $U(t,s)$ is unitary. It is perhaps useful to mention that the intuitive, or heuristic, meaning of the time-ordered exponential is a continuous product
\be\label{continuousproduct}
 \mathcal{T} e^{-i\int_s^t H(T) dT} \text{~``=''~} \prod_{T=s}^t \Bigl( I-i H(T)\, dT\Bigr)
\ee
in which the factors (which do not necessarily commute) are ordered so that $T$ increases from right to left.

We offer two proofs for Theorem~\ref{thm:SchrN_t} because we find both of them instructive.

\begin{proof}[First proof of Theorem~\ref{thm:SchrN_t}]
For ease of notation we formulate the proof for $N=2$, although the arguments apply to any $N$. For the multi-time equations \eqref{phiHj} for $\phi:\RRR^2\to\Hilbert$, relevant time evolution operators are given by time-ordered exponentials. As a consequence of \eqref{phiHj} for $j=1$, the time evolution in the first time variable is given by
\be\label{phiU1}
\phi(t_1,t_2) = U(t_1,s_1;t_2) \phi(s_1,t_2)
\ee
with
\begin{align}
&U(t_1,s_1;t_2) = \mathcal{T}_1 e^{-i\int_{s_1}^{t_1} H_1(T,t_2) dT} \nonumber \\
&= I + \sum_{n=1}^{\infty} (-i)^n \int_{s_1}^{t_1} dT_1 \int_{s_1}^{T_1} dT_2 \cdots \int_{s_1}^{T_{n-1}} dT_n ~ H_1(T_1,t_2) H_1(T_2,t_2) \dots H_1(T_n,t_2)\,.\label{Dyson1}
\end{align}
Since $(t_1,t_2)\mapsto H(t_1,t_2)$ is assumed to be smooth (with respect to the operator norm), it is in particular norm-continuous and thus uniformly bounded on $[s_1,t_1]\times \{t_2\}$ (or any other compact set in $\RRR^2$), so that the Dyson series converges (in operator norm). 
Similarly, if we keep the first time variable fixed, we find
\be\label{phiU2}
\phi(t_1,t_2) = U(t_1;t_2,s_2) \phi(t_1,s_2) = \mathcal{T}_2 e^{-i\int_{s_2}^{t_2} H_2(t_1,T) dT} \phi(t_1,s_2).
\ee
We have thus obtained the time evolution for vertical and horizontal line segments in $\RRR^2$. For any path $\gamma:[0,1]\to\RRR^N$ that is a concatenation of such segments, it follows further that
\be
\phi(\gamma(1))=U_\gamma\phi(\gamma(0))
\ee
with $U_\gamma$ the product of the time evolution operators for each segment, ordered from right to left as the segments are run through by $\gamma$.

As a consequence of \eqref{phiU1} and \eqref{phiU2}, every solution $\phi:\RRR^2\to\Hilbert$ of \eqref{phiHj} satisfies (in analogy to \eqref{eiH12phi}, cf.\ also Figure~\ref{fig:twopaths})
\be\label{U1U2phi}
U(t_1,s_1;t_2)U(s_1;t_2,s_2)\phi(s_1,s_2)=\phi(t_1,t_2)=U(t_1;t_2,s_2)U(t_1,s_1;s_2)\phi(s_1,s_2)
\ee
for all $t_1,t_2,s_1,s_2\in\RRR$.
Now fix $s_1,s_2\in\RRR$. Consistency, i.e., the existence of a solution for every $\phi(0,0)\in\Hilbert$, is equivalent to the existence of a solution for every $\phi(s_1,s_2)\in\Hilbert$. Indeed, assuming consistency, then for any given $\tilde\phi(s_1,s_2)\in\Hilbert$, there exists a solution with $\phi(0,0)=U(0,s_1;0)U(s_1;0,s_2)\tilde\phi(s_1,s_2)$, and by \eqref{U1U2phi} and \eqref{Uinv} it will have
\begin{align}
\phi(s_1,s_2)
&=U(s_1;s_2,0)U(s_1,0;0)\phi(0,0)\\
&=U(s_1;s_2,0)U(s_1,0;0)U(0,s_1;0)U(s_1;0,s_2)\tilde\phi(s_1,s_2)\\
&=\tilde\phi(s_1,s_2)\,.
\end{align}
Conversely, if there exists a solution for any choice of $\phi(s_1,s_2)$, then choose $\phi(s_1,s_2)=U(s_1;s_2,0)U(s_1,0;0)\tilde\phi(0,0)$, and by \eqref{U1U2phi} and \eqref{Uinv} the solution will have
\begin{align}
\phi(0,0)
&=U(0,s_1;0)U(s_1;0,s_2)\phi(s_1,s_2)\\
&=U(0,s_1;0)U(s_1;0,s_2)U(s_1;s_2,0)U(s_1,0;0)\tilde\phi(0,0)\\
&=\tilde\phi(0,0)\,,
\end{align}
so consistency follows.

As a consequence, consistency is equivalent to
\be\label{Utwopathsrect}
U(t_1,s_1;t_2)U(s_1;t_2,s_2)=U(t_1;t_2,s_2)U(t_1,s_1;s_2) \quad\forall t_1,t_2,s_1,s_2\in\RRR\,.
\ee
Put differently, the condition is that $U_\gamma=I$ for any path $\gamma$ around an axiparallel rectangle.

We now show that \eqref{Utwopathsrect} implies \eqref{consistency}. Rename $s_j\to t_j$, $t_j\to t_j+\Delta t$ with $\Delta t>0$. (Our arguments will also go through for negative $\Delta t$, but the presentation is simplified by assuming it is positive.) We write $O(\Delta t^n)$ for any operator $R$ (that may depend on $\Delta t$ and other things) such that there is a constant $0<M<\infty$ (independent of $\Delta t$ or other things) with $\|R\|\leq M\Delta t^n$ whenever $\Delta t$ is sufficiently small. We write $o(\Delta t^n)$ for any operator $R$ such that $\|R\|\leq f(\Delta t)$ for some function $f$ with $\lim_{\Delta t\to 0}\Delta t^{-n}f(\Delta t)=0$. Clearly, any $O(\Delta t^{n+1})$ is an $o(\Delta t^n)$. For any $(s_1,s_2)\in [t_1,t_1+\Delta t]\times [t_2,t_2+\Delta t]$, we can write
\be\label{Hjexpand}
H_j(s_1,s_2) = H_j(t_1,t_2) + \sum_{k=1}^2(s_k-t_k)\frac{\partial H_j}{\partial t_k}(t_1,t_2) + o(\Delta t)
\ee
because $H_j$ was assumed to be a differentiable function of $(t_1,t_2)$.
We can write the Dyson series \eqref{Dyson1} in the form
\begin{align}
U(t_1+\Delta t,t_1;t_2) &= I -i \int_{t_1}^{t_1+\Delta t} dT_1~ H_1(T_1,t_2) \nonumber\\
&\quad - \int_{t_1}^{t_1+\Delta t} dT_1 \int_{t_1}^{T_1}dT_2~H_1(T_1,t_2)H_1(T_2,t_2) + O(\Delta t^3)\,.\label{Dyson2}
\end{align}
Indeed, the remainder $R$ comprising all terms of order $n\geq 3$ in the Dyson series satisfies, in analogy to \eqref{Uineq1}, $\|R\| \leq f\bigl( M\Delta t  \bigr)$ with $f(x)=e^x-1-x-\frac{1}{2}x^2$, $\Delta t\leq 1$, and
\be
M=\sup \Bigl\{ \|H_j(s_1,s_2)\|: j=1,2, \: s_1\in[t_1,t_1+1], s_2\in[t_2,t_2+1] \Bigr\} < \infty.
\ee
Since $f(x)\leq x^3$ for sufficiently small positive $x$, $R$ is an $O(\Delta t^3)$ and an $o(\Delta t^2)$.

Plugging \eqref{Hjexpand} into \eqref{Dyson2}, we obtain that
\begin{align}
U(t_1+\Delta t,t_1;t_2)
&=I -i H_1(t_1,t_2)\Delta t - \frac{i}{2} \frac{\partial H_1}{\partial t_1}(t_1,t_2)\,\Delta t^2 + o(\Delta t^2)\nonumber\\
&\quad - \frac{1}{2} H_1(t_1,t_2)^2 \, \Delta t^2 + o(\Delta t^2)\,.\label{UDeltat1}
\end{align}
Likewise, abbreviating $H_j(t_1,t_2)$ by $H_j$ and $\frac{\partial H_j}{\partial t_k}(t_1,t_2)$ by $\frac{\partial H_j}{\partial t_k}$,
\begin{align}
U(t_1+\Delta t,t_1;t_2+\Delta t)
&=I -i H_1\Delta t - i \frac{\partial H_1}{\partial t_2} \, \Delta t^2- \frac{i}{2} \frac{\partial H_1}{\partial t_1}\,\Delta t^2 - \frac{1}{2} H_1^2 \, \Delta t^2 + o(\Delta t^2)\label{UDeltat2}\\
U(t_1;t_2+\Delta t,t_2)
&=I -i H_2\Delta t - \frac{i}{2} \frac{\partial H_2}{\partial t_2}\,\Delta t^2 - \frac{1}{2} H_2^2 \, \Delta t^2 + o(\Delta t^2)\label{UDeltat3}\\
U(t_1+\Delta t;t_2+\Delta t,t_2)
&=I -i H_2\Delta t - i \frac{\partial H_2}{\partial t_1} \, \Delta t^2- \frac{i}{2} \frac{\partial H_2}{\partial t_2}\,\Delta t^2 - \frac{1}{2} H_2^2 \, \Delta t^2 + o(\Delta t^2)\,.\label{UDeltat4}
\end{align}
A simple calculation shows that the difference between the left and the right-hand side of \eqref{Utwopathsrect} can be expressed as
\begin{align}
0&=U(t_1+\Delta t,t_1;t_2+\Delta t)U(t_1;t_2+\Delta t,t_2)-U(t_1+\Delta t;t_2+\Delta t,t_2)U(t_1+\Delta t,t_1;t_2)\nonumber\\
&= \biggl( -[H_1,H_2] -i\frac{\partial H_1}{\partial t_2} + i \frac{\partial H_2}{\partial t_1} \biggr) \Delta t^2 + o(\Delta t^2)\,.\label{Udiff}
\end{align}
Therefore, the bracket in front of $\Delta t^2$ must vanish. (After all, the equation is true for every sufficiently small $\Delta t$; if the bracket did not vanish, then for small enough $\Delta t$ the $o(\Delta t^2)$ would be too small to cancel it.) This proves \eqref{consistency2} or, equivalently, \eqref{consistency}. 

Conversely, suppose that the bracket vanishes at all points in an axiparallel rectangle, which we can take to be $[0,t_1]\times [0,t_2]$. Subdivide the rectangle into small squares of side length $\Delta t$, and for each square, consider the paths shown in Figure~\ref{fig:aroundsquare}; call them $\gamma$ and $\gamma'$, and let the square be $[s_1,s_1+\Delta t]\times [s_2,s_2+\Delta t]$.

\begin{figure}[ht]
\centering
\includegraphics[width=200pt,keepaspectratio]{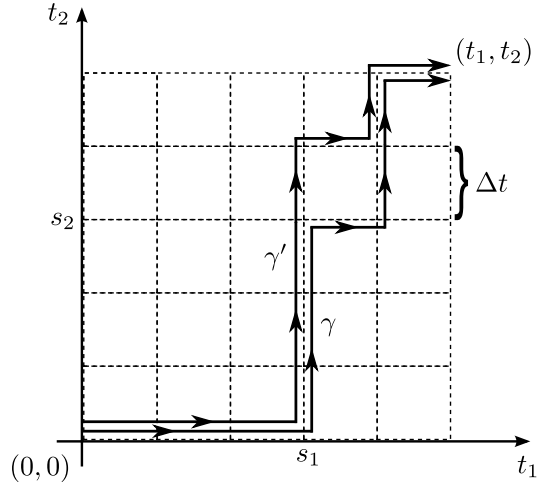}
\centering
\caption{\small{Two paths from $(0,0)$ to $(t_1,t_2)$ that pass along different sides of a small square. The paths actually lie \emph{on} dashed lines and are drawn next to them only for better visibility.}}
\label{fig:aroundsquare}
\end{figure}

We know from the equality of the two right-hand sides in \eqref{Udiff} that, when the bracket vanishes, the southeast and the westnorth way around a square differ by $o(\Delta t^2)$; thus, $U_{\gamma'}-U_{\gamma}$ is of the form $U(t_1,s_1+\Delta t;t_2)U(s_1+\Delta t;t_2,s_2+\Delta t)o(\Delta t^2)U(s_1;s_2,0)U(s_1,0;0)$; by \eqref{Uineq2}, this operator has norm no greater than $e^{(t_1+t_2)M}\|o(\Delta t^2)\|$ with
\be
M=\sup\Bigl\{\|H_j(s_1,s_2)\|:j=1,2,s_1\in[0,t_1],s_2\in[0,t_2]\Bigr\}<\infty\,,
\ee
and thus is itself an $o(\Delta t^2)$. Now sum $U_{\gamma'}-U_{\gamma}$ over all squares; due to cancellations (sum in the order depicted in the right diagram of Figure~\ref{fig:Sigma}), the result $R$ is the difference of the $U$ of the two paths shown in Figure~\ref{fig:twopaths},
\be
R=U(t_1,0;t_2)U(0;t_2,0)-U(t_1;t_2,0)U(t_1,0;0).
\ee
Since the number of small squares is proportional to $\Delta t^{-2}$, $R$ is an $o(1)$; letting $\Delta t\to 0$, we see that $R=0$. Thus, we have shown \eqref{Utwopathsrect} (our simplifying assumption $s_1=s_2=0$ can be dropped), and thus consistency.
\end{proof}

\begin{proof}[Second proof of Theorem~\ref{thm:SchrN_t}]
The theme of this proof is the path independence described in Section~\ref{sec:pathindependence}. We begin by showing that also for other paths $\gamma$, not just horizontal and vertical line segments, there exist time evolution operators given by suitable ordered exponentials. 

So  consider an arbitrary smooth path $\gamma:[0,1] \to \RRR^N, t \mapsto (\gamma_1(t),\ldots,\gamma_N(t))$. Our first claim is that, for any solution $\phi$ of \eqref{phiHj}, the function $t\mapsto \phi(\gamma(t))$ is differentiable, and its derivative is, as expected from the chain rule,
\begin{align}
\frac{d}{dt} \phi(\gamma(t)) 
&= \sum_{j=1}^N \frac{\partial \phi}{\partial t_j} \dot\gamma_j\label{chain}\\
&=\Bigl(-i\sum_{j=1}^N H_j\bigl(\gamma(t)\bigr)\dot\gamma_j\Bigr) \phi(\gamma(t))\label{chain2}\,.
\end{align}
with $\dot{\gamma}(t) = d \gamma(t)/dt$. Note that the chain rule applies only if $\phi$ is a differentiable function of $\vec{t}=(t_1,\ldots,t_N)$, and we have not yet shown this to be the case.\footnote{A well-known theorem asserts that a real-valued function of $\vec{t}$ is differentiable as soon as it possesses continuous partial derivatives. However, while our $\phi$ possesses partial derivatives according to \eqref{phiHj}, which can be shown with a little effort to be continuous, this theorem yields only that, for every fixed vector $\chi\in\Hilbert$, $\vec{t}\mapsto\scp{\chi}{\phi(\vec{t})}$ is continuous, and not that $\vec{t}\mapsto\phi(\vec{t})$ is continuous in the norm topology, as needed.} So we now prove differentiability or, what amounts to the same, the chain rule \eqref{chain}.

Let $\vec{t}=\gamma(t)$, $\Delta t_j=\gamma_j(t+\Delta t)-\gamma_j(t)$, $\Delta t>0$, and $c=\sup\bigl\{|\dot\gamma_j(s)|:j\in\{1\ldots N\},s\in[0,1]\bigr\}<\infty$; note that $|\Delta t_j|\leq c\Delta t$. We consider the piecewise axiparallel path from $\gamma(t)$ to $\gamma(t+\Delta t)$ that first changes the first variable, then the second, etc.; by \eqref{phiU1}, we have that
\be\label{phiUNU1phi}
\phi(\gamma(t+\Delta t)) = U_N\cdots U_1 \phi(\gamma(t))
\ee
with $U_j=U\bigl(t_N;\ldots;t_{j+1};t_j+\Delta t_j,t_j;t_{j-1}+\Delta t_{j-1};\ldots;t_1+\Delta t_1\bigr)$ given by a Dyson series as in \eqref{Dyson1}. Using expansions as in \eqref{UDeltat1}--\eqref{UDeltat4}, based on truncating the Dyson series as in \eqref{Dyson2}, but this time expanding only to first order, we obtain that
\be\label{chainDeltat}
U_N\cdots U_1=I-i\sum_j H_j \Delta t_j  + o(\Delta t)\,,
\ee
using that every $o(|\Delta t_j|)$ is an $o(\Delta t)$ because $|\Delta t_j|\leq c\Delta t$.
Applying \eqref{chainDeltat} to $\phi(\gamma(t))$, we obtain from \eqref{phiUNU1phi} that
\be
\lim_{\Delta t\to 0}\frac{\phi(\gamma(t+\Delta t))-\phi(\gamma(t))}{\Delta t} = -i\sum_j H_j(\gamma(t)) \,\dot\gamma_j(t) \, \phi(\gamma(t))\,,
\ee
or \eqref{chain2}.

Now that we have \eqref{chain2}, this can be expressed by saying that $\phi\circ\gamma$ satisfies a Schr\"o\-din\-ger equation with $t$-dependent Hamiltonian
\be
H(t)=\sum_{j=1}^N H_j\bigl(\gamma(t)\bigr)\dot\gamma_j(t)\,.
\ee
Therefore, the solution is given by the appropriate $t$-ordered exponential,
\be\label{phiUgamma}
\phi(\gamma(1))=U_\gamma \phi(\gamma(0))
\ee
with
\begin{align}\label{U_gamma}
U_{\gamma} &= \mathcal{T} e^{-i\int_{\gamma} \sum_j H_j dt_j} \nonumber \\
&= 1 + \sum_{n=1}^{\infty} (-i)^n \int_0^1 dT_1 \int_0^{T_1} dT_2 \cdots \int_0^{T_{n-1}} dT_n ~\times\nonumber\\
&\quad \quad \times ~ \biggl(\sum_j H_j(\gamma(T_1))\dot{\gamma}_j(T_1)\biggr) \cdots 
\biggl(\sum_j H_j(\gamma(T_n))\dot{\gamma}_j(T_n) \biggr)\,.
\end{align}
The conclusion \eqref{phiUgamma} is also true for paths that are \emph{piecewise} smooth because if $\gamma,\gamma'$ are smooth paths with $\gamma'(0)=\gamma(1)$ then \eqref{phiUgamma} implies that $\phi(\gamma'(1))=U_{\gamma'}U_\gamma\phi(\gamma(0))$, while the property \eqref{Texpcomposition} implies that $U_{\gamma'}U_\gamma$ equals the Dyson series associated with the concatenation of $\gamma$ and $\gamma'$.

As a corollary of the results so far, we obtain a certain kind of path independence: that for any two paths $\gamma,\gamma'$ connecting two points $\vec{t},\vec{t}^{\,\prime}\in\RRR^N$ and any solution $\phi$ of \eqref{phiHj}, $U_\gamma \phi\bigl(\vec{t}\,\bigr)=U_{\gamma'}\phi\bigl(\vec{t}\,\bigr)$. As a corollary of \emph{that}, we obtain that if the multi-time equations \eqref{phiHj} are consistent (i.e., possess a solution for every $\phi(0,\ldots,0)$), then the path independence discussed in Section~\ref{sec:pathindependence} holds, i.e., $U_\gamma=U_{\gamma'}$ for any two paths from $(0,\ldots,0)$ to $\vec{t}$ (and thus also for any two paths between two given points). (Conversely, if path independence holds, we know already that the system \eqref{phiHj} is consistent.) It remains to show that path independence is equivalent to \eqref{consistency}.

\medskip

We now assume the point of view described in Section~\ref{sec:pathindependence}, regarding $\phi$ as a cross section of a vector bundle with base space $\RRR^N$, fibers $\Hilbert$, connection given by the $H_j$, and parallel transport operator $U_\gamma$. 
We now show that the parallel transport is path-independent if and only if the connection is \emph{flat}, i.e., its curvature vanishes. The curvature is a two-form $F$ with values in the Lie algebra of the gauge group (here, with values in the bounded operators on $\Hilbert)$; we have given the formula for $F$ in \eqref{Fjkdef}.

The non-Abelian Stokes theorem \cite{arafeva:1980,bralic:1980} for parallel transport in a vector bundle expresses the holonomy (i.e., parallel transport along a closed path $\gamma$) in terms of the curvature of the connection integrated over a 2-dimensional surface $\Sigma$ whose boundary is $\gamma$. It asserts that, for any 2-surface $\Sigma$ parameterized by a $C^1$ function $f:[0,1]^2\to\RRR^N$,
\be\label{Stokes}
\mathcal{T} \exp\Biggl(-i \int\limits_{\partial \Sigma} \sum_j H_j\, dt_j\Biggr) 
= \mathcal{P}_{\!f} \exp \Biggl(i\int\limits_\Sigma \sum_{i,j} \mathcal{F}_{ij}\,dt_i\wedge dt_j\Biggr)\,. 
\ee

Before we apply this formula to the case at hand, we elucidate it. The left-hand side, a path-ordered integral over the connection coefficients, is, in fact, equal to the holonomy along the boundary curve of $\Sigma$. For a heuristic understanding of this fact, think of parallel transport from $\gamma(t)$ to $\gamma(t+dt)$ (relative to the reference connection that identifies the fibers with $\Hilbert$) as the application of the operator $I-i\sum_j H_j \, \dot\gamma_j \, dt = I-i\sum_j H_j \, dt_j$. Thus, parallel transport along the whole path $\gamma$ is the application of the ``continuous product'' 
\be
\prod_{t=0}^1 \Bigl( I-iH(t) dt \Bigr) \quad \text{with }H(t)=\sum_j H_j(\gamma(t))\dot{\gamma}_j(t)\,,
\ee
with the (non-commuting) factors ordered so that $t$ increases from right to left. By \eqref{continuousproduct}, this is the same as the path-ordered exponential integral of the connection coefficients, i.e., the left-hand side of \eqref{Stokes}. 

The right-hand side of \eqref{Stokes} is a suitably ordered exponential surface integral of an operator-valued 2-form $\mathcal{F}=\sum_{ij}\mathcal{F}_{ij}\, dt_i\wedge dt_j$ obtained from the curvature 2-form $F$ by suitable parallel transport to the reference point $\gamma(0)$ at which the boundary curve begins. The right-hand side of \eqref{Stokes} can be regarded heuristically as the two-parameter continuous product
\be
\prod_{s=0}^1 \prod_{t=0}^1\Bigl( I+i\sum_{i,j}\mathcal{F}_{ij} \frac{\partial f_i}{\partial s} \frac{\partial f_j}{\partial t} \, ds\,dt  \Bigr)
\ee
with the terms ordered from right to left according to the right diagram of Figure~\ref{fig:Sigma}, and with the operator $\mathcal{F}_{ij}(s,t)$ obtained from $F_{ij}(f(s,t))$ by parallel transport along the image under $f$ of the path $(s,t)\to(s,0)\to(0,0)$ in the $st$-plane.

\begin{figure}[ht]
\centering
\includegraphics[width=400pt,keepaspectratio]{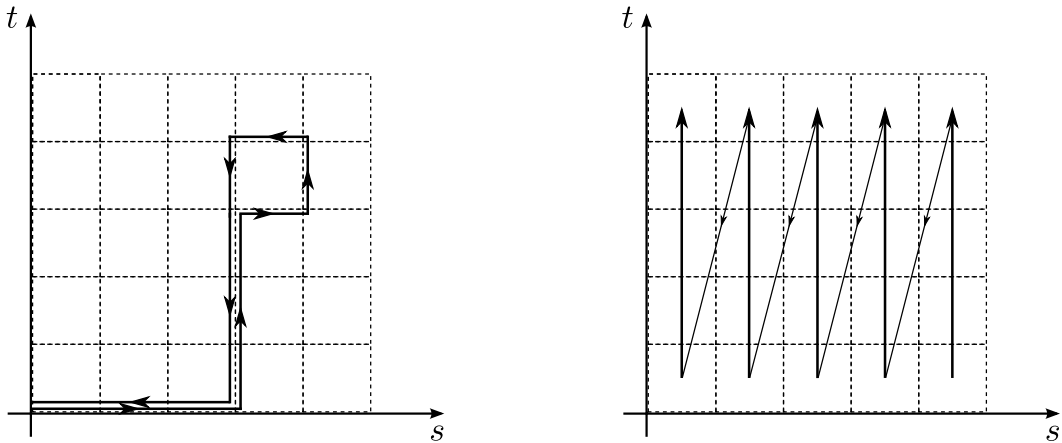}
\centering
\caption{\small{Heuristic derivation of the non-Abelian Stokes theorem, based on subdividing the 2-surface $\Sigma$ (parameterized by $s,t$) into tiny rectangles. Left: For each rectangle, consider this path surrounding it. Right: Ordering of rectangles that corresponds to order of contributions in the surface integral. }}
\label{fig:Sigma}
\end{figure}

To understand heuristically why \eqref{Stokes} is true, one may first convince oneself that for every \emph{infinitesimal} rectangle $R$, the holonomy around $\partial R$ (counterclockwise in the $st$-plane) equals
\be\label{infinitesimal_holonomy}
I + i\int_R \sum_{ij}F_{ij}\, dt_i\wedge dt_j 
= I + i\sum_{ij} F_{ij} \frac{\partial f_i}{\partial s} \frac{\partial f_j}{\partial t} \, ds\,dt + o(ds\,dt)\,.
\ee
For a finite-sized surface $\Sigma$, subdivide it into tiny rectangles $R$, let $\gamma_R$ be the image under $f$ of the path shown in the left diagram of Figure~\ref{fig:Sigma}, let $U_{\gamma_R}$ be the holonomy of that path as given in \eqref{infinitesimal_holonomy}, and let $\gamma$ be the concatenation of the $\gamma_R$ for all $R$ in the order shown in the right diagram of Figure~\ref{fig:Sigma}. Note that in $U_\gamma$ all inner segments cancel out, so that $U_{\gamma}=U_{\gamma'}$ with $\gamma'$ the path that surrounds the big square $[0,1]^2$ once counterclockwise. Now \eqref{Stokes} follows.\footnote{It may seem surprising that, in \eqref{Stokes}, $F_{ij}$ must be parallel-transported to the reference point $f(0,0)$ but $H_j$ need not. That is because $H_j$ actually expresses the difference between two gauge connections, the one [$\nabla$ in \eqref{covder}] representing the time evolution and the trivial one [$\partial$ in \eqref{covder}] corresponding to the identification of all fibers with $\Hilbert$; as a consequence, the trivial connection must be used to transport $H_j$ to $f(0,0)$. On the other hand, the inner segments in $U_\gamma=\prod_R U_{\gamma_R}$ cancel only if the same connection $\nabla$ is used along all pieces of $\gamma$, so the non-trivial connection $\nabla$ must be used to transport $F_{ij}$ to $f(0,0)$. By the way, the reference connection need not be mentioned in \eqref{Stokes} if we understand the left-hand side as the holonomy of $\nabla$ over $\partial \Sigma$.}

For the sake of completeness, we also express the definition of the right-hand side of \eqref{Stokes} in a formula (following the notation of \cite{arafeva:1980}):\footnote{It may seem surprising that the $T_m$ ($m=1,\ldots,n$) are not ordered decreasingly. Take note that the factors involving $(S_m,T_m)$ must be ordered according to the right diagram in Figure~\ref{fig:Sigma}, i.e., terms with bigger $S_m$ must be further to the right; only when $S_m=S_{m+1}$ then terms must be ordered by $T$, but this demand can be ignored because $S_m=S_{m+1}$ occurs only on a set of measure zero.}
\begin{align}
&\mathcal{P}_{\!f} \exp \Biggl(i\int\limits_\Sigma \sum_{i,j} \mathcal{F}_{ij}\,dt_i\wedge dt_j\Biggr) =\nonumber\\
&\quad I + \sum_{n=1}^{\infty}i^n \int_0^1 dS_1 \int_0^1 dT_1 \int_0^{S_1} dS_2 \int_0^1 dT_2 \cdots \int_0^{S_{n-1}} dS_n \int_0^1 dT_n ~\times\nonumber\\[4mm]
&\quad \times~ (\mathcal{F}f'\dot{f})(S_n,T_n) \dots (\mathcal{F}f'\dot{f})(S_1,T_1),
\end{align}
with $(\mathcal{F}f'\dot{f})(s,t) = \sum_{i,j}\mathcal{F}_{ij}(s,t)\, f_i'(s,t)\, \dot{f}_j(s,t)$, where $f'_i(s,t)=\partial f_i(s,t)/\partial s$ and $\dot{f}_j(s,t)=\partial f_j(s,t)/\partial t$. Moreover,
\begin{align}
\mathcal{F}_{ij}(s,t) &= h^{-1}(s,0) \, g^{-1}(s,t) \, F_{ij}(f(s,t)) \, g(s,t) \, h(s,0),\\[3mm]
h(s,t) &= \mathcal{T} \exp\Biggl(-i \int_0^s \sum_i H_i(f(s',t))\, f'_i(s',t)\, ds'\Biggr),\\
g(s,t) &= \mathcal{T} \exp\Biggl(-i \int_0^t \sum_i H_i(f(s,t'))\, \dot{f}_i(s,t') \,dt' \Biggr).
\end{align}
The proof of the non-Abelian Stokes formula \eqref{Stokes} in \cite{arafeva:1980} is formulated for finite-dimensional fiber spaces but remains valid if $H_1,\dots,H_N$ are, as in our case, bounded operators on $\Hilbert$.

We now apply \eqref{Stokes} to our connection with coefficients $A_j=-H_j$. Since $\RRR^N$ is simply connected, every closed path $\gamma$ actually is the boundary curve of a 2-surface $\Sigma$. Now we see immediately from \eqref{Stokes} that vanishing $F_{ij}$ for all $i,j$ implies that for every closed path $\gamma$ the left-hand side of \eqref{Stokes} is $I$, which is equivalent to path independence. 
Conversely, if $F_{ij} \neq 0$ for some $i,j$ at some point $(s_1,\dots,s_N)$, then, due to the smoothness of $H_1,\dots,H_N$, there is a whole neighborhood of $(s_1,\dots,s_N)$ where $F_{ij} \neq 0$. Then there is also a closed path $\gamma=\partial \Sigma$ such that \eqref{Stokes} does not vanish, and so path independence fails. We have already seen around \eqref{Utwopathsrect} that path independence (already merely for boundaries of axiparallel rectangles) is also sufficient for consistency.
\end{proof}

\section{Proofs of Inconsistency Theorems}\label{sec:proofs_inconsistentV}

\begin{proof}[Proof of Theorems \ref{thm:inconsistentV1} and \ref{thm:inconsistentV2}]
We directly prove Theorem~\ref{thm:inconsistentV2}, since Theorem~\ref{thm:inconsistentV1} follows for a special choice of $A_{i,1},A_{i,2},A_{i,3},B_i$. Since the $H_i^\free$ commute with each other, and the $V_i$ do,
the consistency condition \eqref{consistency} is
\begin{align}
0&=-i\frac{\partial V_j}{\partial t_i} + i\frac{\partial V_i}{\partial t_j} +[H_i,H_j] \\
&= -i\Bigl(\frac{\partial V_j}{\partial t_i} -\frac{\partial V_i}{\partial t_j}\Bigr) + [H_i^\free,V_j] - [H_j^\free,V_i] \\
&= -i\Bigl(\frac{\partial V_j}{\partial t_i} -\frac{\partial V_i}{\partial t_j}\Bigr)
-i\sum_{a=1}^3 \Bigl( A_{i,a}\frac{\partial V_j}{\partial x_{i,a}} - A_{j,a} \frac{\partial V_i
}{\partial x_{j,a}} \Bigr).
\end{align}
Since $A_{i,a}$ and $A_{j,a}$ act on different indices (viz., $s_i$ and $s_j$), the seven matrices
\be\label{sevenmatrices}
I,A_{i,1},A_{i,2},A_{i,3},A_{j,1},A_{j,2},A_{j,3}
\ee
are linearly independent when regarded as $k\times k$ matrices ($k=\prod_{\ell=1}^N k_\ell$) acting on all spin indices $s_1,\ldots,s_N$. Thus,
\be\label{relation1}
\frac{\partial V_j}{\partial t_i} -\frac{\partial V_i}{\partial t_j} = 0
\ee
and
\be\label{relation2}
\frac{\partial V_j}{\partial x_{i,a}}=0\,,
\ee
so $V_j$ does not depend on $\vx_i$; since the same argument applies to any $i\neq j$, we have that $V_j=V_j(\vx_j,t_1,\ldots,t_N)$. Set
\be\label{tilde_relation}
\tilde{V}_j(\vx_j,t_j) := V_j(\vx_j,t_1=0,\ldots,t_{j-1}=0,t_j,t_{j+1}=0,\ldots,t_N=0)
\ee
for every $j$. By \eqref{relation1}, for $i\neq j$ and any $a=1,2,3$,
\begin{align}
\frac{\partial}{\partial t_i} \frac{\partial}{\partial x_{j,a}}\bigl(V_j-\tilde{V}_j\bigr)
&= \frac{\partial}{\partial x_{j,a}}\frac{\partial}{\partial t_i} \bigl(V_j-\tilde{V}_j\bigr)\\
&= \frac{\partial}{\partial x_{j,a}}\frac{\partial V_j}{\partial t_i} \\
&= \frac{\partial}{\partial x_{j,a}}\frac{\partial V_i}{\partial t_j}\\
&= \frac{\partial}{\partial t_j} \frac{\partial V_i}{\partial x_{j,a}}\\
&=0\,.
\end{align}
Thus, $\frac{\partial}{\partial x_{j,a}}(V_j-\tilde{V}_j)$ does not depend on $t_i$ for any $i\neq j$, and since it vanishes when $t_i=0$ for all $i\neq j$, we have that $\frac{\partial}{\partial x_{j,a}}(V_j-\tilde{V}_j)=0$ for all $\vx$'s and $t$'s. Thus,
\be
V_j(\vx_j,t_1,\ldots,t_N) = \tilde{V}_j(\vx_j,t_j) + W_j(t_1,\ldots,t_N)
\ee
for some real-valued, smooth function $W_j$. Since for $i\neq j$,
\be
\frac{\partial V_j}{\partial t_i} = \frac{\partial \tilde{V}_j}{\partial t_i} + \frac{\partial W_j}{\partial t_i}=\frac{\partial W_j}{\partial t_i}\,,
\ee
we obtain from \eqref{relation1} that
\be\label{relation3}
\frac{\partial W_j}{\partial t_i} - \frac{\partial W_i}{\partial t_j} =0\,,
\ee
so $W_j= \partial \theta/\partial t_j$ for some real-valued, smooth function $\theta(t_1,\ldots,t_N)$. From this, \eqref{inconVfree} follows.
\end{proof}

In order to prove Theorem~\ref{thm:inconsistentV3} we need the following auxiliary lemma.

\begin{lem}\label{lem:sl_xjt_conn}
For any $j\in\{1,\ldots,N\}$, $\vx\in\RRR^3$, and $t_1,\ldots,t_N\in\RRR$, the set
\be
\sS_{j,\vx,t_1,\ldots,t_N} = \Bigl\{ (t_1,\vx_1,\ldots,t_N,\vx_N) \in \sS_{\neq}: \vx_j=\vx \Bigr\}
\ee
is path-connected.
\end{lem}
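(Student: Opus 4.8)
The plan is to show that \emph{every} configuration in $\sS_{j,\vx,t_1,\ldots,t_N}$ can be joined, by a continuous path that stays inside this set, to one fixed reference configuration; path-connectedness then follows by concatenating paths. First I would record the elementary reformulation: with the times held fixed, $x_k\sim x_\ell$ is equivalent to $\|\vx_k-\vx_\ell\|>|t_k-t_\ell|=:r_{k\ell}\ge 0$, so — writing $\rho=\max_{k\ne\ell}r_{k\ell}$ — a point of $\sS_{j,\vx,\art}$ is a choice of the $N-1$ free positions $\vx_k$ ($k\ne j$) that avoids the closed balls $\bar B_{r_{k\ell}}(\vx_\ell)$ around each other and the ball $\bar B_{r_{kj}}(\vx)$ around the fixed point. (For $N=1$ the set is a singleton and there is nothing to prove.) As reference configuration I would take $\vx_k^{\mathrm{can}}=\vx+k(2\rho+2)\,e$ for $k\ne j$, with $e$ a fixed unit vector; this lies in $\sS_{j,\vx,\art}$ since all the distances $k(2\rho+2)$ and $|k-\ell|(2\rho+2)$ are at least $2\rho+2>\rho\ge r_{k\ell}$.

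Given an arbitrary configuration $(\vx_k)_{k\ne j}$ in the set, the first move is a dilation about $\vx$: $\vx_k(s)=\vx+s(\vx_k-\vx)$ with $s$ running from $1$ to a large value $\Lambda$ (allowed to depend on the configuration). All pairwise distances, including those to the fixed point $\vx$, are multiplied by $s\ge1$, so this path stays in $\sS_{j,\vx,\art}$, and for $\Lambda$ large enough every two of the points $\{\vx\}\cup\{\vx_k(\Lambda)\}$ are separated by more than, say, $(N+2)(2\rho+2)$. The second move brings the free points to their reference positions one at a time, in an arbitrary fixed order $k_1,\ldots,k_{N-1}$: when it is $k_i$'s turn, all other points sit at their reference position (if already moved), at their dilated position (if not yet moved), or at $\vx$, and I move $\vx_{k_i}$ from its dilated position to $\vx_{k_i}^{\mathrm{can}}$ within $\RRR^3$ minus the finitely many closed balls $\bar B_{r_{k_i,\ell}}(\vx_\ell)$ centered at those other points. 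The elementary estimates (using $2\rho+2>2\rho\ge r_{k_i,\ell}+r_{k_i,\ell'}$ for the reference points and the large separation of the dilated points) show that at every stage these obstacle balls are pairwise disjoint and that the target $\vx_{k_i}^{\mathrm{can}}$ lies outside all of them. Moving $\vx_{k_i}$ along any such path keeps the whole configuration in $\sS_{j,\vx,\art}$, because avoiding exactly these balls is the statement that particle $k_i$ is spacelike to all the others.

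The remaining ingredient is that the complement in $\RRR^3$ of finitely many pairwise disjoint closed balls is path-connected; I would prove this directly, by taking the straight segment between two given points of the complement and, for each ball the segment crosses, rerouting along an arc on a slightly enlarged concentric sphere, using disjointness to keep that arc clear of the other balls.

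The step I expect to be the main obstacle is the bookkeeping in the middle: choosing $\Lambda$ and the reference spacing $2\rho+2$ so that throughout the one-at-a-time procedure the relevant obstacle balls remain disjoint and no target reference position is trapped inside one of them. Once these quantitative choices are fixed, everything else is routine.
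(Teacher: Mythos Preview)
Your argument is correct, and at the structural level it matches the paper's: both proofs first dilate about the fixed point $\vx$ to push all free particles far apart (so the spacelikeness constraints become loose), and then connect the dilated configurations within the set. The difference is in how the second step is carried out. The paper connects two arbitrary dilated configurations directly, by invoking the connectedness of the space $^N\RRR^3$ of $N$-element subsets of $\RRR^3$ (keeping the image of $\vx$ fixed), and then observes that any chosen path in that space has a positive minimum pairwise distance, so a sufficiently large dilation factor makes the scaled path respect all the constraints $\|\vx_k-\vx_\ell\|>|t_k-t_\ell|$. You instead fix a single canonical reference configuration on a ray through $\vx$ and move the free particles there one at a time, each step using only that $\RRR^3$ minus finitely many disjoint closed balls is path-connected. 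Your route is more self-contained and makes the required estimates explicit, at the cost of the bookkeeping you already flagged; the paper's route is shorter to write but leans on an unproved (though standard) fact about configuration spaces and is slightly less explicit about why the path can be taken to fix the $j$-th point. Either approach suffices.
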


\begin{proof}[Proof of Lemma~\ref{lem:sl_xjt_conn}]
First, note that $\sS_{j,\vx,t_1,\ldots,t_N} \subset \RRR^{4N}$ can be identified in an obvious way with
\begin{multline}
\tilde{\sS}=\tilde{\sS}_{j,\vx,t_1,\ldots,t_N}=\Bigl\{ (\vx_1, \ldots, \vx_{j-1}, \vx_{j+1}, \ldots, \vx_N) \in \RRR^{3N-3}: \\
\|\vx_i-\vx_k\| > |t_i-t_k| ~\forall i,k \in 1,\ldots,N, i\neq k \Bigr\}\,.
\end{multline}
Now choose two arbitrary elements
\be
x = (\vx_1, \ldots, \vx_{j-1}, \vx_{j+1}, \ldots, \vx_N)\text{ and }
y = (\vy_1, \ldots, \vy_{j-1}, \vy_{j+1}, \ldots, \vy_N)
\ee
in $\tilde{\sS}$. Each element is given by $N-1$ points in $\RRR^3$ that each have a fixed minimum distance to each other and to a fixed point $\vx_j$. One can connect $x$ and $y$ by the following path. First, choose a large number $\lambda$ and move the $\vx_1, \ldots, \vx_{j-1}, \vx_{j+1}, \ldots, \vx_N$ away from $\vx_j$ so that their distances grow continuously by the factor $\lambda$. That is, for the path parameter $0 \leq s \leq \lambda-1$, set $\vx_i(s) = \vx_j + (1+s)(\vx_i-\vx_j)$ for $i \neq j$. That path stays in $\tilde{\sS}$. Call the final configuration $x'= (\vx'_1, \ldots, \vx'_{j-1}, \vx'_{j+1}, \ldots, \vx'_N)$. One does the same procedure with the element $y$ to obtain $y'$. To find a path in $\tilde{\sS}$ from $x'$ to $y'$, it is useful to look at $x'$ and $y'$ zoomed out by the factor $\lambda$; the net effect of going from $x$ to $x'$ and then zooming out is that the minimum distances have shrunk by a factor $\lambda$. It is clear that the space $^N\RRR^3$ of $N$-element subsets of $\RRR^3$ is connected, and that there is a path in $^N\RRR^3$ from $\lambda^{-1}\{ \vx'_1,\ldots,\vx'_{j-1},\vx_j,\vx'_{j+1},\ldots,\vx'_N \}$ to $\lambda^{-1}\{ \vy'_1,\ldots,\vy'_{j-1},\vx_j,\vy'_{j+1},\ldots,\vy'_N \}$ that keeps $\lambda^{-1}\vx_j$ fixed. Choose such a path. It has a nonzero minimal distance $\delta$ that any two points ever reach, so if $\lambda$ is so large that $\max_{i,k=1,\ldots,N} \lambda^{-1}|t_i-t_k|<\delta$ then the corresponding path from $x'$ to $y'$ stays in $\tilde{\sS}$. In this way we have constructed a path that connects two arbitrary elements $x$ and $y$ from $\tilde{\sS}$.
\end{proof}

\begin{proof}[Proof of Theorem~\ref{thm:inconsistentV3}]
As shown in the proof of Theorem~\ref{thm:inconsistentV2}, the consistency condition is equivalent to the conjunction of
\be\label{relation1_new}
\frac{\partial V_j}{\partial t_i} -\frac{\partial V_i}{\partial t_j} = 0
\ee
and
\be\label{relation2_new}
\frac{\partial V_j}{\partial x_{i,a}}=0
\ee
on $\sS_{\neq}$ for all $i, j$ ($i \neq j$). For fixed $j$, equation \eqref{relation2_new} holds for all $i \neq j$ and $a=1,2,3$, i.e., $V_j$ is a function with vanishing gradient on the set $\sS_{j,\vx_j,t_1,\ldots,t_N}$, which is pathwise connected according to Lemma~\ref{lem:sl_xjt_conn}. Therefore we can conclude from \eqref{relation2_new} that $V_j$ is constant on $\sS_{j,\vx_j,t_1,\ldots,t_N}$; put differently, $V_j$ does not depend on $\vx_1, \ldots, \vx_{j-1}, \vx_{j+1}, \ldots, \vx_N$. Since for every $\vx_j,t_1,\ldots,t_N$, the set $\sS_{j,\vx_j,t_1,\ldots,t_N}$ is non-empty, $V_j(\vx_j,t_1,\ldots,t_N)$ is indeed a function on all of $\RRR^{3+N}$. Thus all steps in the proof of Theorem~\ref{thm:inconsistentV2} from Equation~\eqref{tilde_relation} onwards remain valid, and we obtain that \eqref{inconVfree} holds on $\sS_{\neq}$. 
\end{proof}

In order to prove Theorem~\ref{thm:inconsistentV4} we need two more auxiliary lemmas.

\begin{lem}\label{lem:sl_conn}
The set $\sS_{\neq} \subset \RRR^{4N}$ as in \eqref{sSneqdef} of ordered spacelike configurations without collisions in $3+1$ dimensions is simply connected.
\end{lem}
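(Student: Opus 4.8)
\textbf{Proof proposal for Lemma~\ref{lem:sl_conn}.} The plan is to show that $\sS_{\neq}$ is path-connected and that its fundamental group is trivial, using an explicit deformation-retract argument together with an induction on $N$. The key observation is that $\sS_{\neq}$ is an open subset of $\RRR^{4N}$ defined purely by the spacelike-separation inequalities $(t_i-t_k)^2 < \|\vx_i-\vx_k\|^2$ for $i\neq k$, and these inequalities are preserved under two useful families of deformations: (i) scaling all spatial coordinates up by a common factor $\lambda>1$ while leaving the time coordinates fixed (this only increases the left-over room $\|\vx_i-\vx_k\|-|t_i-t_k|$), and (ii) scaling all time coordinates down toward $0$ by a common factor (this also only helps). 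Both are homotopies of $\sS_{\neq}$ into itself.

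First I would establish path-connectedness. Given a loop or just two configurations, apply deformation (i) with $\lambda$ large enough that the spatial separations dwarf all time separations; then the time coordinates become irrelevant constraints, and what remains is (a neighborhood of) the configuration space of $N$ \emph{distinct} labeled points in $\RRR^3$, i.e.\ $(\RRR^3)^N\setminus D$, which is connected (indeed simply connected) because removing the codimension-$3$ diagonal from $\RRR^{3N}$ leaves the complement simply connected — a standard general-position fact, since a generic loop and a generic $2$-disc it bounds miss any fixed submanifold of codimension $\geq 3$. Combining this with deformation (ii) to bring the time coordinates to $0$ and back shows any two points of $\sS_{\neq}$ are joined by a path.

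For simple connectedness I would make the same idea precise as a genuine homotopy argument. Let $\gamma:S^1\to\sS_{\neq}$ be a loop. Using deformations (i) and (ii) continuously in the loop parameter (i.e.\ a homotopy $\gamma_u$, $u\in[0,1]$, pushing spatial scales up and time scales down simultaneously), one homotopes $\gamma$ within $\sS_{\neq}$ to a loop $\gamma_1$ all of whose configurations satisfy $\|\vx_i-\vx_k\| > R\,\max_{i,k}|t_i^{(1)}-t_k^{(1)}|$ for $R$ as large as we like; in this regime the whole spatial slice $(\RRR^3)^N\setminus D$ (times the contractible set of admissible time-tuples, which is all of $\RRR^N$ once spatial separations are huge) is available, and the loop can be contracted there because $(\RRR^3)^N\setminus D$ is simply connected. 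One must check the intermediate configurations stay in $\sS_{\neq}$: under (i) the quantity $\|\vx_i-\vx_k\|-|t_i-t_k|$ is nondecreasing, and under (ii) likewise, so no spacelike inequality is ever violated and no collision is ever created. An induction on $N$ (peeling off one particle at a time, using Lemma~\ref{lem:sl_xjt_conn}-type slice arguments for the fibers) can be used if one prefers to avoid quoting the general-position statement about $(\RRR^3)^N\setminus D$ directly, but invoking that standard fact is cleaner.

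The main obstacle I anticipate is not conceptual but bookkeeping: ensuring that the scaling homotopies are carried out \emph{uniformly over the loop} so that at every parameter value the configuration remains in $\sS_{\neq}$, and that the final contraction in $(\RRR^3)^N\setminus D$ is performed continuously in the leftover time data. Because $\sS_{\neq}$ is open and the defining inequalities are strict, all of these moves have a margin of slack, so the estimates are elementary; the care needed is just to write the homotopies explicitly (e.g.\ $(t_j,\vx_j)\mapsto((1-u)t_j+u\,\lambda(u)^{-1}t_j,\ \lambda(u)\vx_j)$ with a suitable increasing $\lambda(u)$) and verify the inequalities term by term. I expect the proof in the paper to proceed along essentially these lines, possibly organized as an explicit deformation retraction of $\sS_{\neq}$ onto a subset on which the time coordinates play no role.
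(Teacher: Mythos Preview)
Your proposal is correct and follows essentially the same route as the paper: deform the time coordinates to zero (your deformation (ii)) to reduce to the equal-time slice $(\RRR^3)^N\setminus D$, and then use that this configuration space is simply connected. The paper omits your spatial scaling (i) since (ii) alone already lands you on the $t=0$ slice, and it gives an explicit hands-on contraction in $(\RRR^3)^N\setminus D$ rather than invoking the general codimension-$3$ transversality fact, but these are cosmetic differences.
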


(It is interesting to note that this does not hold in $2+1$ or $1+1$ dimensions.)

\begin{proof}[Proof of Lemma~\ref{lem:sl_conn}]
We first show that $\sS_{\neq}$ is path-connected. Choose two arbitrary points $x^{4N}=(x_1,\ldots,x_N) \in \sS_{\neq}$ and $y^{4N}=(y_1,\ldots,y_N) \in \sS_{\neq}$. We now construct a path from $x^{4N}$ to $y^{4N}$ within $\sS_{\neq}$. First, move the time variables $x_1^0,\ldots,x_N^0$ to the common time $x_1^0,\ldots,x_N^0=0$ along the path
\be
\gamma:[0,1]\to \RRR^{4N}, \gamma(s)=\Bigl( \bigl( (1-s)x_1^0,\vx_1 \bigr),\ldots, \bigl( (1-s)x_N^0,\vx_N \bigr) \Bigr).
\ee
This path stays in $\sS_{\neq}$ since $\|\vx_i-\vy_j\| > |x_i^0-x_j^0| \geq (1-s)|x_i^0-x_j^0|$ for $s \in [0,1]$ and all $i \neq j$. We have thus obtained a path from $x^{4N}$ to $x^{4N}_0=((0,\vx),\ldots,(0,\vx_N))$ that stays in $\sS_{\neq}$. Do the same with $y^{4N}$ to obtain $y^{4N}_0$. Now one can easily move $\vx_1$ to $\vy_1$ by any path that avoids the points $\vx_2,\ldots,\vx_N$. For example, one could choose a straight line $(1-s')\vx_1+s'\vy_1$, $s' \in [0,1]$, and in case some of the $\vx_2,\ldots,\vx_N$ are on this line, one modifies the path by going around these points in a semicircle. After that, move $\vx_2$ to $\vy_2$ along some path that avoids $\vy_1,\vx_3,\ldots,\vx_N$ and repeat this procedure for $\vx_3,\ldots,\vx_N$. This shows that $\sS_{\neq}$ is path connected.

Now we show that any closed path $(x_1(t),\ldots,x_N(t))$ (with $(x_i(0)=x_i(1)$) in $\sS_{\neq}$ can be deformed to a point. As we did for single points in $\sS_{\neq}$ above, we now move the whole path to the $x_1^0,\ldots,x_N^0=0$ plane. Note that the path thus obtained in $\RRR^{3N}$ corresponds to $N$ paths in $\RRR^3$ with the property that for any fixed $t$ the diagonals $\vx_i(t)=\vx_j(t)$ are avoided for all $i \neq j$. Now one can indeed deform each of the paths in $\RRR^3$ to a point by an arbitrary deformation. Should it occur that for some path parameter $t$, some $i \neq j$ and some deformation parameter $s$, $\vx^{(s)}_i(t)=\vx^{(s)}_j(t)$, then just change the deformation such that the $i$-th path is deformed faster, so the diagonals are avoided during the deformation. This procedure shows that any path can be contracted within $\sS_{\neq}$.
\end{proof}

\begin{lem}\label{lem:sl_xj_conn}
For any $j\in\{1,\ldots,N\}$ and $x\in\RRR^4$, the set $\sS_{j,x} := \{ (x_1,\ldots,x_N) \in \sS_{\neq}: x_j = x \}$ is path-connected.
\end{lem}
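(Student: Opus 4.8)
The plan is to reduce to a common-time slice, exactly in the spirit of the proof of Lemma~\ref{lem:sl_conn}, and then to invoke the path-connectedness of the ordered configuration space of finitely many labelled points in $\RRR^3$. So fix $j$ and $x=(x^0,\vx)$, and take two arbitrary configurations $a=(a_1,\ldots,a_N)$ and $b=(b_1,\ldots,b_N)$ in $\sS_{j,x}$, so that $a_j=b_j=x$.

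First I would bring the time coordinates of all the other particles to the common value $x^0$: define $\gamma_a:[0,1]\to\RRR^{4N}$ by leaving every spatial component unchanged, leaving the $j$-th particle equal to $x$, and letting the time component of particle $i\neq j$ move affinely, $a_i^0(s)=(1-s)a_i^0+sx^0$. The key check is that $\gamma_a$ never leaves $\sS_{\neq}$: for every pair of particles the time separation gets multiplied by the factor $(1-s)\in[0,1]$ (this is immediate for a pair not involving $j$, and also for a pair involving $j$ since $a_j^0(s)\equiv x^0$), while the spatial separation is constant, so $(\Delta t)^2-\|\Delta\vx\|^2$ can only decrease and stays strictly negative; in particular the spatial points stay pairwise distinct. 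This yields a path in $\sS_{j,x}$ from $a$ to a configuration $\bar a$ in which every particle has time coordinate $x^0$, and the same construction applied to $b$ gives a path from $b$ to such a $\bar b$.

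Next I would connect $\bar a$ to $\bar b$ inside $\sS_{j,x}$. Both lie in the slice where all time coordinates equal $x^0$, and on that slice the spacelike condition for a pair reduces to distinctness of the spatial points; hence this slice is naturally identified with the set of $(\vx_1,\ldots,\vx_N)\in(\RRR^3)^N$ with $\vx_j=\vx$ fixed and all the $\vx_i$ pairwise distinct, i.e.\ the ordered configuration space of the $N-1$ free labelled points in $\RRR^3\setminus\{\vx\}$. This set is path-connected, which one argues exactly as in the proof of Lemma~\ref{lem:sl_conn}: move the first free particle to its target position along a path in $\RRR^3$ that detours around the finitely many other current positions (including $\vx$), then the second, and so on; $\RRR^3$ minus a finite set is path-connected, so each step is possible. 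Concatenating the path $a\to\bar a$, this path $\bar a\to\bar b$, and the reverse of the path $b\to\bar b$ produces the desired path in $\sS_{j,x}$ from $a$ to $b$.

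The only points needing genuine care are the pair-by-pair inequality showing that the time-retraction $\gamma_a$ stays in $\sS_{\neq}$, and the bookkeeping of the order in which the spatial points are moved in the last step so that the intermediate configurations remain collision-free; both are entirely parallel to arguments already carried out in the proofs of Lemmas~\ref{lem:sl_xjt_conn} and \ref{lem:sl_conn}, so I do not anticipate any real obstacle. In short, the lemma is a pinned-point variant of Lemma~\ref{lem:sl_conn}, obtained by the same common-time-slice reduction.
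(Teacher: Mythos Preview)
Your proposal is correct and takes essentially the same approach as the paper: retract all time coordinates affinely to the common value $x^0$ (which keeps the $j$-th particle fixed and shrinks every pairwise time separation by the factor $1-s$, hence preserves the spacelike condition), then connect the resulting equal-time configurations by moving the spatial points one at a time while avoiding the finitely many forbidden collisions. Your write-up even spells out the pair-by-pair check for the time retraction more explicitly than the paper does.
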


\begin{proof}[Proof of Lemma~\ref{lem:sl_xj_conn}]
Choose two arbitrary points $x^{4N}=(x_1,\ldots,x_N) \in \sS_{j,x}$ and $y^{4N}=(y_1,\ldots,y_N) \in \sS_{j,x}$ (in this notation $x_j=x$, $y_j=x$). We construct a path from $x^{4N}$ to $y^{4N}$ by a similar procedure that was used in the proof of Lemma~\ref{lem:sl_conn}. First, move the time variables $x^0_1,\ldots,x^0_N$ to the common time $x^0_1,\ldots,x^0_N=x^0$ along the path
\be
\gamma:[0,1]\to \RRR^{4N}, \gamma(s)=\Bigl( \bigl( (1-s)x^0_1+s x^0,\vx_1 \bigr),\ldots, \bigl( (1-s)x^0_N+s x^0,\vx_N \bigr) \Bigr)\,,
\ee
which stays in $\sS_{j,x}$. Do the same with $y^0_1,\ldots,y^0_N$. Now one can easily move $\vx_1$ to $\vy_1$ by any path that avoids the points $\vx_2,\ldots,\vx_N$ as in the proof of Lemma~\ref{lem:sl_conn}. Repeating that for $\vx_2,\ldots,\vx_N$, one obtains a path connecting $x^{4N}$ and $y^{4N}$.
\end{proof}

\begin{proof}[Proof of Theorem~\ref{thm:inconsistentV4}]
We first prove the generalization of Theorem~\ref{thm:inconsistentV2} and later that of Theorem~\ref{thm:inconsistentV3}. That is, we first assume that the consistency condition holds everywhere, and later show that it suffices to assume the consistency condition on the collision-free spacelike configurations.

We use the notation $x^{4N}=(x_1,\ldots,x_N)$ and $A_{i,0} = I$. Since the $H_i^\free$ commute with each other, and the $V_i$ do, and the $A_{i,1}$, $A_{i,2}$, $A_{i,3}$ and $B_i$ are functions only of $x_i$, the consistency condition \eqref{consistency} is
\begin{align}\label{cons_prop}
0 &= \biggl[-i\sum_{\mu=0}^3 A_{i,\mu}(x_i)\frac{\partial}{\partial x_{i,\mu}} + B_i(x_i) + V_i(x^{4N}), 
-i\sum_{\mu=0}^3 A_{j,\mu}(x_j)\frac{\partial}{\partial x_{j,\mu}} + B_j(x_j) + V_j(x^{4N})\biggr] \nonumber \\
&= \biggl[-i\sum_{\mu=0}^3 A_{i,\mu}(x_i)\frac{\partial}{\partial x_{i,\mu}},V_j(x^{4N})\biggr] 
- \biggl[-i\sum_{\mu=0}^3 A_{j,\mu}(x_j)\frac{\partial}{\partial x_{j,\mu}},V_i(x^{4N})\biggr] \nonumber \\
&= -i\sum_{\mu=0}^3 A_{i,\mu}(x_i)\frac{\partial V_j(x^{4N})}{\partial x_{i,\mu}} 
+i\sum_{\mu=0}^3 A_{j,\mu}(x_j)\frac{\partial V_i(x^{4N})}{\partial x_{j,\mu}}.
\end{align}
We define $k_i \times k_i$ matrices $A_{i,\alpha}(x_i)$ (depending smoothly on $x_i$) for $\alpha = 4,\ldots,k_i^2-1$ such that $\{ A_{i,\alpha}(x_i) \}_{\alpha=0,\ldots,k_i^2-1}$ is a basis (over the field $\RRR$) in the space of self-adjoint $k_i \times k_i$ matrices. Then, by expanding $V_i$ in this basis, i.e.,
\be
V_i(x^{4N}) = \sum_{\alpha=0}^{k_i^2-1} A_{i,\alpha}(x_i) \, d_{i,\alpha}(x^{4N})
\ee
with real-valued smooth functions $d_{i,\alpha}$, the consistency condition \eqref{cons_prop} is
\be
-i\sum_{\mu=0}^3 \sum_{\alpha=0}^{k_j^2-1} A_{i,\mu}(x_i) A_{j,\alpha}(x_j) \frac{\partial d_{j,\alpha}(x^{4N})}{\partial x_{i,\mu}} 
+i \sum_{\mu=0}^3 \sum_{\alpha=0}^{k_i^2-1} A_{j,\mu}(x_j) A_{i,\alpha}(x_i) \frac{\partial d_{i,\alpha}(x^{4N})}{\partial x_{j,\mu}} = 0.
\ee
Since, for $i\neq j$, $A_{i,\alpha}$ and $A_{j,\alpha}$ act on different indices (viz., $s_i$ and $s_j$), we have that, for every $x_i$ and $x_j$, the matrices
\be\label{sevenmatrices3}
\Bigl\{I,A_{i,\alpha}(x_i),A_{j,\beta}(x_j), A_{i,\alpha}(x_i)\,A_{j,\beta}(x_j):\alpha=1\ldots k_i^2-1,\beta=1\ldots k_j^2-1\Bigr\}
\ee
are linearly independent (over the field $\RRR$) when regarded as $k \times k$ matrices ($k=\prod_{\ell=1}^N k_\ell$) acting on all spin indices $s_1,\ldots,s_N$. Thus,
\be\label{relation_prop}
\frac{\partial d_{j,\alpha}}{\partial x_{i,\mu}} = 0 \quad \forall i \neq j,~ \mu = 0,1,2,3,~ \alpha = 4,\ldots,k_j^2-1
\ee
and
\be\label{relation_prop2}
\frac{\partial d_{j,\mu}}{\partial x_{i,\nu}} - \frac{\partial d_{i,\nu}}{\partial x_{j,\mu}} = 0 \quad \forall i \neq j,~ \mu,\nu = 0,1,2,3.
\ee
Equation \eqref{relation_prop} implies that
\be\label{djalphaxj}
d_{j,\alpha}(x_1,\ldots,x_N) = d_{j,\alpha}(x_j)\quad \forall \alpha = 4,\ldots,k_j^2-1\,.
\ee
Now we draw inferences from \eqref{relation_prop2}. For
\be\label{gimunudef}
g_{i,\mu\nu} := \frac{\partial d_{i,\mu}}{\partial x_{i,\nu}} - \frac{\partial d_{i,\nu}}{\partial x_{i,\mu}}
\ee
we find, using \eqref{relation_prop2} twice, that for all $j \neq i$ and $\lambda=0,1,2,3$
\begin{align}\label{relation_prop3}
\frac{\partial}{\partial x_{j,\lambda}} g_{i,\mu\nu}
&= \frac{\partial}{\partial x_{j,\lambda}}\left( \frac{\partial d_{i,\mu}}{\partial x_{i,\nu}} - \frac{\partial d_{i,\nu}}{\partial x_{i,\mu}} \right) \nonumber \\
&= \frac{\partial}{\partial x_{i,\nu}} \frac{\partial d_{i,\mu}}{\partial x_{j,\lambda}} - \frac{\partial}{\partial x_{i,\mu}} \frac{\partial d_{i,\nu}}{\partial x_{j,\lambda}} \nonumber \\
&= \frac{\partial}{\partial x_{i,\nu}} \frac{\partial d_{j,\lambda}}{\partial x_{i,\mu}} - \frac{\partial}{\partial x_{i,\mu}} \frac{\partial d_{j,\lambda}}{\partial x_{i,\nu}} \nonumber \\
&= 0.
\end{align}
This implies that $g_{i,\mu\nu}(x_1,\ldots,x_N)=g_{i,\mu\nu}(x_i)$ is a function of $x_i$ only. By choosing some fixed values $\tilde{x}_1,\ldots,\tilde{x}_N$ we define the function
\be\label{tildeddef}
\tilde{d}_{i,\mu}(x_i) := d_{i,\mu}(\tilde{x}_1,\ldots,\tilde{x}_{i-1},x_i,\tilde{x}_{i+1},\ldots,\tilde{x}_N)\,.
\ee
It has the property that
\be\label{g_prop2}
g_{i,\mu\nu}(x_i)= \left( \frac{\partial \tilde{d}_{i,\mu}}{\partial x_{i,\nu}} - \frac{\partial \tilde{d}_{i,\nu}}{\partial x_{i,\mu}} \right) (x_i).
\ee
Now define $h_{i,\mu} := d_{i,\mu}(x_1,\ldots,x_N)-\tilde{d}_{i,\mu}(x_i)$. Using \eqref{g_prop2} we find
\be\label{h_i}
\frac{\partial h_{i,\mu}}{\partial x_{i,\nu}} - \frac{\partial h_{i,\nu}}{\partial x_{i,\mu}} = 0
\ee
for all $i=1,\ldots,N$ and all $\mu,\nu=0,1,2,3$. Since from \eqref{relation_prop2} we also have that
\be\label{h_ij}
\frac{\partial h_{i,\mu}}{\partial x_{j,\nu}} - \frac{\partial h_{j,\nu}}{\partial x_{i,\mu}} = 0
\ee
for all $i,j=1,\ldots,N$ ($i \neq j$) and all $\mu,\nu=0,1,2,3$, it follows that $h_{i,\mu}$ is a gradient of a real-valued function $\theta$, i.e., $h_{i,\mu}=\frac{\partial \theta}{\partial x_{i,\mu}}$. Therefore
\be
d_{j,\mu}(x^{4N}) = \frac{\partial \theta(x^{4N})}{\partial x_{j,\mu}} + \tilde{d}_{j,\mu}(x_j).
\ee
By \eqref{djalphaxj},
\be\label{Vjgauge}
V_j(x^{4N}) = \sum_{\mu=0}^3 A_{j,\mu}(x_j) \frac{\partial \theta(x^{4N})}{\partial x_{j,\mu}} + \underbrace{\sum_{\mu=0}^3 A_{j,\mu}(x_j)\tilde{d}_{j,\mu}(x_j) + \sum_{\alpha=4}^{k_j^2-1} A_{j,\alpha}(x_j)d_{j,\alpha}(x_j)}_{\tilde{V}_{j}(x_j)}\,,
\ee
and \eqref{inconVfree} follows.

Now we discuss how this proof needs to be changed when the consistency condition is granted only on $\sS_{\neq}$. In this case \eqref{relation_prop} and \eqref{relation_prop2} hold only on $\sS_{\neq}$. From \eqref{relation_prop} on $\sS_{\neq}$ we can still conclude \eqref{djalphaxj} because $d_{j,\alpha}$ has vanishing gradient on the set $\sS_{j,x_j}$, which is path-connected according to Lemma~\ref{lem:sl_xj_conn}. For the same reason also \eqref{relation_prop3}, valid on $\sS_{\neq}$, implies that $g_{i,\mu\nu}$ is a function of $x_i$ only. Since every $x_i\in\RRR^4$ occurs in some spacelike configuration, we obtain that $g_{i,\mu\nu}(x_i)$ is, in fact, consistently (and smoothly) defined on all $\RRR^4$; that is, for any fixed $i$, writing $x$ for $x_i$ and $g_{\mu\nu}(x)$ for $g_{i,\mu\nu}(x_i)$, we have a 2-form $g$ on $\RRR^4$ that is closed by virtue of \eqref{gimunudef}. By the Poincar\'e lemma, it is exact, i.e., $g$ is the exterior derivative of a 1-form $\tilde{d}$ on $\RRR^4$; now write $\tilde{d}_{i,\mu}(x_i)$ instead of $\tilde{d}_\mu(x)$ and take this, rather than \eqref{tildeddef}, as the definition of $\tilde{d}$. Then \eqref{g_prop2} is still true, both when regarded as a relation between functions on $\RRR^4$ or between functions on $\sS_{\neq}$. It now follows that \eqref{h_i} and \eqref{h_ij} are still true on $\sS_{\neq}$. By the Poincar\'{e} lemma again, a closed 1-form on a simply connected set is exact,\footnote{Note that the Poincar\'{e} lemma for 1-forms indeed holds on simply connected sets. For $p$-forms with $p \geq 2$ one needs stronger assumptions, e.g., that the $p$-form is closed on a contractible set (such as a star shaped set), in order to conclude that it is exact.} and since, according to Lemma~\ref{lem:sl_conn}, $\sS_{\neq}$ is simply connected, it follows that $h_{i\mu}$ is the gradient of a scalar function $\theta:\sS_{\neq}\to\RRR$. Thus \eqref{Vjgauge} is still true and \eqref{inconVfree} follows on $\sS_{\neq}$.
\end{proof}

\begin{proof}[Proof of Theorem~\ref{thm:inconsistent2ndorder}]
We use the convention $A_{i,ab}(x_i) = A_{i,ba}(x_i)$ for all $i$. Then the consistency condition is that, for an arbitrary initial wave function $\psi$,
\begin{align}
0 &= \biggl[ i\frac{\partial}{\partial t_i} - H_i , i\frac{\partial}{\partial t_j} - H_j \biggr]\psi \nonumber \\
&= \sum_{a,b=1}^3 \left( A_{i,ab}(x_i) \frac{\partial^2 V_j}{\partial x_{i,a}\partial x_{i,b}} - A_{j,ab}(x_j) \frac{\partial^2 V_i}{\partial x_{j,a}\partial x_{j,b}} \right) \psi \nonumber \\
&\quad + \sum_{a=1}^3 \left( B_{i,a}(x_i) \frac{\partial V_j}{\partial x_{i,a}} - B_{j,a}(x_j) \frac{\partial V_i}{\partial x_{j,a}} \right) \psi + i \left( \frac{\partial V_i}{\partial t_j} - \frac{\partial V_j}{\partial t_i} \right) \psi \nonumber \\
&\quad + 2\sum_{a,b=1}^3 A_{i,ab}(x_i) \frac{\partial V_j}{\partial x_{i,a}} \frac{\partial \psi}{\partial x_{i,b}} - 2\sum_{a,b=1}^3 A_{j,ab}(x_j) \frac{\partial V_i}{\partial x_{j,a}} \frac{\partial \psi}{\partial x_{j,b}}.
\end{align}
Since $\psi$ is arbitrary, the terms involving $\partial\psi/\partial x_{i,b}$, $\partial \psi/\partial x_{j,b}$, and $\psi$ must vanish separately; that is,
\begin{align}
0&=\sum_{a=1}^3 A_{i,ab}(x_i) \frac{\partial V_j}{\partial x_{i,a}} \label{diaVj}\\
0&=\sum_{a=1}^3 A_{j,ab}(x_j) \frac{\partial V_i}{\partial x_{j,a}}\\
0&= \sum_{a,b=1}^3 \left( A_{i,ab}(x_i) \frac{\partial^2 V_j}{\partial x_{i,a}\partial x_{i,b}} 
- A_{j,ab}(x_j) \frac{\partial^2 V_i}{\partial x_{j,a}\partial x_{j,b}} \right)  \nonumber \\
&\quad + \sum_{a=1}^3 \left( B_{i,a}(x_i) \frac{\partial V_j}{\partial x_{i,a}} - B_{j,a}(x_j) \frac{\partial V_i}{\partial x_{j,a}} \right) + i \left( \frac{\partial V_i}{\partial t_j} - \frac{\partial V_j}{\partial t_i} \right) \label{ddVj}\,.
\end{align}
In \eqref{diaVj}, we can replace $A_{i,ab}$ by $A_{i,ba}$; since the $3k_i\times 3k_i$ matrix $A_i=(A_{i,ba})$ has full rank, it possesses an inverse $A_i^{-1}$; multiplying \eqref{diaVj} from the left by $A_i^{-1}$, we obtain that
\be\label{diaVj2}
\frac{\partial V_j}{\partial x_{i,a}}=0\,,
\ee
as well as the same relation with $i$ and $j$ interchanged. Thus, all spatial derivatives of $V$ in \eqref{ddVj} drop out, leaving us with
\be\label{dtiVj2}
 \frac{\partial V_i}{\partial t_j} - \frac{\partial V_j}{\partial t_i} =0\,.
\ee
Since \eqref{diaVj2} and \eqref{dtiVj2} coincide with \eqref{relation1} and \eqref{relation2}, the proof of Theorem~\ref{thm:inconsistentV2} goes through from \eqref{tilde_relation} onwards, with the only difference that we have not presented any reason why $W_j$, or $\theta$, should be real-valued; rather, the arguments in the proof of Theorem~\ref{thm:inconsistentV2} yield, in our case, the existence of a self-adjoint matrix-valued $\theta$. However, since $V_i$ was assumed to act only on $s_i$, it follows that also $\tilde{V}_i$, $W_i$, and $\theta$ act only on $s_i$; since $i$ was arbitrary, $\theta$ must be scalar (and thus real-valued). 

In order to verify \eqref{inconVfree}, it is relevant to note that $\theta$ depends on $x_1,\ldots,x_N$ only through $t_1,\ldots,t_N$, while $H_i^\free$ involves only spatial derivatives; thus, no magnetic terms arise from the gauge transformation, and $\tilde{\phi}$ satisfies \eqref{inconVfree}.
\end{proof}

\section{Proof of Cut-Off Consistency Theorem}
\label{sec:proof_delta}

In this section we prove Theorem~\ref{thm:delta_model} of Section~\ref{sec:delta_model}. Before we begin the proof, we need a few preparatory considerations. 
For later reference, let us write out the one-particle Dirac equation with potential for a wave function $\psi: \RRR^4 \to \CCC^4$,
\begin{equation}\label{one_part_Dirac_pot}
i\frac{\partial}{\partial t} \psi(t, \vect{x}) = \Bigl(-i \vect{\alpha} \cdot \nabla + \beta m + V(\vect{x}) \Bigr) \psi(t, \vect{x})
\end{equation}
with positive constant $m$ and a self-adjoint $4 \times 4$-matrix-valued function $V$.
The $N$-particle single-time Dirac equation with potential for a wave function $\psi: \RRR^{3N+1} \to (\CCC^4)^{\otimes N}$ is
\begin{equation}\label{N_part_Dirac_pot}
i\frac{\partial}{\partial t} \psi(t, q) = \left( \sum_{k=1}^{N} \left(-i \vect{\alpha}_k \cdot \nabla_k + \beta_k m \right) + V(q) \right) \psi(t, q)
\end{equation}
with a self-adjoint matrix-valued function $V: \RRR^{3N} \to (\CCC^{4 \times 4})^{\otimes N}$. 

Since Theorem~\ref{thm:delta_model} is formulated in terms of \emph{smooth} functions, we need to make use of known results on the regularity of solutions to the Dirac equation; for this purpose we quote in Lemma~\ref{lemma_chernoff} a basic result from \cite{chernoff:1973}. 
By a \emph{Dirac-type differential operator} we mean an operator
of the form
\begin{equation}\label{def_dirac_type_operator}
H\psi(q)
= -i\sum_{i=1}^{d} A_i(q) \frac{\partial \psi(q)}{\partial q_i} + B(q) \psi(q)\,,
\end{equation}
where the coefficients $A_i(q)$ and $B(q)$ are self-adjoint operators on
$\CCC^k$ and smooth functions of $q\in\RRR^d$.

\begin{lem}[\cite{chernoff:1973}]\label{lemma_chernoff}
If $H$ is a Dirac-type differential operator and if there exists a constant $c>0$ such that $\|\sum_i n_i A_i(q)\|\leq c$ for all $q\in\RRR^d$ and all unit vectors $(n_1,\ldots, n_d)$, then the PDE
\begin{equation}\label{Chernoffb}
i\frac{\partial f(t,q)}{\partial t}
= Hf(t,q),
\end{equation}
where $t\in\RRR$, $q\in \RRR^d$, and $f$ is $\CCC^k$-valued,
possesses, for any smooth initial datum $f_0\in C^\infty(\RRR^d,\CCC^k)$, a
unique solution $f\in C^\infty(\RRR^{d+1},\CCC^k)$ with $f(0,\cdot)=f_0(\cdot)$.
\end{lem}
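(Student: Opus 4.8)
The plan is to reduce the assertion to the $L^2$-theory of first-order symmetric hyperbolic systems and then to remove the $L^2$-hypothesis on the data by exploiting finite propagation speed. I would organize the argument in five steps. (a) Regard $H$ as an operator on $\Hilbert=L^2(\RRR^d,\CCC^k)$ with domain $C_0^\infty(\RRR^d,\CCC^k)$; integration by parts together with the self-adjointness of $A_i(q)$ and $B(q)$ shows $H$ is symmetric there, and I would show it is in fact essentially self-adjoint, so that $t\mapsto e^{-it\overline H}$ is a strongly continuous one-parameter unitary group. (b) For $f_0\in C_0^\infty$, show that $f(t,\cdot)=e^{-it\overline H}f_0$ lies in $C^\infty(\RRR^{d+1},\CCC^k)$ and solves \eqref{Chernoffb}. (c) Establish the finite-propagation-speed property: $f(t,q)$ depends only on the restriction of $f_0$ to $\overline B_{c|t|}(q)$. (d) Use (b)--(c) to build a smooth solution for an arbitrary (not necessarily $L^2$) smooth initial datum. (e) Prove uniqueness for arbitrary smooth data by an energy inequality on truncated cones.

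For step (a), it suffices to show $\ker(H^*\mp i)=\{0\}$. Given $u\in\Hilbert$ with $Hu=\pm i\,u$ distributionally, mollify, $u_\varepsilon=j_\varepsilon*u\in C^\infty$; since convolution commutes with the constant-coefficient $\partial/\partial q_i$, the only commutator term is $[A_i(\cdot),j_\varepsilon*]$, and the Friedrichs commutator lemma gives $\|Hu_\varepsilon\mp i\,u_\varepsilon\|\to 0$. As $\langle u_\varepsilon,Hu_\varepsilon\rangle$ is real while $\langle u_\varepsilon,\pm i\,u_\varepsilon\rangle=\pm i\|u_\varepsilon\|^2$ is purely imaginary, this forces $u=0$. (Closer to Chernoff's own route, one may instead construct the unitary propagator $W(t)$ first, from the classical solvability of symmetric hyperbolic systems together with the energy identity of step (e), and then deduce essential self-adjointness from the boundedness of $t\mapsto\langle W(t)v,u\rangle$ for $v\in C_0^\infty$.) For step (b), classical Friedrichs theory for symmetric hyperbolic systems gives $f(t,\cdot)\in H^s(\RRR^d)$ for every $s$ whenever $f_0\in\bigcap_sH^s$, hence $f(t,\cdot)\in C^\infty$ in the spatial variables by Sobolev embedding; differentiating \eqref{Chernoffb} in $t$ and iterating then yields joint smoothness on $\RRR^{d+1}$.

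The heart of the matter --- and the step I expect to be the main obstacle --- is carrying out the $L^2$- and $H^s$-estimates of (a)--(b) and the finite-speed property (c) \emph{using only the uniform symbol bound $\|\sum_i n_iA_i(q)\|\le c$ and the smoothness (but not boundedness of derivatives) of the coefficients}. The device is the pointwise energy identity: with $e=|f|^2$ and flux $j_i=\langle f,A_i(q)f\rangle$ (real, since $A_i$ is self-adjoint), \eqref{Chernoffb} yields $\partial_t e+\sum_i\partial_i j_i=\langle f,\mathcal{B}(q)f\rangle$ with $\mathcal{B}(q)=\sum_i\partial_iA_i(q)$. Integrating over a backward cone of lateral slope $1/c$ with apex $(t_0,q_0)$, the lateral boundary flux $c\,e+j\cdot\hat n$ (with $\hat n$ the spatial unit normal) is nonnegative \emph{precisely} because $\|\sum_in_iA_i\|\le c$, so it may be dropped; Gr\"onwall's inequality on the spatial slices $B_{c(t_0-t)}(q_0)$ --- on which the coefficients and their derivatives are bounded, being restricted to a compact set --- then gives $\int_{B(t)}e\le e^{Ct}\int_{B(0)}e$. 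This simultaneously proves uniqueness in cones (step (e), taking $f_0=0$) and gives the $L^2$-estimate; replacing $f$ by $(1-\Delta)^{s/2}f$ and commuting yields the $H^s$-estimates of (b), and truncating the coefficients outside a large ball (which, by finite speed, changes nothing inside the cone) removes the need for global coefficient bounds. Finally, for step (d), given $(t_0,q_0)$ choose $R>c|t_0|$ and $\chi\in C_0^\infty$ with $\chi\equiv1$ on $\overline B_R(q_0)$; solving \eqref{Chernoffb} with data $\chi f_0\in C_0^\infty$ via (a)--(b) produces a smooth solution whose restriction to the backward cone from $(t_0,q_0)$ is, by (c), independent of $\chi$, and these local pieces patch to a global $f\in C^\infty(\RRR^{d+1},\CCC^k)$ with $f(0,\cdot)=f_0$; uniqueness on all of $\RRR^{d+1}$ follows from the cone estimate applied about every point.
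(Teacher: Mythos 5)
The first thing to note is that the paper does not prove Lemma~\ref{lemma_chernoff} at all: it is quoted without proof from Chernoff \cite{chernoff:1973}, so there is no in-paper argument to compare with. Your sketch follows the standard route behind that citation --- Friedrichs' theory of first-order symmetric hyperbolic systems, a pointwise energy identity integrated over backward cones of lateral slope $1/c$ (which is exactly where the hypothesis $\|\sum_i n_iA_i(q)\|\le c$ enters, making the lateral flux nonnegative), Gr\"onwall on compact slices for uniqueness and domain of dependence, and a cut-off of the initial datum to pass from $C_0^\infty$ to arbitrary smooth, non-$L^2$ data. Steps (c), (d), (e) and the energy identity are correct and constitute the essential content of the lemma.

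The step that is wrong as stated is (a). Under the lemma's hypotheses $H$ is in general \emph{not} symmetric on $C_0^\infty(\RRR^d,\CCC^k)$: integration by parts gives $\langle\phi,H\psi\rangle-\langle H\phi,\psi\rangle=i\int\phi^\dagger\bigl(\sum_i\partial_iA_i(q)\bigr)\psi\,dq$, so symmetry requires $\sum_i\partial_iA_i\equiv 0$ (true for the constant-coefficient Dirac operators to which the paper applies the lemma, but not assumed in the statement, which allows $q$-dependent $A_i$). Hence there is no self-adjoint closure and no unitary group $e^{-it\overline{H}}$ in general, and your deficiency-space computation (``$\langle u_\varepsilon,Hu_\varepsilon\rangle$ is real'') relies on exactly this missing symmetry. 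Your own energy identity already displays the obstruction: the source term $\langle f,\mathcal{B}f\rangle$ with $\mathcal{B}=\sum_i\partial_iA_i$ is the defect of formal self-adjointness, and the $L^2$ norm is only controlled by Gr\"onwall, not conserved. The repair is the alternative you mention parenthetically: bypass essential self-adjointness entirely and take existence, $H^s$ bounds, and smoothness directly from the classical theory of symmetric hyperbolic systems, which requires only that the $A_i(q)$ be Hermitian, the lower-order terms $B$ and $\sum_i\partial_iA_i$ being absorbed into the Gr\"onwall constants; your cone estimates and the localization step (d) then go through verbatim. With (a)--(b) reformulated in this way, the proposal is a correct proof of the lemma.
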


The constant $c$ can be taken to be 1 in the 1-particle Dirac equation \eqref{one_part_Dirac_pot} and $\sqrt{N}$ in the many-particle Dirac equation \eqref{N_part_Dirac_pot}.

\subsection{Domain of Dependence}\label{finite_speed_N}

It is well known that, in the Dirac equation, perturbations propagate no faster than at the speed of light (here, $c=1$). Lemma~\ref{finite_propagation_speed_N_part_dirac} below is the appropriate version of that statement for the $N$-particle Dirac equation \eqref{N_part_Dirac_pot} and asserts that the domain of dependence of $\psi(t,\vx_1,\ldots,\vx_N)$ at time 0 in $\RRR^{3N}$ is the Cartesian product of $N$ 3-balls of radius $|t|$. To make its proof easier to follow, we first give an analogous proof of the corresponding well-known statement for $N=1$, formulated below as Lemma~\ref{finite_propagation_speed_one_part_dirac}.

We denote the closed ball around $\vx$ with radius $r \geq 0$ by
\be\label{Bdef}
\overline B_{r}(\vx) = \bigl\{ \vy \in \RRR^3: \|\vy-\vx\| \leq r \bigr\}.
\ee
The ball around the origin is abbreviated by $\overline B_{r} := \overline B_{r}(\vect{0})$. The ball around $\vx$ can also be written as $\overline B_{r}(\vx) = \vx + \overline B_{r}$ with the summation defined as $a+B := \{ a+b: b \in B \}$ for $a \in \RRR^d$ and $B \subset \RRR^d$. 
The next Lemma~\ref{finite_propagation_speed_one_part_dirac} states that the domain of dependence for $\psi(t,\vect{x})$ is given by $\overline B_{|t|}(\vect{x})$.

\begin{lem}
\label{finite_propagation_speed_one_part_dirac}
Let $\psi$ be a solution of the one-particle Dirac equation \eqref{one_part_Dirac_pot} with initial data $\psi(0,\cdot) \in C^{\infty}(\RRR^3, \CCC^4)$ and with a self-adjoint potential matrix $V \in C^{\infty}(\RRR^3, \CCC^{4 \times 4})$. 
Then specifying initial conditions on $\overline B_{|t|}(\vect{x})$ uniquely determines $\psi(t,\vect{x})$ (i.e., the domain of dependence is a ball growing at the speed of light).
\end{lem}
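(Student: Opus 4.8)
The plan is to prove the finite-propagation-speed statement by the standard energy-integral (a.k.a.\ light-cone estimate) method for symmetric hyperbolic systems, adapted to the Dirac operator. The key observation is that if $\psi$ solves \eqref{one_part_Dirac_pot} then the ``charge density'' $\rho(t,\vy)=\psi(t,\vy)^\dagger\psi(t,\vy)=\|\psi(t,\vy)\|^2_{\CCC^4}$ obeys a local continuity equation: a direct computation using \eqref{one_part_Dirac_pot} and the self-adjointness of $\valpha$, $\beta$, and $V$ gives
\be
\frac{\partial}{\partial t}\bigl(\psi^\dagger\psi\bigr) + \nabla\cdot\bigl(\psi^\dagger\valpha\,\psi\bigr) = 0\,,
\ee
the potential term and the mass term dropping out because $\psi^\dagger(\beta m + V)\psi$ is real and appears with opposite signs from the $i\partial_t\psi$ and $-i\partial_t\psi^\dagger$ contributions. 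The crucial point is the pointwise bound $|\psi^\dagger\valpha\,\psi|\le \psi^\dagger\psi$, which holds because each $\alpha^a$ has operator norm $1$ (more precisely $\|\vn\cdot\valpha\|=1$ for any unit vector $\vn$); this is exactly the hypothesis of Lemma~\ref{lemma_chernoff} with $c=1$.

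**The main step** is then the backward-light-cone argument. Fix $(t,\vx)$ with $t>0$ (the case $t<0$ is symmetric), and suppose the initial data $\psi(0,\cdot)$ vanishes on $\overline B_{t}(\vx)$; I want to conclude $\psi(t,\vx)=0$, which by linearity gives the uniqueness claim. Define, for $0\le s\le t$, the quantity
\be
E(s) = \int_{\overline B_{t-s}(\vx)} \psi(s,\vy)^\dagger\psi(s,\vy)\,d^3\vy\,,
\ee
the integral of the charge density over the shrinking ball that is the time-$s$ slice of the backward light cone from $(t,\vx)$. Differentiating in $s$, one gets a bulk term $\int_{\overline B_{t-s}(\vx)}\partial_s\rho$ and a boundary term $-\int_{\partial \overline B_{t-s}(\vx)}\rho\,dA$ from the shrinking domain. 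Using the continuity equation to rewrite the bulk term as $-\int_{\overline B_{t-s}(\vx)}\nabla\cdot(\psi^\dagger\valpha\psi)$ and then the divergence theorem, one obtains
\be
\frac{dE}{ds} = -\int_{\partial \overline B_{t-s}(\vx)} \bigl(\psi^\dagger\psi + \vn\cdot(\psi^\dagger\valpha\,\psi)\bigr)\,dA\,,
\ee
with $\vn$ the outward unit normal. By the pointwise bound $\vn\cdot(\psi^\dagger\valpha\psi)\ge -\psi^\dagger\psi$, the integrand is nonnegative, so $E$ is nonincreasing. Since $E(0)=0$ by hypothesis and $E\ge 0$, we get $E(s)\equiv 0$, hence $\psi\equiv 0$ on the whole solid backward cone, in particular $\psi(t,\vx)=0$.

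**The anticipated obstacles** are mostly matters of rigor rather than of idea. First, I should make sure $\psi$ is regular enough for these manipulations: by Lemma~\ref{lemma_chernoff} the solution with smooth initial data is in $C^\infty(\RRR^4,\CCC^4)$, so the differentiation under the integral sign, the divergence theorem on the ball, and the differentiation of the moving-domain integral are all justified (the boundary $\partial\overline B_{t-s}(\vx)$ is a smooth sphere as long as $t-s>0$, and the $s\to t$ endpoint is handled by continuity of $E$). Second, I want the statement that \emph{existence} of a solution determined by the data on $\overline B_{|t|}(\vx)$ also holds, not just uniqueness; but this follows from Lemma~\ref{lemma_chernoff} applied to any smooth extension of the local data, together with the uniqueness just proved, which shows the value $\psi(t,\vx)$ is independent of the extension. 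Finally, I should note the mass term and potential term genuinely cancel in the continuity equation — this is where self-adjointness of $\beta m + V$ is used — so the estimate is completely insensitive to the presence of $V$, which is the whole point. I expect writing the moving-domain differentiation cleanly (the Reynolds-transport/co-area bookkeeping that produces the boundary term with exactly coefficient $1$ in front of $\psi^\dagger\psi$, matching the speed-$1$ contraction of the ball) to be the only place demanding care.
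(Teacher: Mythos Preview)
Your proposal is correct and is essentially the same argument as the paper's: both rest on the continuity equation $\partial_t(\psi^\dagger\psi)+\nabla\cdot(\psi^\dagger\valpha\psi)=0$ together with the positivity of $I+\vn\cdot\valpha$ (equivalently $|\psi^\dagger(\vn\cdot\valpha)\psi|\le\psi^\dagger\psi$). The only cosmetic difference is that the paper applies the four-dimensional Gauss theorem once to the truncated backward light cone, whereas you differentiate the spatial integral $E(s)$ in $s$ and use the three-dimensional divergence theorem; integrating your inequality $E'(s)\le 0$ over $s$ reproduces exactly the paper's flux balance, so the two presentations are interchangeable.
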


\begin{proof}
For simplicity, we consider only $t\geq 0$. Let $(T, \vect{y}) \in \RRR^4$ with $T>0$, and for all $t \in [0,T]$ let
\begin{equation}
\Sigma_t = \left\{ (t,\vect{x}) \in \RRR^4: \vect{x} \in \overline B_{T-t}(\vect{y}) \right\}\,.
\end{equation}
We first prove that if $\psi$ vanishes on $\Sigma_0$ it also vanishes on $\Sigma_t$ for all $t \in [0,T]$.

\begin{figure}[htbp]
\centering
\includegraphics[width=280pt,keepaspectratio]{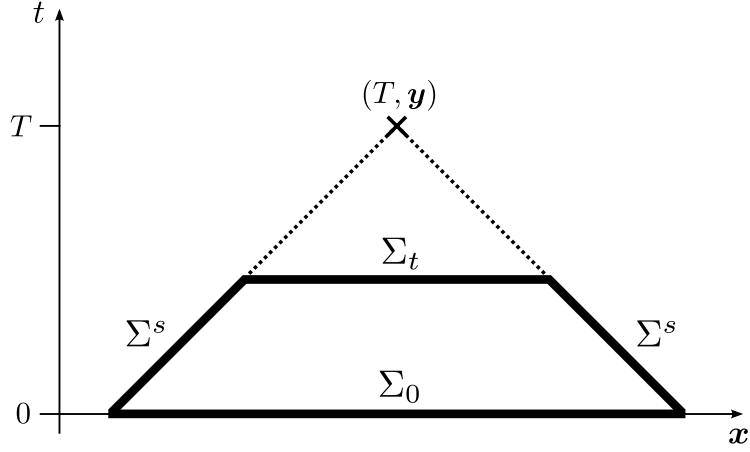}
\centering
\caption{\small{Truncated light cone pertaining to the tip $(T,\vy)$ and enclosed by the faces $\Sigma_0,\Sigma_t$, and $\Sigma^s$. $\Sigma_0$ and $\Sigma_t$ are 3-balls at constant coordinate time, $\Sigma^s$ is lightlike.}}
\label{figure:finite_speed_proof}
\end{figure}

For any $t \in [0,T)$, let the light cone between the surfaces $\Sigma_0$ and $\Sigma_t$ be $C_t = \bigcup_{t' \in [0,t]} \Sigma_{t'}$. Let $\Sigma^s$ denote the sides of the light cone, i.e., $\Sigma^s = \partial C_t \setminus (\Sigma_0\cup \Sigma_t)$ (with $\partial C_t$ the boundary of $C_t$), as shown in Figure~\ref{figure:finite_speed_proof}. For ease of notation and since we consider a fixed time $t$ we do not write an index $t$ for the $t$-dependence of $\Sigma^s$, and we write $C := C_t$. Let $n$ be the outward-pointing unit vector field on $\partial C$ orthogonal to $\partial C$ in the Euclidean metric on $\RRR^4$, i.e., $\|n\|^2 = \sum_{i=1}^4 n^i n^i = 1$ and for any tangent vector $s$ on $\partial C$, $n \cdot s = \sum_{i=1}^4 n^i s^i = 0$.\footnote{We use a normal vector in the Euclidean metric here because we have to deal with the lightlike hypersurface $\Sigma^s$, so the normal vector in the Minkowski sense would be lightlike, too. Therefore the flux integrals \eqref{flux_int} could not be written down in this simple form.} 
Let $\psi^{\dagger}$ denote the conjugate-transpose of the spinor $\psi$ (so that $|\psi|^2 = \psi^{\dagger}\psi$). Let $j = (j^0,j^1,j^2,j^3) = (|\psi|^2, \psi^{\dagger}\alpha^1\psi, \psi^{\dagger}\alpha^2\psi, \psi^{\dagger}\alpha^3\psi)$ denote the four-current. Then the continuity equation can be expressed as $\divergence(j) = 0$.\footnote{Note the difference between the four-divergence $\divergence(j) = \partial_{\mu}j^{\mu} = \frac{\partial j^0}{\partial t} + \nabla \cdot \vect{j}$ and the three-divergence $\divergence(\vect{j}) = \nabla \cdot \vect{j}$.} According to the Gauss integral theorem in 4 dimensions,
\begin{equation}\label{flux_int}
0 = \int_C \divergence(j)~d^4x = \int_{\partial C} j \cdot n ~d^3x = \int_{\Sigma_0} j \cdot n ~d^3x + \int_{\Sigma_t} j \cdot n ~d^3x + \int_{\Sigma^s} j \cdot n ~d^3x.
\end{equation}
The differential $d^3x$ denotes the volume on a 3-surface relative to the Euclidean metric and $j \cdot n = \sum_{k=1}^3 j^k n^k$ denotes the Euclidean inner product.
Now suppose that $\psi|_{\Sigma_0} = 0$. Since the (outward-pointing) normal vector on $\Sigma_0$ is $n = (-1,0,0,0)$ we have $j \cdot n = -j^0 = -|\psi|^2 = 0$ on $\Sigma_0$. On $\Sigma_t$ the normal vector is $n = (1,0,0,0)$, so $j \cdot n = j^0 = |\psi|^2$ on $\Sigma_t$. Therefore, 
\begin{equation}
0 = \int_{\Sigma_t} j \cdot n ~d^3x + \int_{\Sigma^s} j \cdot n ~d^3x = \int_{\Sigma_t} |\psi|^2 ~d^3x + \int_{\Sigma^s} \psi^{\dagger} (n^0 I + \vect{\alpha} \cdot \vect{n}) \psi ~d^3x.
\end{equation}
Next we prove that the $4 \times 4$ matrix $A := (n^0 I + \vect{\alpha} \cdot \vect{n})$ is positive semi-definite on $\Sigma^s$, i.e., that all eigenvalues are $\geq 0$. For any unit vector $\vect{b} \in \RRR^3$ we have that $\vect{\alpha} \cdot \vect{b}$ has eigenvalues $-1$ and $+1$.\footnote{\label{footnote:Dirac_matrices}Indeed, for the Dirac matrices the following relation holds: $\alpha^i\alpha^j + \alpha^j\alpha^i = 2\delta_{ij} I$. Therefore, for any unit vector $\vect{b} \in \RRR^3$ we have $(\vect{\alpha} \cdot \vect{b})^2 = \sum_{i,j=1}^3 b^i b^j \alpha^i\alpha^j = \sum_{i,j=1}^3 b^i b^j (2\delta_{ij} I - \alpha^j\alpha^i) = -(\vect{\alpha} \cdot \vect{b})^2 + 2\|\vect{b}\|^2 I$, i.e., $(\vect{\alpha} \cdot \vect{b})^2 = \|\vect{b}\|^2 I = I$, so $\vect{\alpha} \cdot \vect{b}$ has eigenvalues $-1$ and $+1$.} Therefore the matrix $\|\vect{n}\| \vect{\alpha} \cdot \frac{\vect{n}}{\|\vect{n}\|}$ has eigenvalues $+\|\vect{n}\|$ and $-\|\vect{n}\|$, so the lowest eigenvalue of $A$ is $e = n^0 - \|\vect{n}\|$. On $\Sigma^s$ the normal vector is $n = (\frac{1}{\sqrt{2}}, \vect{n})$ with $\|\vect{n}\| = \frac{1}{\sqrt{2}}$, so $e = \frac{1}{\sqrt{2}} - \frac{1}{\sqrt{2}} = 0$. We thus have that
\begin{equation}\label{int_Sigma_tSigma^s}
0 = \int_{\Sigma_t} |\psi|^2 ~d^3x + \int_{\Sigma^s} \underbrace{\psi^{\dagger} (n^0 I + \vect{\alpha} \cdot \vect{n}) \psi}_{\geq 0} ~d^3x.
\end{equation}
Since each integrand is $\geq 0$, each integral is $\geq 0$, and thus each integral has to vanish. In particular this means that the integrand $|\psi|^2$ has to vanish (almost everywhere) on $\Sigma_t$, i.e., $\psi|_{\Sigma_t} = 0$ (almost everywhere). From Lemma~\ref{lemma_chernoff} we have that $\psi \in C^{\infty}(\RRR^4, \CCC^4)$, therefore $\psi$ vanishes identically on $\Sigma_t$ for all $t \in [0,T]$ (in particular also for $t=T$). Thus, if $\psi$ vanishes on $\Sigma_0$, then it also vanishes on $\Sigma_t$ for all $t \in [0,T]$. 

The statement of the lemma follows in this way:
Suppose $\psi_1, \psi_2 \in C^{\infty}(\RRR^4, \CCC^4)$ are solutions of the Dirac equation \eqref{one_part_Dirac_pot} and are identical on $\Sigma_0$ and arbitrary on the rest of the $t = 0$ hypersurface. Then $\psi_1 - \psi_2$ is also a solution of the Dirac equation with $\psi_1 - \psi_2 = 0$ on $\Sigma_0$. Therefore, also $\psi_1 - \psi_2 = 0$ in $(T,\vect{y})$, i.e., $\psi_1(T,\vect{y}) = \psi_2(T,\vect{y})$.
\end{proof}

From Lemma~\ref{finite_propagation_speed_one_part_dirac} it follows immediately that the wave function on any $M_0 \subset \RRR^3$ at time $t$ is uniquely determined by specifying initial conditions on $M_{|t|} = M_0 + \overline B_{|t|}$ at time zero (with the summation of sets defined as $A+B := \{ a+b: a \in A, b \in B \}$; pictorally speaking, one obtains $M_{|t|}$ by putting a ball with radius $|t|$ around each point of $M_0$). That is, the domain of dependence for $M_0$ is given by $M_{|t|}$. It also follows that if the initial wave function $\psi(0,\cdot)$ has compact support $M_0\subset\RRR^3$, then $\psi(t,\cdot)$ has compact support in $M_0+\overline B_{|t|}$.

We now generalize these considerations to the $N$-particle Dirac equation. We first generalize the notion of a set growing at the speed of light. Consider a point $q=(\vx_1,\ldots,\vx_N) \in \RRR^{3N}$ and, around each point, a ball growing at the speed of light. The corresponding set in configuration space is
\begin{equation}
\overline B_{t}^{(N)}(q) := 
\prod_{i=1}^N \overline B_{t}(\vect{x}_i).
\end{equation}
This set can also be expressed in terms of the $\|\cdot\|_{2,\infty}$-norm, defined for $q \in \RRR^{3N}$ by
\begin{equation}
\|q\|_{2,\infty} := \max_{i=1,\dots,N} \|\vx_i\|\,,
\end{equation}
according to
\be
\overline B_{t}^{(N)}(q) = \bigl\{ p \in \RRR^{3N}: \|p-q\|_{2,\infty} \leq t \bigr\}\,.
\ee
Indeed, using the notation $\tilde q = (\tilde\vx_1, \dots, \tilde\vx_N)$,
\begin{align} 
\prod_{i=1}^N \overline B_{t}(\vect{x}_i) 
&= \{ \tilde q \in \RRR^{3N}: \tilde\vx_i \in \overline B_{t}(\vect{x}_i) ~\forall i=1,\dots,N \} \nonumber \\ 
&= \{ \tilde q \in \RRR^{3N}: \|\tilde\vx_i - \vect{x}_i\| \leq t ~\forall i=1,\dots,N \} \nonumber \\ 
&= \{ \tilde q \in \RRR^{3N}: \max_{i=1,\dots,N} \|\tilde\vx_i - \vect{x}_i\| \leq t \} \nonumber \\ 
&= \{ \tilde q \in \RRR^{3N}: \|\tilde q - q\|_{2,\infty} \leq t \}\,. 
\end{align}
For the set around the origin we also write $\overline B_{t}^{(N)}(0) = (\overline B_{t}(0))^N =: \overline B_{t}^{(N)} $. We then have $\overline B_{t}^{(N)}(q) = q + \overline B_{t}^{(N)}$. If an arbitrary subset of configuration space $M_0 \subset \RRR^{3N}$ grows with the speed of light, then one obtains $M_{|t|} = M_0 + \overline B_{|t|}^{(N)}$.

\begin{lem}
\label{finite_propagation_speed_N_part_dirac}
Let $\psi$ be a solution of the $N$-particle Dirac equation \eqref{N_part_Dirac_pot} with initial data $\psi(0,\cdot) \in  C^{\infty}(\RRR^{3N}, (\CCC^4)^{\otimes N})$ and with a self-adjoint potential matrix $V \in C^{\infty}(\RRR^{3N}, (\CCC^{4 \times 4})^{\otimes N})$. 
Then specifying initial conditions on $\overline B_{|t|}^{(N)}(q)$ uniquely determines $\psi(t,q)$.
\end{lem}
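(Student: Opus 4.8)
The plan is to carry over the proof of Lemma~\ref{finite_propagation_speed_one_part_dirac} almost verbatim, the only genuinely new feature being the more intricate shape of the truncated cone in $\RRR^{3N+1}$ (cf.\ Figure~\ref{figure:finite_speed_proof} for the $N=1$ picture). Fix $T>0$ and $q_0=(\vx_1^0,\ldots,\vx_N^0)\in\RRR^{3N}$, and for $t'\in[0,T]$ put $\Sigma_{t'}=\{t'\}\times\overline B_{T-t'}^{(N)}(q_0)$, so that each $\Sigma_{t'}$ is a product of $N$ balls of radius $T-t'$; for $t\in(0,T)$ let $C=\bigcup_{t'\in[0,t]}\Sigma_{t'}$. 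First I would note, exactly as for $N=1$ and using that $V$ is self-adjoint, that \eqref{N_part_Dirac_pot} implies the continuity equation
\[
\frac{\partial}{\partial t}|\psi|^2+\sum_{k=1}^N\nabla_k\cdot\vect{j}_k=0,\qquad \vect{j}_k=\psi^\dagger\vect{\alpha}_k\psi
\]
(the mass and potential terms cancel), i.e.\ $\divergence(J)=0$ for the $(3N+1)$-component current $J=(|\psi|^2,\vect{j}_1,\ldots,\vect{j}_N)$ on $\RRR^{3N+1}$.

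Next I would apply the Gauss theorem on $C\subset\RRR^{3N+1}$ with the Euclidean outward unit normal $n$ (as in Lemma~\ref{finite_propagation_speed_one_part_dirac}, for the reason explained in the footnote there, since the lateral boundary is lightlike), writing $\partial C=\Sigma_0\cup\Sigma_t\cup\Sigma^s$ with $\Sigma^s$ the lateral part. The key geometric point is that $\Sigma^s$ decomposes, up to a subset of lower dimension, into $N$ smooth faces $\Sigma^s_1,\ldots,\Sigma^s_N$, where on $\Sigma^s_k$ particle $k$ lies on the sphere $\|\vx_k-\vx_k^0\|=T-t'$ while all other particles lie strictly inside their balls; on $\Sigma^s_k$ the outward unit normal is the normalized gradient of $(t',q)\mapsto\|\vx_k-\vx_k^0\|+t'-T$, and hence has time component $n^0=1/\sqrt{2}$, spatial part of length $1/\sqrt{2}$ pointing radially away from $\vx_k^0$ in the $\vx_k$-block, and zero in every other block. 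Therefore, on $\Sigma^s_k$,
\[
J\cdot n=n^0|\psi|^2+\vect{n}_k\cdot\vect{j}_k=\psi^\dagger\bigl(n^0 I+\vect{\alpha}_k\cdot\vect{n}_k\bigr)\psi\;\geq\;0,
\]
since $\vect{\alpha}_k\cdot\vect{n}_k$ has eigenvalues $\pm\|\vect{n}_k\|=\pm n^0$, so $n^0 I+\vect{\alpha}_k\cdot\vect{n}_k$ is positive semi-definite --- the same computation as for $N=1$, now in the $\vx_k$-block. On $\Sigma_0$ the normal is $(-1,0,\ldots,0)$, so $J\cdot n=-|\psi|^2$, and on $\Sigma_t$ it is $(1,0,\ldots,0)$, so $J\cdot n=|\psi|^2\geq 0$. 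Thus if $\psi$ vanishes on $\Sigma_0$, then $0=\int_{\Sigma_t}|\psi|^2\,d^{3N}x+\sum_{k=1}^N\int_{\Sigma^s_k}\psi^\dagger(n^0 I+\vect{\alpha}_k\cdot\vect{n}_k)\psi\,d^{3N}x$ with every integrand $\geq 0$, forcing $\psi|_{\Sigma_t}=0$ almost everywhere; by the smoothness of $\psi$ (Lemma~\ref{lemma_chernoff}, which applies with $c=\sqrt{N}$), $\psi$ vanishes identically on $\Sigma_{t'}$ for every $t'\in[0,T]$, in particular at $(T,q_0)$. Applying this to the difference of two smooth solutions of \eqref{N_part_Dirac_pot} that agree on $\overline B_T^{(N)}(q_0)$ at time $0$ then yields the claim (the case $t<0$ being symmetric).

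The step I expect to need the most care is the non-smoothness of $\Sigma^s$: the ``edges'' where two or more particles simultaneously lie on their spheres form a subset of $\partial C$ of dimension at most $3N-1$, hence of zero $3N$-dimensional surface measure, and $C$ --- a product of balls swept over a time interval --- is a Lipschitz domain, so the divergence theorem applies with the almost-everywhere-defined outward normal and these edges contribute nothing. Everything else --- the continuity equation, the positivity of $n^0 I+\vect{\alpha}_k\cdot\vect{n}_k$, and the final uniqueness argument --- is a block-by-block transcription of the $N=1$ proof.
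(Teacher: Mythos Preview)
Your proposal is correct and follows essentially the same route as the paper's own proof: the same truncated-cone region, the same continuity equation and Gauss-theorem argument, the same decomposition of the lateral boundary into $N$ faces with the same normal-vector computation and positive-semidefiniteness of $n^0 I+\vect{\alpha}_k\cdot\vect{n}_k$, and the same uniqueness conclusion via the difference of two solutions. If anything, you are slightly more careful than the paper in explicitly noting that the edges where two or more particles sit on their spheres form a null set and that $C$ is Lipschitz, a point the paper passes over silently.
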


\begin{proof}
The proof proceeds along the same lines as in the one-particle case. Again, we take $t\geq 0$. We first generalize the definition of $\Sigma_t$ to the $N$-particle case. Let $Q = (\vect{Q}_1, \dots, \vect{Q}_N) \in \RRR^{3N}$ and $(T,Q) \in \RRR^{3N+1}$ with $T>0$. 
For $t \in [0,T]$, let
\begin{equation}
\Sigma_t = \left\{ (t,q) \in \RRR^{3N+1}: q \in \overline B_{T-t}^{(N)}(Q) \right\}\,.
\end{equation}
We first prove that if $\psi$ vanishes on $\Sigma_0$ it also vanishes on $\Sigma_t$ for all $t \in [0,T]$.

We define $C_t = \bigcup_{t' \in [0,t]} \Sigma_{t'}$ as the ``generalized light cone'' between the surfaces $\Sigma_0$ and $\Sigma_t$. Again, $\Sigma^s$ denotes the sides of the generalized light cone, i.e., $\Sigma^s=\partial C_t \setminus (\Sigma_0 \cup \Sigma_t)$. For ease of notation, and since we consider a fixed time $t$, we do not make explicit the $t$-dependence of $\Sigma^s$ and we write $C$ instead of $C_t$. $\Sigma^s$ is composed of $N$ faces in the following sense. We call
\begin{equation}
\Sigma^{s,k} = \bigcup_{t' \in (0,t)} \left\{ (t',q) \in \Sigma^s: \|\vect{Q}_k - \vect{x}_k\| = T-t' \right\}
\end{equation}
the $k$-th face of $\Sigma^s$. Then we have that $\Sigma^s = \bigcup_{k=1,\dots,N} \Sigma^{s,k}$. Now let $n = (n^0,\vect{n}_1,\dots,\vect{n}_N) \in \RRR^{3N+1}$ be the outward-poining unit vector field on $\partial C$ orthogonal to $\partial C$ in the Euclidean metric, i.e., $\|n\|^2 = \sum_{i=1}^{3N+1} n^i n^i = 1$ and for any tangent vector $s$ on $\partial C$, $n \cdot s = \sum_{i=1}^{3N+1} n^i s^i = 0$. The current $j$ for $N$ particles is $j = (j^0,\vect{j}_1,\dots,\vect{j}_N) = (|\psi|^2, \psi^{\dagger}\vect{\alpha}_1\psi, \dots, \psi^{\dagger}\vect{\alpha}_N\psi)$. The well-known continuity equation for $N$ particles reads
\be
\divergence(j) = \frac{\partial |\psi|^2}{\partial t} +\sum_{k=1}^N \divergence(\vect{j}_k) = 0\,.
\ee
According to the Gauss integral theorem in $3N+1$ dimensions,
\begin{equation}
0 = \int_{C} \divergence(j)~d^{3N+1}x = \int_{\partial C} j \cdot n ~d^{3N}x = \int_{\Sigma_0} j \cdot n ~d^{3N}x + \int_{\Sigma_t} j \cdot n ~d^{3N}x + \int_{\Sigma^s} j \cdot n ~d^{3N}x.
\end{equation}
The differential $d^{3N}x$ denotes the $3N$-dimensional surface area relative to the Euclidean metric on $\RRR^{3N+1}$, and $j \cdot n = \sum_{k=1}^{3N+1} j^k n^k$ is the Euclidean inner product on $\RRR^{3N+1}$. We suppose $\psi|_{\Sigma_0} = 0$. As the normal vector on $\Sigma_0$ has components $n^0 = -1$ and $\vect{n}_k = \vect{0}$ (for all $k=1,\dots,N$), it follows that $j \cdot n = -j^0 = -|\psi|^2 = 0$ on $\Sigma_0$. On $\Sigma_t$ we have $n^0 = 1$ and $\vect{n}_k = \vect{0}$ (for all $k=1,\dots,N$), so $j \cdot n = j^0 = |\psi|^2$. Therefore, 
\begin{equation}
0 = \int_{\Sigma_t} |\psi|^2 ~d^{3N}x + \int_{\Sigma^s} \psi^{\dagger} \Bigl(n^0 I + \sum_{k=1}^N \vect{\alpha}_k \cdot \vect{n}_k\Bigr) \psi ~d^{3N}x.
\end{equation}
The $4^N \times 4^N$ matrix $A := (n^0 I + \sum_{k=1}^N \vect{\alpha}_k \cdot \vect{n}_k)$ is positive semi-definite on $\Sigma^s$ for the following reason. Since for any unit vector $\vect{b} \in \RRR^3$ we have that $\vect{\alpha} \cdot \vect{b}$ has eigenvalues $-1$ and $+1$ (see Footnote~\ref{footnote:Dirac_matrices}), each matrix $\|\vect{n}_k\| \vect{\alpha}_k \cdot \frac{\vect{n}_k}{\|\vect{n}_k\|}$ has eigenvalues $+\|\vect{n}_k\|$ and $-\|\vect{n}_k\|$. Then the lowest eigenvalue of $A$ is $e = n^0 - \sum_{k=1}^N\|\vect{n}_k\|$. On $\Sigma^s$ the normal-vector has the component $n^0 = \frac{1}{\sqrt{2}}$. The spatial components depend on the face of $\Sigma^s$. At $\Sigma^{s,k}$ the spatial components have norm $\|\vect{n}_j\| = \frac{1}{\sqrt{2}} \delta_{jk}$ (for all $j=1,\dots,N$), so $e = \frac{1}{\sqrt{2}} - \sum_{j=1}^N \frac{1}{\sqrt{2}} \delta_{jk} = 0$. Thus
\begin{equation}\label{int_Sigma_tSigma^s_N}
0 = \int_{\Sigma_t} |\psi|^2 ~d^{3N}x + \int_{\Sigma^s} \underbrace{\psi^{\dagger} \left( n^0 I + \sum_{k=1}^N \vect{\alpha}_k \cdot \vect{n}_k \right) \psi}_{\geq 0} ~d^{3N}x.
\end{equation}
Since each integrand is $\geq 0$, each integral is $\geq 0$, so each integral has to vanish. This means in particular for the integral over $\Sigma_t$, that the integrand $|\psi|^2$ has to vanish (almost everywhere) and therefore $\psi = 0$ on $\Sigma_t$ (almost everywhere). Since from Lemma~\ref{lemma_chernoff} we know that $\psi \in C^{\infty}(\RRR^{3N+1}, (\CCC^4)^{\otimes N})$, $\psi$ has to vanish identically on $\Sigma_t$. Thus, if $\psi$ vanishes on $\Sigma_0$, then it also vanishes on $\Sigma_t$ for all $t \in [0,T]$. 

Now Lemma~\ref{finite_propagation_speed_N_part_dirac} follows: 
As in the one-particle case, suppose $\psi_1, \psi_2 \in C^{\infty}(\RRR^{3N+1}, (\CCC^4)^{\otimes N})$ are solutions of the $N$-particle Dirac equation \eqref{N_part_Dirac_pot} that are identical on $\Sigma_0$ and arbitrary on the rest of the $t = 0$ hypersurface. Then $\psi_1 - \psi_2$ is another solution of \eqref{N_part_Dirac_pot} with $\psi_1 - \psi_2 = 0$ on $\Sigma_0$. Thus, $\psi_1 - \psi_2 = 0$ in $(T,Q)$, i.e., $\psi_1(T,Q) = \psi_2(T,Q)$.
\end{proof}

Lemma~\ref{finite_propagation_speed_N_part_dirac} implies that the wave function at time $t$ on any $M_0 \subset \RRR^{3N}$ is uniquely determined by the initial conditions on $M_{|t|} = M_0 + \overline B_{|t|}^{(N)}$ at time zero, i.e., the domain of dependence for $M_0$ is given by $M_{|t|}$. It also implies that if the initial wave function $\psi(0,\cdot)$ has compact support $M_0\subset \RRR^{3N}$ 
then $\psi(t,\cdot)$ has compact support in $M_0+\overline B_{|t|}^{(N)}$.

\subsection{Proof of the Theorem}

\begin{proof}[Proof of Theorem~\ref{thm:delta_model}]
We will define a smooth function $\Phi$ on $\sS_{\delta}$; show that any smooth solution $\phi$ of the evolution equations with initial conditions $\phi_0$ must agree with $\Phi$; and show that such a solution exists by showing that $\Phi$ satisfies the multi-time equations \eqref{multi_time_Dirac_equation_N}, which we abbreviate as
\be\label{PhiHalpha}
i\frac{\partial \Phi}{\partial t_\alpha} = H_{S_\alpha} \Phi
\ee
with
\be\label{Halphadef}
H_{S_\alpha}=
\sum_{j \in S_\alpha} H_j^\free + \sum_{\substack{i,j \in S_\alpha \\ i \neq j}} W(\vect{x}_i - \vect{x}_j)\,.
\ee
To this end, let $\phi$ be any smooth solution of the multi-time equations \eqref{PhiHalpha} on $\sS_\delta$ with initial conditions $\phi_0$ on $(\{0\}\times\RRR^3)^N$. We proceed by induction on the number $L$ of families in a partition and treat each $\sS_{\delta,P}$ separately.

We start the induction with $L=1$. The corresponding partition is $P = \{ S_1 \}$ with $S_1 = \{ 1,\dots,N \}$. Note that with $P = \{ S_1 \}$ we have $q^4 = (t_1,q_1)$ with $q_1 = (\vect{x}_1,\dots,\vect{x}_N)$. Let $\Phi$ on $\sS_{\delta,\{S_1\}}$ be the solution (which exists, is unique, and is smooth by Lemma~\ref{lemma_chernoff}) of
\begin{equation}\label{ind_eq_1}
i \frac{\partial}{\partial t_1} \Phi(t_1,q_1) = 
H_{\{1,\ldots,N\}} \, \Phi(t_1,q_1)
\end{equation}
with initial conditions given by $\phi_0$. This is just a single-time Dirac-type equation. Any given solution $\phi$ has to agree with $\Phi$ on $\sS_{\delta,\{S_1\}}$ by Lemma~\ref{lemma_chernoff}, since both functions have the same initial conditions $\phi_0$ and satisfy the same equation \eqref{ind_eq_1}. For showing that $\Phi$ satisfies \eqref{PhiHalpha}, the induction start is also provided by \eqref{ind_eq_1}. 

The induction assumption asserts that $\Phi$ has been defined on the union of the $\sS_{\delta,P'}$ for all partitions $P'$ with $L'=L-1$ or fewer families, that it is smooth on each $\sS_{\delta,P'}$, that any smooth solution $\phi$ of \eqref{PhiHalpha} with initial condition $\phi_0$ agrees with $\Phi$ on $\sS_{\delta,P'}$, and that $\Phi$ satisfies \eqref{PhiHalpha} for all $\alpha$ on all $\sS_{\delta,P'}$ with $L'<L$.

Now we carry out the induction step from $L-1$ to $L$. Consider any $P$ consisting of $L$ families. We now define $\Phi$ on $\sS_{\delta,P}$. That is, we construct $\Phi(Q^4)$ for an arbitrary $Q^4 = (T_1,Q_1;\ldots;T_L,Q_L) \in \sS_{\delta,P}$, numbering the families in $P = \{ S_1,\ldots,S_L \}$ so that $T_1\leq T_2\leq \ldots\leq T_L$. 
According to the induction assumption, $\Phi$ is already given on $\sS_{\delta,\tilde{P}}$ for
\be\label{tildePdef}
\tilde{P} = \Bigl\{ S_1,\dots,S_{L-2},S_{L-1} \cup S_L \Bigr\}\,.
\ee
In particular, $\Phi\bigl(T_1,Q_1;\ldots;T_{L-1},Q_{L-1};T_{L-1},q_L\bigr)$ is given for every $q_L \in \overline B_{T_L-T_{L-1}}^{(|S_L|)}(Q_L)$ because every such $(T_1,Q_1;\ldots;T_{L-1},Q_{L-1};T_{L-1},q_L)$ lies in $\sS_\delta$ (using $T_1\leq \ldots \leq T_L$) and thus in $\sS_{\delta,P}\cap \sS_{\delta,\tilde{P}}$.
Since the domain of dependence of $Q_L$ is $\overline B_{T_L-T_{L-1}}^{(|S_L|)}(Q_L)$ by Lemma~\ref{finite_propagation_speed_N_part_dirac}, we can uniquely solve the single-time equation
\be\label{ind_eq_2}
i \frac{\partial}{\partial t_L} \Phi(T_1,Q_1; \dots; T_{L-1},Q_{L-1};t_L,q_L) = 
H_{S_L}\,\Phi(T_1,Q_1; \dots; T_{L-1},Q_{L-1};t_L,q_L)
\ee
in the variables $t_L,q_L$ with initial condition given by $\Phi$ on the set
\be\label{DoDsetL}
\Bigl\{(T_1,Q_1;\dots;T_{L-1},Q_{L-1};T_{L-1},q_L):q_L \in \overline B_{T_L-T_{L-1}}^{(|S_L|)}(Q_L)\Bigr\}
\ee
to obtain $\Phi(T_1,Q_1,\dots,T_L,Q_L)$; see Figure~\ref{fig:induction}. 

\begin{figure}[htb]
\centering
\includegraphics[width=400pt,keepaspectratio]{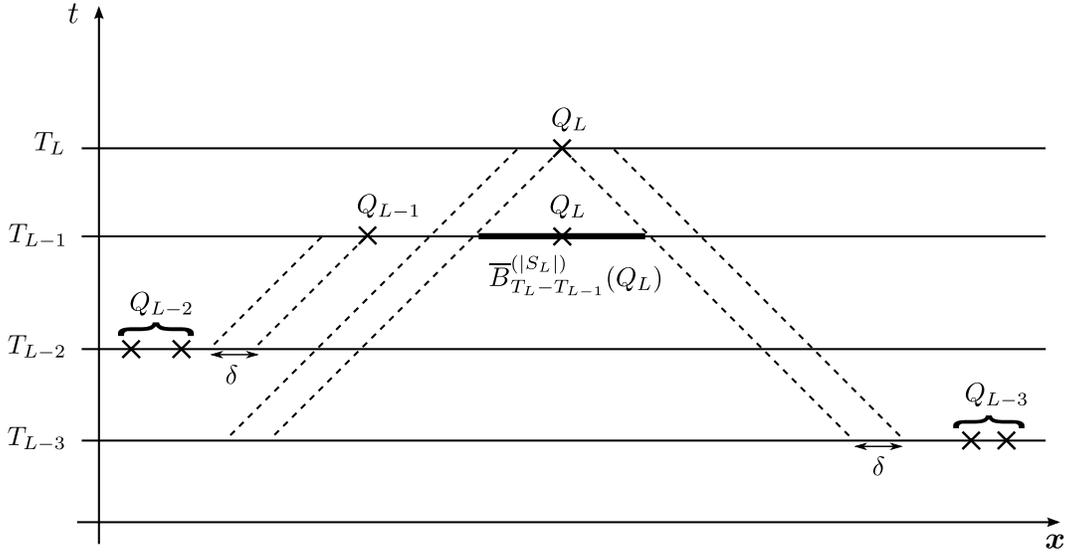}
\centering
\caption{\small{A configuration $(T_1,Q_1,\ldots,T_L,Q_L)\in\sS_{\delta}$, and the set \eqref{DoDsetL} on which $\Phi$ is given before solving \eqref{ind_eq_2} to obtain $\Phi(T_1,Q_1,\ldots,T_L,Q_L)$. In the case drawn, the $L$-th family contains only a single particle (so $Q_L\in\RRR^3$).}}
\label{fig:induction}
\end{figure}

We need to verify that this $\Phi$ is well defined on any overlap $\sS_{\delta,P}\cap \sS_{\delta,P'}$, i.e., that for any $Q^4$ contained not only in $\sS_{\delta,P}$ but also in $\sS_{\delta,P'}$ for a partition $P'$ with $L'<L$ families, the value $\Phi(Q^4)$ just defined agrees with the value defined at a previous round of the induction. Such a $P'$ exists if and only if $T_\alpha=T_{\alpha+1}$ for some $1\leq \alpha<L$. If $T_{L-1}=T_L$ then the solution $\Phi(Q^4)$ of \eqref{ind_eq_2} coincides with the initial condition $\Phi(Q^4)$ given on $\sS_{\delta,\tilde{P}}$, which agrees with the value on $\sS_{\delta,P'}$ by the induction assumption. If $T_\alpha=T_{\alpha+1}$ for some $\alpha<L-1$ then the set \eqref{DoDsetL} is contained not only in $\sS_{\delta,\tilde{P}}$ but already in $\sS_{\delta,\tilde{P}'}$ with $\tilde{P}'$ the partition obtained from $\tilde{P}$ by merging $S_\alpha$ and $S_{\alpha+1}$. By the uniqueness statements of Lemmas~\ref{lemma_chernoff} and  \ref{finite_propagation_speed_N_part_dirac}, the value of $\Phi(Q^4)$ obtained by solving \eqref{ind_eq_2} agrees with the value $\Phi(Q^4)$ already defined using the partition $\{S_1,\ldots,S_{\alpha-1},S_\alpha \cup S_{\alpha+1},S_{\alpha+2},\ldots,S_L\}$. 

\medskip

We also need to verify that $\Phi$ is smooth on $\sS_{\delta,P}$. Apply Lemma~\ref{lemma_chernoff} to an open neighborhood of the set \eqref{DoDsetL} in $\sS_{\delta,\tilde{P}}$, regarding also $T_1,Q_1,\ldots,T_{L-1},Q_{L-1}$ as variables (while the coefficients $A_i(q)$ in \eqref{def_dirac_type_operator} accompanying derivatives relative to these variables vanish); it follows that the solution of \eqref{ind_eq_2} is smooth in a neighborhood of $Q^4$ in $\sS_{\delta,P}$, provided that $T_L>T_{L-1}$. 
The case $T_L=T_{L-1}$ needs separate treatment because varying $T_L$ leads to a change in the numbering of the families $S_1,\ldots,S_L$. So let us fix a numbering and drop the condition $T_1\leq \ldots \leq T_L$. Let $U$ be an open neighborhood in $\sS_{\delta,P}$ of $Q^4$; we focus on a $Q^4\in\sS_{\delta,P}$ with $T_L=T_{L-1}\geq \max\{T_1,\ldots,T_{L-2}\}$. It is clear that, since the families $S_1,\ldots,S_L$ do not interact, the multi-time equations \eqref{PhiHalpha} for $\alpha=1,\ldots,L$ possess a smooth joint solution in $U$, provided $U$ is sufficiently small, from initial data on $\sS_{\delta,\tilde{P}}$ as in \eqref{tildePdef}, using that $\Phi$ on $\sS_{\delta,\tilde{P}}$ satisfies the multi-time equations by induction assumption. This solution agrees with $\Phi$ for $T_L\geq T_{L-1}$ by \eqref{ind_eq_2}, but also for $T_L\leq T_{L-1}$ because in that case we used to renumber $L\leftrightarrow L-1$ before writing down \eqref{ind_eq_2}. Therefore, $\Phi$ must be smooth in a neighborhood of any $Q^4$. 

\medskip

To see that $\phi$ agrees with $\Phi$ on $\sS_{\delta,P}$, note that $\phi$ and $\Phi$ have the same initial conditions on \eqref{DoDsetL} and are solutions to the same equation \eqref{ind_eq_2}; then apply Lemma~\ref{finite_propagation_speed_N_part_dirac}.

\medskip

Now we want to show that $\Phi$ satisfies the multi-time equations \eqref{PhiHalpha} on $\sS_{\delta,P}$. This can be done by comparing $\Phi$ to the unique solution of the multi-time equations from initial data on a sufficiently small open neighborhood $U$ of \eqref{DoDsetL} in $\sS_{\delta,\tilde{P}}$; that solution exists because the appropriate cone over $U$ (i.e., the set with domain of dependence within $U$) lies entirely in $\sS_{\delta,P}$, so that the families $S_1,\ldots,S_L$ do not interact. An alternative route goes as follows. 

By the construction based on \eqref{ind_eq_2}, $\Phi$ satisfies \eqref{PhiHalpha} for $\alpha=L$ (also for $T_L=T_{L-1}$, as shown in the disucssion of smoothness). Now consider $\alpha<L$. Since, on $\sS_{\delta,P}$,
\begin{equation}
\biggl[ \frac{i\partial}{\partial t_\alpha}-H_{S_\alpha},
\frac{i\partial}{\partial t_L} -H_{S_L} \biggr] =0
\end{equation}
because (locally) different families do not interact, we have that
\begin{equation}
\Bigl(\frac{i\partial}{\partial t_L} -H_{S_L}\Bigr) \Bigl(
\frac{i\partial}{\partial t_\alpha} -H_{S_\alpha}\Bigr) \Phi
= \Bigl(\frac{i\partial}{\partial t_\alpha} -H_{S_\alpha}\Bigr) \Bigl(
\frac{i\partial}{\partial t_L} -H_{S_L}\Bigr) \Phi \,,
\end{equation}
and since
\begin{equation}
 \Bigl( \frac{i\partial}{\partial t_L} -H_{S_L}\Bigr) \Phi =0,
\end{equation}
we have that
\be\label{HLHalphaPhi}
\Bigl(\frac{i\partial}{\partial t_L} -H_{S_L}\Bigr) \Bigl(
\frac{i\partial}{\partial t_\alpha} -H_{S_\alpha}\Bigr) \Phi=0.
\ee
We will show that the function
\begin{equation}
\Phi'_{\alpha}:=\Bigl( \frac{i\partial}{\partial t_\alpha} -H_{S_\alpha}\Bigr)
\Phi
\end{equation}
vanishes identically on $\sS_{\delta,P}$. To this end, we note that, by \eqref{HLHalphaPhi},
$\Phi'_{\alpha}$ satisfies \eqref{ind_eq_2} with initial
datum
\begin{multline}
\Phi'_{\alpha}(T_1,Q_1; \dots; T_{L-1},Q_{L-1};T_{L-1},q_L)=\\
\Bigl(\frac{i\partial}{\partial
t_\alpha}-H_{S_\alpha}\Bigr)\Phi(T_1,Q_1; \dots; T_{L-1},Q_{L-1};T_{L-1},q_L).
\end{multline}
This initial datum lies in $\sS_{\delta,\tilde{P}}$ with $\tilde{P}$ as in \eqref{tildePdef}, and by the induction assumption it vanishes identically. By the linearity of \eqref{PhiHalpha} and Lemma~\ref{finite_propagation_speed_N_part_dirac}, also $\Phi'_\alpha$ vanishes identically. 

This completes the induction step and thus the proof.
\end{proof}

\bigskip

\noindent{\it Acknowledgments.} We thank Detlef D\"urr, Felix Finster, Sheldon Goldstein, Michael Kiessling, and Matthias Lienert for helpful discussions. S.P.\ acknowledges support from Cusanuswerk, from the German--American Fulbright Commission, and from the European Cooperation in Science and Technology (COST action MP1006). R.T.\ acknowledges support from the John Templeton Foundation (grant no.\ 37433) and from the Trustees Research Fellowship Program at Rutgers. 

\end{document}